\documentclass[acmsmall,screen]{acmart}
\AtBeginDocument{%
  }

\setcopyright{rightsretained}
\acmDOI{10.1145/3808257}
\acmYear{2026}
\acmJournal{PACMPL}
\acmVolume{10}
\acmNumber{PLDI}
\acmArticle{179}
\acmMonth{6}
\acmSubmissionID{pldi26main-p28-p}
\received{2025-11-14}
\received[accepted]{2026-04-03}

\citestyle{acmauthoryear}

\usepackage{mathtools}
\mathtoolsset{showonlyrefs=true}
\usepackage{stmaryrd}
\usepackage{mathpartir}
\usepackage{extarrows}
\usepackage{multirow}
\usepackage{tikz,pgfplots}
\usetikzlibrary{cd,automata,positioning}
\tikzset{
	every state/.style={inner sep=2pt, minimum size=0pt},
	every path/.style={thick},
	initial text={},
}
\usepackage{booktabs}
\usepackage[beginLComment=//~, endLComment={}]{algpseudocodex}

\algnewcommand{\Break}{\textbf{break}}

\newif\ifdraft
\draftfalse

\newif\iflong
\longtrue

\iflong
\usepackage[bibliography=common]{apxproof}
\else
\usepackage[bibliography=common,appendix=strip]{apxproof}
\fi

\usepackage[arxiv=arxiv]{macros.apxproof}

\newcommand{\comp}{\mathrel{\circ}}
\newcommand{\Giry}{\mathcal{G}}

\newcommand{\Characteristic}[1]{\chi_{#1}}

\newcommand{\ifexpr}[3]{\mathbf{if}\ #1\ \mathbf{then}\ #2\ \mathbf{else}\ #3} %
\newcommand{\ifstmt}[3]{\mathtt{if}\ #1\ \mathtt{then}\ #2\ \mathtt{else}\ #3} %
\newcommand{\while}[2]{\mathtt{while}\ (#1)\ \{\ #2\ \}}
\newcommand{\assign}[2]{#1 \coloneqq #2}

\newcommand{\nexttime}{\mathbb{X}}
\newcommand{\compl}[1]{#1^c}

\newcommand{\OneToN}[1]{\{ 1, \dots, #1 \}}

\ifdraft
\newcommand{\todo}[1]{\textcolor{red}{[ToDo: #1]}}
\else
\newcommand{\todo}[1]{}
\fi

\AtEndPreamble{%
\theoremstyle{acmdefinition}

\newtheorem{remark}[theorem]{Remark}}

\begin{document}
\title{A Hierarchy of Supermartingales for \texorpdfstring{$\omega$}{ω}-Regular Verification}
\author{Satoshi Kura}
\email{satoshikura@acm.org}
\orcid{0000-0002-3954-8255}
\affiliation{%
  \institution{Waseda University}
  \city{Tokyo}
  \country{Japan}
}

\author{Hiroshi Unno}
\email{hiroshi.unno@acm.org}
\orcid{0000-0002-4225-8195}
\affiliation{%
  \institution{Tohoku University}
  \city{Sendai}
  \country{Japan}
}

\begin{abstract}
	We propose new supermartingale-based certificates for verifying almost sure satisfaction of $\omega$-regular properties: (1) \emph{generalised Streett supermartingales} (GSSMs) and their lexicographic extension (LexGSSMs), (2) \emph{distribution-valued Streett supermartingales} (DVSSMs), and (3) \emph{progress-measure supermartingales} (PMSMs) and their lexicographic extension (LexPMSMs).
	GSSMs, LexGSSMs, and DVSSMs are derived from least-fixed point characterisations of positive recurrence and null recurrence of Markov chains with respect to given Streett conditions; and PMSMs and LexPMSMs are probabilistic extensions of parity progress measures.
	We study the hierarchy among these certificates and existing certificates, namely Streett supermartingales, by comparing the classes of problems that can be verified by each type of certificates.
	Notably, we show that our certificates are strictly more powerful than Streett supermartingales.
	We also prove completeness of GSSMs for positive recurrence and of DVSSMs for null recurrence: DVSSMs are, in theory, the most powerful certificates in the sense that for any Markov chain that almost surely satisfies a given $\omega$-regular property, there exists a DVSSM certifying it.
	We provide a sound and relatively complete algorithm for synthesising LexPMSMs, the second most powerful certificates in the hierarchy.
	We have implemented a prototype tool based on this algorithm, and our experiments show that our tool can successfully synthesise certificates for various examples including those that cannot be certified by existing supermartingales.
\end{abstract}
\begin{CCSXML}
<ccs2012>
   <concept>
       <concept_id>10003752.10003753.10003757</concept_id>
       <concept_desc>Theory of computation~Probabilistic computation</concept_desc>
       <concept_significance>500</concept_significance>
       </concept>
   <concept>
       <concept_id>10003752.10010124.10010138.10010142</concept_id>
       <concept_desc>Theory of computation~Program verification</concept_desc>
       <concept_significance>500</concept_significance>
       </concept>
   <concept>
       <concept_id>10003752.10010061.10010065</concept_id>
       <concept_desc>Theory of computation~Random walks and Markov chains</concept_desc>
       <concept_significance>500</concept_significance>
       </concept>
 </ccs2012>
\end{CCSXML}

\ccsdesc[500]{Theory of computation~Probabilistic computation}
\ccsdesc[500]{Theory of computation~Program verification}
\ccsdesc[500]{Theory of computation~Random walks and Markov chains}

\keywords{probabilistic program verification, supermartingales, \texorpdfstring{$\omega$}{ω}-regular properties, Markov chains}

\maketitle

\section{Introduction}
\label{sec:introduction}

\subsection{Background and Motivation}

Verification of probabilistic programs has been actively studied in recent years.
\emph{Martingale-based methods} are one of the major approaches in this area.
These methods guarantee properties of probabilistic programs by finding a function called a \emph{super-} or \emph{sub-martingale}, which is required to satisfy a certain inequality between its current value and its expected value after one step of execution.
Even when a target program has an infinite state space, martingale-based methods often admit automated synthesis via reduction to constraint solving.
Consequently, various types of martingales have been proposed to verify different types of properties of probabilistic programs, such as almost-sure termination \cite{ChakarovCAV2013,HuangOOPSLA2019,McIverPOPL2018,Kenyon-RobertsLICS2021,AgrawalPOPL2018,ChatterjeeFAC2023,TakisakaCAV2024}, quantitative termination \cite{ChatterjeeCAV2022,ChatterjeePOPL2017,TakisakaTOPLAS2021,UrabeLICS2017}, and cost analysis \cite{KuraTACAS2019,WangPLDI2019}.

Recently, supermartingales have also been studied for verifying \emph{$\omega$-regular properties} of probabilistic programs \cite{AbateCAV2024,AbateCAV2025,HenzingerCAV2025,ChakarovTACAS2016}.
For instance, consider a specification given as a linear temporal logic formula, such as one in LTL, over the sequence of observable events produced by a probabilistic program.
For example, one may ask whether the position $x = 0$ is visited infinitely often (i.e., the LTL formula $G F (x = 0)$ holds) in a one-dimensional fair random walk (which holds almost surely, as is well known).
In such a setting, the probabilistic program induces a Markov chain (or Markov decision process if the program has nondeterminism), which typically has an infinite state space.
The LTL formula can be translated into an $\omega$-automaton, e.g., a deterministic Streett automaton.
By applying the product construction to the Markov chain and the deterministic Streett automaton, taking into account the emitted events during transitions, we obtain a Markov chain equipped with a Streett acceptance condition.
The verification problem then asks whether the Markov chain satisfies the associated Streett condition with probability~1 (\emph{qualitative} verification) or with a probability above a given threshold (\emph{quantitative} verification).
Streett supermartingales~\cite{AbateCAV2024} serve as certificates for such qualitative verification problems.
Moreover, when combined with stochastic invariants \cite{ChatterjeePOPL2017}, they can also be used to support quantitative verification \cite{AbateCAV2025,HenzingerCAV2025}.

Although the above problem setting can be seen as a natural extension of the well-studied $\omega$-regular verification problem for non-probabilistic programs (e.g., \cite{CookPOPL2007,UnnoPOPL2023,ChatterjeeFM2025}) and probabilistic model checking over finite state spaces \cite{VardiSFCS1985} (see also the modern textbook treatment in \cite{Baier2008}), much less is known about the verification of $\omega$-regular properties of probabilistic programs with infinite state spaces.

\paragraph{Limitations of existing work}
To the best of our knowledge, Streett supermartingales \cite{AbateCAV2024} 
are essentially the only known supermartingale-based certificate for qualitative verification of general $\omega$-regular properties.
However, their verification power has not been well understood.
In fact, there exists a simple $\omega$-regular verification problem that cannot be verified by Streett supermartingales.
For example, consider the following simple program, which is not even a probabilistic program:
\begin{equation}
	\while{\mathtt{true}}{\assign{m}{n};\ \while{m > 0}{\assign{m}{m - 1}};\ \assign{n}{n + 1}}
	\label{eq:intro-ssm-example}
\end{equation}
Suppose that we want to verify ``either the inner loop is executed finitely often, or the outer loop is executed infinitely often'', which can naturally be expressed as a Streett acceptance condition.
Although this property clearly holds, no Streett supermartingale can verify it, as we will show in Example~\ref{ex:no-streett-supermartingale}.
This limitation raises the following questions: (1) Is there any supermartingale-based certificate more powerful than Streett supermartingales? (2) Can we characterise the verification power of supermartingales for $\omega$-regular properties?

\begin{figure}
	\begin{tikzpicture}[scale=0.9, transform shape]
		\node (DVGSSM) at (0, 3) {Streett-condition based};
		\node (DVGSSM) at (6, 3) {Parity-condition based};
		\node[draw, align=center] (SSM) at (0, -3.6) {Streett supermartingale \\ \cite{AbateCAV2024}};
		\node[draw, align=center] (GSSM) at (0, -2) {Generalised \\ Streett supermartingale \\ (Definition~\ref{def:generalised-streett-supermartingale})};
		\node[draw, align=center] (DVGSSM) at (0, 1.8) {Distribution-valued \\ Streett supermartingale \\ (Definition~\ref{def:distribution-valued-streett-supermartingale})};
		\node[draw, align=center] (LexGSSM) at (0, 0) {Lexicographic generalised \\ Streett supermartingale \\ (Definition~\ref{def:lexicographic-generalised-streett-supermartingale})};
		\node[draw, align=center] (PPM) at (6, -2) {Progress-measure \\ supermartingale \\ (Definition~\ref{def:progress-measure})};
		\node[draw, align=center] (LexPPM) at (6, 0) {Lexicographic \\ progress-measure \\ supermartingale \\ (Definition~\ref{def:lexpmsm})};
		\node[align=center] (PR) at (-4.9, -2) {Positive \\ recurrence};
		\node[align=center] (NR) at (-5.9, 1.8) {Null recurrence \\ (= $\omega$-regular properties)};
		\draw[dashed] (-3, 2.5) -- (-3, -3.8);
		\path (SSM) -- node[sloped, auto=false, allow upside down] {$\subsetneq$} node[midway, right=1ex, font=\footnotesize] {Ex.~\ref{ex:no-streett-supermartingale}} (GSSM);
		\path (GSSM) -- node[sloped, auto=false, allow upside down] {$\subsetneq$} node[midway, right=1ex, font=\footnotesize] {Ex.~\ref{ex:generalised-streett-supermartingale}} (LexGSSM);
		\path (LexGSSM) -- node[sloped, auto=false, allow upside down] {$\subsetneq$} node[midway, right=1ex, font=\footnotesize] {Ex.~\ref{ex:lexgssm-not-complete}} (DVGSSM);
		\path (PPM) -- node[sloped, auto=false, allow upside down] {$\subseteq$} (LexPPM);
		\path (GSSM) -- node[sloped, auto=false, allow upside down] {$\subseteq$} node[above=1ex, font=\footnotesize] {(Prop.~\ref{prop:lexgssm-to-lexpmsm})} (PPM);
		\path (LexGSSM) -- node[sloped, auto=false, allow upside down] {$=$} node[midway, above=1ex, font=\footnotesize] {Prop.~\ref{prop:lexpmsm-to-lexgssm},~\ref{prop:lexgssm-to-lexpmsm}} (LexPPM);
		\node[fill=white, align=center] at (-3, -2) {{\footnotesize Thm.~\ref{thm:gssm-sound-complete}} \\ $\iff$};
		\node[fill=white, align=center] at (-3, 1.8) {{\footnotesize Cor.~\ref{cor:dvssm-sound-complete}} \\ $\iff$};
	\end{tikzpicture}
	\caption{The hierarchy of supermartingale certificates for almost sure satisfaction of $\omega$-regular properties. The middle column shows supermartingales for Streett conditions, and the right column shows those for parity conditions. Three proper inclusions $\subsetneq$ in the figure are witnessed by Example~\ref{ex:no-streett-supermartingale},~\ref{ex:generalised-streett-supermartingale}, and~\ref{ex:lexgssm-not-complete}.}
	\label{fig:hierarchy-of-supermartingales}
	\Description{The hierarchy of supermartingale certificates for almost sure satisfaction of $\omega$-regular properties. The middle column shows supermartingales for Streett conditions: Streett supermartingales, generalised Streett supermartingales, lexicographic generalised Streett supermartingales, and distribution-valued Streett supermartingales.
	Streett supermartingales are strictly less powerful than generalised Streett supermartingales, which are strictly less powerful than lexicographic generalised Streett supermartingales, which are strictly less powerful than distribution-valued Streett supermartingales.
	The right column shows those for parity conditions: progress-measure supermartingales and lexicographic progress-measure supermartingales.
	Lexicographic progress-measure supermartingales are as powerful as lexicographic generalised Streett supermartingales.
	The power of progress-measure supermartingales is between that of generalised Streett supermartingales and lexicographic generalised Streett supermartingales.
	The left column shows the corresponding classes of verification problems: generalised Streett supermartingales are sound and complete for positive recurrence, and distribution-valued Streett supermartingales are sound and complete for null recurrence.}
\end{figure}

\subsection{Our Contributions}
We propose several new supermartingale-based certificates for qualitative $\omega$-regular verification problems and systematically investigate their verification power (Fig.~\ref{fig:hierarchy-of-supermartingales}).
Here, a \emph{qualitative $\omega$-regular verification problem} asks, given an infinite-state Markov chain and an acceptance condition for an $\omega$-regular property, whether the probability of satisfying the acceptance condition is 1.
We consider two types of acceptance conditions: Streett conditions and parity conditions.
In Fig.~\ref{fig:hierarchy-of-supermartingales}, the middle column shows supermartingales for Streett conditions, while the right column shows those for parity conditions.
Each type of supermartingales induces a class of problems that can be verified by it.
We establish inclusion relations between these classes via translations between different types of supermartingales and show that some of these inclusions are proper by providing examples that separate the classes.

\paragraph{Ideas behind generalised Streett supermartingales}
There is a close connection between Streett conditions and recurrence properties of Markov chains.
A \emph{Streett condition} is specified by (a finite set of) pairs $(A, B)$ where $A$ and $B$ are subsets of states of a Markov chain.
An infinite trace satisfies the Streett condition if either $A$ is visited only finitely often or $B$ is visited infinitely often for each pair $(A, B)$ in the Streett condition.
This condition for a single pair $(A, B)$ is equivalent to saying that at any point in the trace, the number of remaining visits to $A$ before reaching $B$ is finite.
Following terminology of Markov chain theory, we say a Markov chain is \emph{$(A, B)$-null recurrent} if the random variable $\mathrm{step}^{(A, B)}$ that counts the number of visits to $A$ before reaching $B$ is almost surely finite (i.e., $\mathbb{P}[\mathrm{step}^{(A, B)} < \infty] = 1$).
Then, we have that almost sure satisfaction of a Streett condition is equivalent to $(A, B)$-null recurrence.

The notion of \emph{generalised Streett supermartingales} (GSSMs) is obtained by adapting the order-theoretic characterisation of ranking supermartingales \cite{TakisakaTOPLAS2021,UrabeLICS2017} to \emph{$(A, B)$-positive recurrence}, which is a stronger notion than $(A, B)$-null recurrence and defined by $\mathbb{E}[\mathrm{step}^{(A, B)}] < \infty$.
It is straightforward to characterise $\mathbb{E}[\mathrm{step}^{(A, B)}]$ as the least fixed point $\mu K_{\mathbb{E}}$ of a monotone function $K_{\mathbb{E}}$.
Thus, if there exists a prefixed point $r \ge K_{\mathbb{E}} (r)$ such that $r < \infty$, then by Park induction (or the Knaster-Tarski theorem), we have $\mathbb{E}[\mathrm{step}^{(A, B)}] = \mu K_{\mathbb{E}} \le r < \infty$, which implies $(A, B)$-positive recurrence.
We call such an $r$ a \emph{generalised Streett supermartingale}.
Concretely, a GSSM is a non-negative function $r \colon S \to [0, \infty)$ over the state space $S$ of a Markov chain that satisfies the following condition:
\[ \forall x \in A \setminus B,\quad (\nexttime r)(x) \le r(x) - 1, \qquad\qquad\qquad \forall x \notin A \cup B,\quad (\nexttime r)(x) \le r(x) \]
where $\nexttime r$ is the expected value of $r$ after one step.
Intuitively, $r$ decreases by at least 1 in expectation when the current state is in $A \setminus B$, and does not increase in expectation when outside $A \cup B$.
No constraint is imposed when $x \in B$, which is the key difference from Streett supermartingales \cite{AbateCAV2024}.
This order-theoretic argument not only yields soundness but also completeness of GSSMs for $(A, B)$-positive recurrence: if a Markov chain is $(A, B)$-positive recurrent, then there exists a GSSM.
Moreover, the program in~\eqref{eq:intro-ssm-example} admits a GSSM but no Streett supermartingale, showing that GSSMs are strictly more powerful.

A similar order-theoretic argument can also be applied to $(A, B)$-null recurrence, which yields \emph{distribution-valued Streett supermartingales} (DVSSMs).
DVSSMs are complete for null recurrence, and therefore the most powerful among the supermartingales for $\omega$-regular properties in theory.
However, their automated synthesis is difficult, which limits their practical usefulness.

\paragraph{Expressive and automation-friendly supermartingales}
To provide supermartingales that can verify null recurrence and are more amenable to automation, we introduce \emph{lexicographic generalised Streett supermartingales} (LexGSSMs), \emph{progress-measure supermartingales} (PMSMs), and a lexicographic extension of PMSMs (LexPMSMs).
PMSMs are supermartingales for parity conditions and inspired by \emph{parity progress measures} \cite{JurdzinskiSTACS2000}, which are used in algorithms for solving parity games.
We adapt this notion to the probabilistic systems by identifying the essence of the correctness proof for parity games and parity graphs, and by combining it with ideas from lexicographic ranking supermartingales \cite{AgrawalPOPL2018}.
After proving the soundness of PMSMs, we extend them to LexPMSMs by considering nested lexicographic orderings.
LexPMSMs are more expressive and possess better theoretical properties: we show that LexPMSMs are equivalent to LexGSSMs, which are a lexicographic extension of GSSMs.

Lexicographic ranking supermartingales have proven effective in automated verification \cite{AgrawalPOPL2018,ChatterjeeFAC2023,TakisakaCAV2024}, and our LexPMSMs inherit this advantage.
We extend the algorithm in \cite{AgrawalPOPL2018} to synthesise LexPMSMs and implement it using an off-the-shelf constraint solver.
Our experiments show that our prototype implementation can successfully synthesise LexPMSMs for examples that cannot be handled by Streett supermartingales.

Our contributions are summarised as follows:
\begin{itemize}
	\item We introduce several new supermartingale-based certificates for almost-sure satisfaction of $\omega$-regular properties, including \emph{generalised Streett supermartingales} (GSSMs) and their lexicographic extension (LexGSSMs); \emph{distribution-valued Streett supermartingales} (DVSSMs); and \emph{progress-measure supermartingales} (PMSMs) and their lexicographic extension (LexPMSMs).
	We show that these are strictly more powerful than the existing approach, namely Streett supermartingales.
	\item We systematically study the verification power of these supermartingales and establish a hierarchy among them (Fig.~\ref{fig:hierarchy-of-supermartingales}).
	We prove inclusion relations between the corresponding classes of verification problems, provide translations witnessing these inclusions, and construct separating examples showing that some inclusions are strict.
	We also prove completeness of GSSMs for positive recurrence and of DVSSMs for null recurrence.
	\item We develop a sound and relatively complete algorithm for synthesising LexPMSMs, implement a prototype tool based on it, and demonstrate its effectiveness through experiments.
\end{itemize}

Omitted proofs can be found in
\iflong
the appendix.
\else
the long version \cite{arxiv}.
\fi

\section{Preliminaries}
\label{sec:preliminaries}

\subsection{Markov Chains}\label{sec:markov-chains}
As a model of probabilistic programs, we consider Markov chains whose state space may be infinite.

Let $A = (A, \Sigma_A)$ and $B = (B, \Sigma_B)$ be measurable spaces where $\Sigma_A$ and $\Sigma_B$ are $\sigma$-algebras on $A$ and $B$, respectively.
A \emph{Markov kernel} from $A$ to $B$ is a function $F \colon A \times \Sigma_B \to [0, 1]$ such that (a) for each $x \in A$, $F(x, {-})$ is a probability measure on $B$ and (b) for each measurable set $E \in \Sigma_B$, the function $F(-, E) \colon A \to [0, 1]$ is measurable.
Markov kernels can be equivalently defined as measurable functions $F \colon A \to \Giry B$ where $\Giry B$ is the set of probability measures on $B$ equipped with the smallest $\sigma$-algebra that makes the evaluation map $\mu \mapsto \mu(E)$ measurable for each $E \in \Sigma_B$.
The construction of the measurable space $\Giry B$ is known as the \emph{Giry monad}.
A \emph{Markov chain} over a measurable space $S$ is a Markov kernel $F \colon S \to \Giry S$.
We do not assume that the state space $S$ is finite.
We write the Dirac distribution that assigns probability $1$ to a point $s \in S$ as $\delta_s \in \Giry S$.

Let $g \colon B \to [0, \infty]$ be a measurable function and $\mu$ be a measure on $B$ where $[0, \infty]$ denotes the set of non-negative extended real numbers equipped with the Borel $\sigma$-algebra.
We write the integral of $g$ with respect to $\mu$ as $\int g \, \mathrm{d} \mu$ or $\int g(x) \, \mathrm{d} \mu(x)$ where $x$ is a variable ranging over $B$.
If $F \colon A \to \Giry B$ is a Markov kernel and $x \in A$, then the integral with respect to $F(x)$ is denoted as $\int g \, \mathrm{d} F_x$ or $\int g(y) \, \mathrm{d} F_x(y)$ (note the subscript).
By definition of Markov kernels, it follows that $x \mapsto \int g \, \mathrm{d} F_x$ is a measurable function.
If $F \colon S \to \Giry S$ is a Markov chain, then the \emph{next-time operator} $\nexttime_F$ is defined as the mapping from a measurable function $g \colon S \to [0, \infty]$ to a measurable function $\nexttime_F g \colon S \to [0, \infty]$ where $(\nexttime_F g)(s) = \int g \, \mathrm{d} F_s$.
The subscript of $\nexttime_F$ is omitted when $F$ is clear from the context.

In the context of probabilistic program verification, one typical way to obtain Markov chains is through \emph{probabilistic control-flow graphs} (\emph{pCFGs}) \cite{ChatterjeeCAV2016,AgrawalPOPL2018}, which are a common formalism for representing imperative probabilistic programs.
Since our theoretical development is based on Markov chains, we do not consider nondeterminism in pCFGs.
Here, instead of giving a formal definition of pCFGs, we illustrate how imperative probabilistic programs are modelled as pCFGs and how pCFGs induce Markov chains and its next-time operators.
\begin{example}\label{ex:pCFG}
	Consider the following probabilistic program.
	\[ \while{x > 0}{\ifstmt{\mathtt{prob}(0.5)}{x \coloneqq x - 1}{x \coloneqq x + \mathtt{unif}(0, 1)}} \]
	Here, $\mathtt{prob}(p)$ returns $\mathtt{true}$ with probability $p$ and $\mathtt{false}$ with probability $1 - p$, and $\mathtt{unif}(a, b)$ is a uniform sampling from the interval $[a, b]$.
	This program can be represented as the following pCFG, which has five locations $L = \{ l_0, l_1, l_2, l_3, l_4 \}$ and one real-valued program variable $x$.
	\begin{center}
		\begin{tikzpicture}[scale=0.9, every node/.style={transform shape}]
			\node[state, initial] (l0) at (0, 0) {$l_0$};
			\node[state] (l1) at (0, -1.2) {$l_1$};
			\node[state] (l2) at (-3, -1.2) {$l_2$};
			\node[state] (l3) at (3, -1.2) {$l_3$};
			\node[state] (l4) at (3, 0) {$l_4$};

			\draw[->] (l0) -- node[right, near end] {$[x > 0]$} (l1);
			\draw[->] (l0) -- node[above] {$[x \le 0]$} (l4);
			\draw[->] (l1) -- node[below, near start] {$0.5$} (l2);
			\draw[->] (l1) -- node[below, near start] {$0.5$} (l3);
			\draw[->] (l2) -- node[above left] {$\assign{x}{x - 1}$} (l0);
			\draw[->] (l3) -- node[above right, near start] {$\assign{x}{x + \mathtt{unif}(0, 1)}$} (l0);
			\draw[->] (l4) edge[loop right] (l4);
		\end{tikzpicture}
	\end{center}
	The location $l_0$ corresponds to the conditional branching for the while loop, $l_1$ is the probabilistic branching for $\mathtt{prob}(0.5)$, $l_2$ is the deterministic assignment $x \coloneqq x - 1$, $l_3$ is the random assignment $x \coloneqq x + \mathtt{unif}(0, 1)$, and $l_4$ is the terminal location with a self-loop.
	This pCFG can be interpreted as an infinite-state Markov chain with the state space $S = L \times \mathbb{R}$ where $\mathbb{R}$ represents valuations of the program variable $x$.
	Specifically, the Markov chain $F \colon S \to \Giry S$ is given as follows.
	\[ F(l_0, x)\ =\ \begin{cases}
		\delta_{(l_1, x)} & \text{if } x > 0 \\
		\delta_{(l_4, x)} & \text{if } x \le 0
	\end{cases}
	\qquad
	\begin{aligned}
		F(l_1, x)\ &=\ 0.5 \cdot \delta_{(l_2, x)} + 0.5 \cdot \delta_{(l_3, x)} \\
		F(l_3, x)\ &=\ \lambda E. \mu (E \cap [x, x + 1])
	\end{aligned}
	\qquad
	\begin{aligned}
		F(l_2, x)\ &=\ \delta_{(l_0, x - 1)} \\
		F(l_4, x)\ &=\ \delta_{(l_4, x)}
	\end{aligned}
	\]
	Here, $\mu$ is the Lebesgue measure on $\mathbb{R}$ and $\delta$ is the Dirac distribution.
	The corresponding next-time operator $\nexttime$ is given as follows for any measurable function $r \colon L \times \mathbb{R} \to [0, \infty]$.
	\begin{align}
		(\nexttime r)(l_0, x)\ &=\ \begin{cases}
			r(l_1, x) & \text{if } x > 0 \\
			r(l_4, x) & \text{if } x \le 0
		\end{cases} &
		(\nexttime r)(l_3, x)\ &=\ \int_0^1 r(l_0, x + y) \, \mathrm{d} y \\
		(\nexttime r)(l_1, x)\ &=\ 0.5 \cdot r(l_2, x) + 0.5 \cdot r(l_3, x) &
		(\nexttime r)(l_2, x)\ &=\ r(l_0, x - 1) \quad (\nexttime r)(l_4, x)\ =\ r(l_4, x) \tag*{\qed}
	\end{align}
\end{example}

Markov chains induce probability measures on infinite traces, which can be found in standard textbooks, e.g., \cite{Baier2008}.
Let $F \colon S \to \Giry S$ be a Markov chain.
Then, the set $S^{\omega}$ of infinite traces over $S$ is a measurable space with the $\sigma$-algebra $\Sigma_{S^{\omega}}$ generated by the cylinder sets.
Given an initial state $s_0 \in S$, a probability measure on $S^{\omega}$ is defined by the Kolmogorov extension theorem.
That is, $F \colon S \to \Giry S$ can be extended to $F_{\omega} \colon S \to \Giry S^{\omega}$ as follows.

\begin{lemma}
	Given a Markov chain $F \colon S \to \Giry S$, we have a unique Markov kernel $F_{\omega} \colon S \to \Giry S^{\omega}$ such that for each measurable function $g \colon S^n \to [0, \infty]$ and $s_0 \in S$, we have the following equation.
	\[ \int g(s_1, \dots, s_n) \, \mathrm{d} (F_{\omega})_{s_0}(s_1 s_2 \dots) = \int \dots \int g(s_1, \dots, s_n) \, \mathrm{d} F_{s_{n - 1}}(s_n) \dots \, \mathrm{d} F_{s_1}(s_2)\, \mathrm{d} F_{s_0}(s_1) \tag*{\qed} \]
\end{lemma}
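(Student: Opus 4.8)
The plan is to recognise this as an instance of the Ionescu--Tulcea extension theorem: since $S$ is an arbitrary measurable space, with no topological regularity available, one should avoid the Kolmogorov extension theorem and instead exploit that a sequence of Markov kernels always admits a consistent extension to the infinite product. Concretely, I would first build the finite-horizon kernels $F_n : S \to \Giry S^n$ by iterated integration, setting $(F_1)_{s_0} = F_{s_0}$ and, recursively, $(F_{n+1})_{s_0}(E) = \int \bigl( \int \Characteristic{E}(s_1, \dots, s_{n+1}) \, \mathrm{d} F_{s_n}(s_{n+1}) \bigr) \, \mathrm{d} (F_n)_{s_0}(s_1 \cdots s_n)$ for $E \in \Sigma_{S^{n+1}}$, checking inductively that the inner integral is a measurable function of $(s_1, \dots, s_n)$ so that $F_{n+1}$ is again a Markov kernel; unfolding this recursion reproduces exactly the iterated-integral expression in the statement when $g = \Characteristic{E}$. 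Since each $F_s$ is a probability measure, integrating out the last coordinate shows the $F_n$ are consistent: the coordinate projection $S^{n+1} \to S^n$ pushes $(F_{n+1})_{s_0}$ forward to $(F_n)_{s_0}$. Hence for each fixed $s_0$ there is a well-defined finitely additive set function $(F_\omega)_{s_0}$ on the algebra $\mathcal{A}$ of cylinder sets of $S^\omega$, given by $(F_\omega)_{s_0}(C) = (F_n)_{s_0}(E)$ whenever $C$ has base $E \in \Sigma_{S^n}$.

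The crux, and the step I expect to be the main obstacle, is showing that $(F_\omega)_{s_0}$ is countably additive on $\mathcal{A}$, equivalently continuous at $\emptyset$. Suppose $C_1 \supseteq C_2 \supseteq \cdots$ are cylinders with $(F_\omega)_{s_0}(C_k) \geq \varepsilon > 0$ for all $k$; I would derive $\bigcap_k C_k \neq \emptyset$, contradicting $\bigcap_k C_k = \emptyset$. After padding, assume $C_k$ has base $E_k \subseteq S^k$. For $j \leq k$ let $h_k^{(j)} : S^j \to [0, 1]$ be the partial iterated integral of $\Characteristic{E_k}$ over coordinates $j+1, \dots, k$, so that $h_k^{(k)} = \Characteristic{E_k}$, $h_k^{(0)} = (F_\omega)_{s_0}(C_k)$, and $h_k^{(j-1)}(s_1, \dots, s_{j-1}) = \int h_k^{(j)}(s_1, \dots, s_{j-1}, s) \, \mathrm{d} F_{s_{j-1}}(s)$ (with $s_0$ the given initial state when $j = 1$). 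Monotonicity of $(C_k)_k$ gives $h_{k+1}^{(j)} \leq h_k^{(j)}$ pointwise, so $h^{(j)} := \lim_k h_k^{(j)}$ exists, the analogous integral relation passes to the limit by dominated convergence (everything is bounded by $1$), and $h^{(0)} = \lim_k (F_\omega)_{s_0}(C_k) \geq \varepsilon$. Because $F_{s_{j-1}^*}$ is a probability measure, from $h^{(j-1)}(s_1^*, \dots, s_{j-1}^*) = \int h^{(j)}(s_1^*, \dots, s_{j-1}^*, s) \, \mathrm{d} F_{s_{j-1}^*}(s) \geq \varepsilon$ one can pick $s_j^*$ with $h^{(j)}(s_1^*, \dots, s_j^*) \geq \varepsilon$; doing this inductively and using $\Characteristic{E_j} = h_j^{(j)} \geq h^{(j)}$ yields $(s_1^*, \dots, s_j^*) \in E_j$ for every $j$, hence $(s_1^* s_2^* \cdots) \in \bigcap_k C_k$.

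Once countable additivity is in hand, Carath\'eodory's extension theorem produces a probability measure $(F_\omega)_{s_0}$ on $\sigma(\mathcal{A}) = \Sigma_{S^\omega}$. To see that $s_0 \mapsto (F_\omega)_{s_0}$ is a Markov kernel, observe that $s_0 \mapsto (F_\omega)_{s_0}(C) = (F_n)_{s_0}(E)$ is measurable for cylinders, and the family of sets on which $s_0 \mapsto (F_\omega)_{s_0}(\cdot)$ is measurable is a Dynkin system containing the $\pi$-system of cylinders, hence all of $\Sigma_{S^\omega}$. The displayed integral identity holds for $g = \Characteristic{E}$ by construction, extends to nonnegative simple functions by linearity, and to arbitrary measurable $g : S^n \to [0, \infty]$ by the monotone convergence theorem applied successively in the $n$ nested integrals. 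Uniqueness is immediate: two kernels satisfying the identity induce, for each $s_0$, probability measures on $S^\omega$ agreeing on the $\pi$-system of cylinders, hence on $\Sigma_{S^\omega}$.
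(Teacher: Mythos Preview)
Your proof is correct and, in fact, more careful than the paper's own treatment. The paper does not give a proof of this lemma at all: it simply states that the measure on $S^\omega$ ``is defined by the Kolmogorov extension theorem'' and cites a standard textbook, placing a \qed{} at the end of the statement. Your observation that Kolmogorov's theorem, as usually stated, requires topological regularity of $S$ (e.g., that $S$ be standard Borel or Polish), and that the Ionescu--Tulcea theorem is the right tool for arbitrary measurable spaces equipped with transition kernels, is well taken. The argument you sketch---building consistent finite-horizon kernels, proving continuity at $\emptyset$ by the standard ``pick a point coordinate by coordinate'' argument that exploits only that each $F_s$ is a probability measure, and then invoking Carath\'eodory plus a Dynkin-system argument for measurability in $s_0$---is exactly the proof of Ionescu--Tulcea, and it goes through as you describe. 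So your route is genuinely different from the paper's (which outsources the result) and buys you applicability to state spaces $S$ with no assumed topology, matching the generality in which the paper actually works.
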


For simplicity of notation, we write $\mathbb{P}_{s_0}[E] = F_{\omega}(s_0)(E)$ for the probability of a measurable set $E \subseteq S^{\omega}$ and $\mathbb{E}_{s_0}[g] = \int g \,\mathrm{d} (F_{\omega})_{s_0}$ for the expected value of a measurable function $g \colon S^{\omega} \to [0, \infty]$ when a Markov chain $F$ is clear from the context.

\subsection{\texorpdfstring{$\omega$}{ω}-Regular Verification for Markov Chains}\label{sec:omega-regular-verification}

In what follows, we formalise the target problem of this paper, which is the problem of verifying whether a given Markov chain satisfies an $\omega$-regular property almost surely.
It is well-known that $\omega$-regular languages are characterised by various kinds of $\omega$-automata \cite{Gradel2002}, where an $\omega$-automaton is defined as a finite-state automaton that runs on infinite words and uses an \emph{acceptance condition} to determine whether an infinite word is accepted or not.
We consider two of such acceptance conditions, \emph{Streett conditions} and \emph{parity conditions}, and adapt them to possibly infinite-state Markov chains.

\begin{definition}[Streett condition]
	A \emph{Streett pair} over a set $S$ is a tuple $(A, B)$ of subsets $A, B \subseteq S$.
	A \emph{Streett condition} is a finite set $\{ (A_1, B_1), \dots, (A_n, B_n) \}$ of Streett pairs.
	A trace $s_0 s_1 \dots \in S^{\omega}$ satisfies the Streett condition if for any $i$, either $A_i$ is visited finitely often or $B_i$ is visited infinitely often.
	\begin{align*}
		&\mathbf{Streett}(A, B) &&\coloneqq\quad \{ s_0 s_1 \dots \in S^{\omega} \mid \# \{ i \mid s_i \in A \} < \infty \lor \# \{ i \mid s_i \in B \} = \infty \} \\
		&\mathbf{Streett}\big((A_1, B_1), \dots, (A_n, B_n)\big) &&\coloneqq\quad \bigcap_{i=1}^n \mathbf{Streett}(A_i, B_i)
	\end{align*}
	If $S$ is a measurable space, then a Streett condition is said to be \emph{measurable} if each $A_i$ and $B_i$ is a measurable subset of $S$.
	We call a pair of a Markov chain $F \colon S \to \Giry S$ and a measurable Streett condition over $S$ a \emph{Streett Markov chain}.
\end{definition}

\begin{example}
	Termination of a program can be expressed as a Streett condition.
	For example, consider the program in Example~\ref{ex:pCFG}.
	In this case, termination is equivalent to saying that the location $l_4$ is visited in finite steps and thus can be expressed as the Streett pair $(\{ l_0, l_1, l_2, l_3 \} \times \mathbb{R}, \emptyset)$.
	Recurrence (i.e., visiting a certain set of states infinitely often) can also be expressed as a Streett condition.
	The property of visiting the location $l_0$ in Example~\ref{ex:pCFG} infinitely often can be expressed as the Streett pair $(\{ l_1, l_2, l_3, l_4 \} \times \mathbb{R}, \{ l_0 \} \times \mathbb{R})$.
	\qed
\end{example}

\begin{definition}[parity condition]\label{def:parity-condition}
	A \emph{parity condition} over a set $S$ is specified by a \emph{priority function} $p \colon S \to \OneToN{d}$ that assigns a priority (a natural number between $1$ and $d$) to each element of $S$ where $d$ is some positive integer.
	An infinite sequence $s_0 s_1 \dots \in S^{\omega}$ satisfies the parity condition $p$ if the minimum priority that occurs infinitely often in the sequence is even.
	\[ \mathbf{Parity}(p) \coloneqq \{ s \in S^{\omega} \mid \min \mathbf{Inf}_p(s) \text{ is even} \} \quad\text{where}\quad \mathbf{Inf}_p(s_0 s_1 \dots) \coloneqq \{ q \mid \# \{ i \mid p(s_i) = q \} = \infty \} \]
	We call a pair of a Markov chain $F \colon S \to \Giry S$ and a measurable priority function $p \colon S \to \OneToN{d}$ a \emph{parity Markov chain}.
	Here, we consider the discrete $\sigma$-algebra on $\OneToN{d}$, which is the powerset of $\OneToN{d}$.
\end{definition}

Translating parity conditions to Streett conditions is easy.
\begin{lemma}\label{lem:parity-streett}
	Let $p \colon S \to \{ 1, \dots, 2 d \}$ be a priority function.
	Then, the parity condition for $p$ is equivalent to the Streett condition $\{ (S_{\le 2 i - 1}, S_{\le 2 i - 2}) \mid i = 1, \dots, d \}$ where $S_{\le k} \coloneqq \{ x \in S \mid p(x) \le k \}$.
	\[ \mathbf{Parity}(p) \quad=\quad \mathbf{Streett}\big( (S_{\le 1}, S_{\le 0}), (S_{\le 3}, S_{\le 2}), \dots, (S_{\le 2 d - 1}, S_{\le 2 d - 2}) \big) \tag*{\qed} \]
\end{lemma}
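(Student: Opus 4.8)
The plan is to reduce membership in both sides to a single condition on the quantity $m \coloneqq \min \mathbf{Inf}_p(s)$, the least priority occurring infinitely often along a trace $s = s_0 s_1 \dots \in S^{\omega}$, and then check that the two resulting conditions coincide. First I would observe that $\mathbf{Inf}_p(s)$ is never empty: the priority function takes only the finitely many values $1, \dots, 2d$, so by the pigeonhole principle at least one priority occurs infinitely often in the infinite trace $s$; hence $m$ is well defined and $m \in \{1, \dots, 2d\}$. By Definition~\ref{def:parity-condition}, $s \in \mathbf{Parity}(p)$ if and only if $m$ is even.

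Next I would unfold the Streett side pair by pair. Fix $i \in \{1, \dots, d\}$. Since $\#\{j \mid s_j \in S_{\le 2i-1}\} = \sum_{q=1}^{2i-1} \#\{j \mid p(s_j) = q\}$ is a finite sum of extended natural numbers, it is finite exactly when each summand is, i.e.\ when no priority $q \le 2i-1$ lies in $\mathbf{Inf}_p(s)$, i.e.\ when $m \ge 2i$. Symmetrically, $\#\{j \mid s_j \in S_{\le 2i-2}\} = \infty$ exactly when some priority $q \le 2i-2$ lies in $\mathbf{Inf}_p(s)$, i.e.\ when $m \le 2i-2$. Therefore $s \in \mathbf{Streett}(S_{\le 2i-1}, S_{\le 2i-2})$ if and only if $m \ge 2i$ or $m \le 2i-2$, which is precisely $m \ne 2i-1$.

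Finally, intersecting over $i = 1, \dots, d$, the trace $s$ belongs to $\mathbf{Streett}\big((S_{\le 1}, S_{\le 0}), \dots, (S_{\le 2d-1}, S_{\le 2d-2})\big)$ if and only if $m \ne 2i-1$ for every $i$, i.e.\ $m \notin \{1, 3, \dots, 2d-1\}$; since $m \in \{1, \dots, 2d\}$, this is equivalent to $m$ being even, hence to $s \in \mathbf{Parity}(p)$. This gives the claimed equality of subsets of $S^{\omega}$. I do not anticipate a genuine obstacle here; the only step requiring mild care is the bookkeeping that translates ``$S_{\le k}$ is visited finitely/infinitely often'' into statements about which priorities lie in $\mathbf{Inf}_p(s)$, together with the observation that $\mathbf{Inf}_p(s) \ne \emptyset$ so that taking $\min$ is meaningful (and with the degenerate case $S_{\le 0} = \emptyset$ handled uniformly by the same formula).
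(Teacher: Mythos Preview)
Your argument is correct: reducing both sides to a condition on $m = \min \mathbf{Inf}_p(s)$, checking that each Streett pair $(S_{\le 2i-1}, S_{\le 2i-2})$ amounts to $m \ne 2i-1$, and then intersecting, is exactly the natural proof. The paper does not give a proof of this lemma at all (it is stated with a trailing \qed\ and treated as a standard fact), so there is nothing to compare against; your write-up simply supplies the omitted details, including the care that $\mathbf{Inf}_p(s) \ne \emptyset$ and the degenerate case $S_{\le 0} = \emptyset$.
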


Conversely, translating a Streett condition into a parity condition is computationally expensive \cite{Gradel2002}.
However, translating a Streett condition into a \emph{conjunction} of parity conditions is straightforward.

\begin{lemma}\label{lem:streett-pair-parity}
	Any Streett pair $(A, B)$ is equivalent to the parity condition $p \colon S \to \{ 1, 2, 3, 4 \}$ defined by $p(x) = 2$ if $x \in B$, $p(x) = 3$ if $x \in A \setminus B$, and $p(x) = 4$ otherwise.
	\qed
\end{lemma}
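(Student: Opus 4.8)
The plan is to prove the set equality $\mathbf{Streett}(A,B) = \mathbf{Parity}(p)$ by a direct case analysis on an arbitrary trace $s = s_0 s_1 \dots \in S^{\omega}$, tracking how often each of the three relevant regions of $S$ is visited. First I would record two structural facts about $p$. Since $p$ never takes the value $1$, we have $\mathbf{Inf}_p(s) \subseteq \{2, 3, 4\}$ for every trace $s$; and since $s$ is infinite while the codomain of $p$ is finite, the pigeonhole principle gives $\mathbf{Inf}_p(s) \neq \emptyset$, so $\min \mathbf{Inf}_p(s)$ is well defined and lies in $\{2,3,4\}$. Unfolding the definition of $p$, we also get the dictionary: $2 \in \mathbf{Inf}_p(s)$ iff $B$ is visited infinitely often; $3 \in \mathbf{Inf}_p(s)$ iff $A \setminus B$ is visited infinitely often; and $4 \in \mathbf{Inf}_p(s)$ iff $\compl{(A \cup B)}$ is visited infinitely often.

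Next I would isolate the one bookkeeping observation that drives the argument: if $B$ is visited only finitely often along $s$, then $A$ is visited infinitely often if and only if $A \setminus B$ is. Indeed, the positions with $s_i \in A$ partition into those with $s_i \in A \cap B$ — a subset of the finitely many positions with $s_i \in B$ — and those with $s_i \in A \setminus B$, so the two counts differ by a finite number. Symmetrically, if both $A$ and $B$ are visited only finitely often then $A \cup B$ is, and hence, as $s$ is infinite, $\compl{(A \cup B)}$ is visited infinitely often, i.e. $4 \in \mathbf{Inf}_p(s)$.

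With this, the case split is routine. If $B$ is visited infinitely often, then $s \in \mathbf{Streett}(A,B)$ by definition, and $2 \in \mathbf{Inf}_p(s)$ forces $\min \mathbf{Inf}_p(s) = 2$, so $s \in \mathbf{Parity}(p)$. Otherwise $B$ is visited finitely often, so $2 \notin \mathbf{Inf}_p(s)$ and $\mathbf{Inf}_p(s) \subseteq \{3,4\}$. In the sub-case where $A$ is also visited finitely often, $s \in \mathbf{Streett}(A,B)$ holds, and by the bookkeeping observation $3 \notin \mathbf{Inf}_p(s)$ while $4 \in \mathbf{Inf}_p(s)$, so $\min \mathbf{Inf}_p(s) = 4$ and $s \in \mathbf{Parity}(p)$. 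In the remaining sub-case $A$ is visited infinitely often and $B$ only finitely often, so $s \notin \mathbf{Streett}(A,B)$; here $A \setminus B$ is visited infinitely often, so $3 \in \mathbf{Inf}_p(s)$ and $\min \mathbf{Inf}_p(s) = 3$ is odd, giving $s \notin \mathbf{Parity}(p)$. In every case the two memberships agree, which is the claimed equality.

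I do not expect a real obstacle; the proof is elementary. The only points needing care are the two easily-overlooked ones used above — that $p$ never outputs $1$, so the smallest priority seen infinitely often is among $\{2,3,4\}$, and that an infinite trace over a finite priority set must repeat some priority infinitely often, so $\mathbf{Parity}(p)$ is genuinely well posed. Alternatively, one could derive the result from Lemma~\ref{lem:parity-streett} with $d = 2$: there it yields $\mathbf{Parity}(p) = \mathbf{Streett}\big((S_{\le 1}, S_{\le 0}), (S_{\le 3}, S_{\le 2})\big) = \mathbf{Streett}(A \cup B, B)$, since $S_{\le 1} = S_{\le 0} = \emptyset$, $S_{\le 3} = A \cup B$, and $S_{\le 2} = B$; and $\mathbf{Streett}(A \cup B, B) = \mathbf{Streett}(A,B)$ by a similar bookkeeping observation. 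I would likely keep the direct argument as the main proof since it is self-contained.
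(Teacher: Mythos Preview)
Your argument is correct and carefully handles the two small subtleties (non-emptiness of $\mathbf{Inf}_p(s)$ and the fact that $1$ is never attained). The paper itself states this lemma without proof, so there is no approach to compare against; your direct case analysis is exactly the routine verification the \qed is meant to signal, and your alternative route via Lemma~\ref{lem:parity-streett} is also valid.
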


\paragraph{Verification Problems}
We consider the following verification problem in this paper.
Suppose that we have a parity Markov chain $F \colon S \to \Giry S$ (or a Streett Markov chain).
A \emph{qualitative verification problem} asks whether the probability of satisfying the parity condition $p$ is $1$ (or $0$) for \emph{any} initial state $s_0 \in S$.
\[ \mathbb{P}_{s_0}[\mathbf{Parity}(p)] = 1 \quad(\text{or}\quad \mathbb{P}_{s_0}[\mathbf{Parity}(p)] = 0) \qquad \text{for any $s_0 \in S$} \]
On the other hand, a \emph{quantitative verification problem} asks whether the probability of satisfying the parity condition is lower bounded by $l \colon S \to [0, 1]$ (or upper bounded by $u \colon S \to [0, 1]$).
\[ l(s_0) \le \mathbb{P}_{s_0}[\mathbf{Parity}(p)] \quad(\text{or}\quad \mathbb{P}_{s_0}[\mathbf{Parity}(p)] \le u(s_0)) \qquad \text{for any $s_0 \in S$} \]
Qualitative verification is a special case of quantitative verification where $l = 1$ and $u = 0$.
Note that lower-bound problems and upper-bound problems are inter-reducible by considering the complement of the acceptance condition.

In this paper, we focus on qualitative verification problems for $\mathbb{P}_{s_0}[\mathbf{Parity}(p)] = 1$.
However, combining our supermartingale-based certificates with existing techniques for quantitative verification \cite{AbateCAV2025,HenzingerCAV2025} is straightforward because this is achieved by combination with a certificate for stochastic invariants \cite{ChatterjeePOPL2017,ChatterjeeCAV2022,TakisakaTOPLAS2021}.

In practice, we often want to restrict our attention to the states in an invariant $I \subseteq S$ of a Markov chain $F \colon S \to \Giry S$ and ask whether the parity condition is satisfied almost surely from any initial state in the invariant $I$.
This situation can be reduced to the above qualitative verification problem by considering the sub-Markov chain $F|_I \colon I \to \Giry I$ induced by the invariant.

\subsection{Streett Supermartingales}

We review \emph{Streett supermartingales} \cite{AbateCAV2024}, which are a type of supermartingale-based certificate for almost sure satisfaction of Streett conditions.
We focus on the case of a single Streett pair $(A, B)$, since Streett supermartingales for a general Streett condition $\{ (A_1, B_1), \dots, (A_n, B_n) \}$ are defined simply as tuples $(r_1, \dots, r_n)$ of Streett supermartingales $r_i$ for each Streett pair $(A_i, B_i)$.

\begin{definition}\label{def:streett-ranking-supermartingale}
	A \emph{Streett supermartingale} (or \emph{SSM} for short) for a Streett Markov chain $F \colon S \to \Giry S$ with a Streett pair $(A, B)$ is a measurable function $r \colon S \to [0, \infty)$ such that for each $x \in S$,
	\begin{equation}
		(\nexttime r)(x) \quad\le\quad r(x) - \epsilon \Characteristic{A \setminus B}(x) + M \Characteristic{B}(x)
		\label{eq:streett-ranking-supermartingale}
	\end{equation}
	where $\Characteristic{B}$ is the indicator function of $B$ and $\epsilon, M > 0$ are constants.
\end{definition}
The condition~\eqref{eq:streett-ranking-supermartingale} says that if $x \in A \setminus B$, then $r$ must decrease on average by at least $\epsilon$; if $x \in B$, then $r$ may increase on average by at most $M$; and otherwise, $r$ is non-increasing on average.
Note that we can assume $\epsilon = 1$ without loss of generality by multiplying $1 / \epsilon$ to $r$.
The existence of Streett supermartingales ensures that the Streett condition is satisfied almost surely.
This is intuitively because if $A$ is visited infinitely often but $B$ is visited finitely often with positive probability, then the non-negativity of $r$ must be violated by the ranking condition~\eqref{eq:streett-ranking-supermartingale}.

\begin{proposition}
	If there exists a Streett supermartingale for a Streett Markov chain, then the Streett condition is satisfied almost surely from any initial state \cite[Theorem~2]{AbateCAV2024}.
	\qed
\end{proposition}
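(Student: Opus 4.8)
The plan is to prove the contrapositive-flavoured statement that the complement of $\mathbf{Streett}(A,B)$ has probability zero from every initial state, via a compensated-supermartingale argument combined with a decomposition that sidesteps the fact that ``the last visit to $B$'' is not a stopping time. Note that the complement of $\mathbf{Streett}(A,B)$ is exactly the event $\mathbf{Bad}$ that $A$ is visited infinitely often while $B$ is visited only finitely often. First I would reduce the global claim to a local one: since $\{B \text{ visited finitely often}\} = \bigcup_{n\ge 0} G_n$ with $G_n \coloneqq \{ s_0 s_1 \dots \mid s_k \notin B \text{ for all } k \ge n \}$, and since visiting $A$ infinitely often is unaffected by deleting a finite prefix, we get $\mathbf{Bad} = \bigcup_n \big(\{A \text{ i.o.}\} \cap G_n\big)$, where on $G_n$ the event $\{A\text{ i.o.}\}$ coincides with $\{\#\{j \ge n : s_j \in A\} = \infty\}$. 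The latter intersection depends only on the suffix $s_n s_{n+1}\dots$, so the Markov property at time $n$ gives $\mathbb{P}_{s_0}[\{A\text{ i.o.}\}\cap G_n] = \mathbb{E}_{s_0}[\psi(s_n)]$ with $\psi(s) \coloneqq \mathbb{P}_s[\{A\text{ i.o.}\} \cap \{s_k \notin B \text{ for all }k\}]$. By countable subadditivity it therefore suffices to show $\psi(s) = 0$ for every $s \notin B$ (the case $s \in B$ being trivial since then $G_n \cap \{s_n = s\}$ is empty).

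Next I would establish this local statement. Fix $s \notin B$ and work under $\mathbb{P}_s$. Let $\tau \coloneqq \inf\{ k \ge 0 \mid s_k \in B \}$ (so $\tau \ge 1$ and $\{s_k \notin B \text{ for all }k\} = \{\tau = \infty\}$), let $N_k \coloneqq \#\{ 0 \le j < k \mid s_j \in A \}$, and set $Z_k \coloneqq r(s_{k \wedge \tau}) + \epsilon\, N_{k \wedge \tau}$. This process is non-negative because $r \ge 0$ and $\epsilon > 0$. I would verify that it is a supermartingale with respect to the canonical filtration $\mathcal{F}_k$: on the $\mathcal{F}_k$-measurable event $\{\tau \le k\}$ it is constant, while on $\{\tau > k\}$ we have $s_k \notin B$, hence $\Characteristic{B}(s_k) = 0$ and $\Characteristic{A \setminus B}(s_k) = \Characteristic{A}(s_k)$, so using $\mathbb{E}[r(s_{k+1}) \mid \mathcal{F}_k] = (\nexttime r)(s_k)$ and the defining inequality~\eqref{eq:streett-ranking-supermartingale} we get $\mathbb{E}[Z_{k+1} \mid \mathcal{F}_k] = (\nexttime r)(s_k) + \epsilon N_k + \epsilon \Characteristic{A}(s_k) \le r(s_k) + \epsilon N_k = Z_k$. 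By the supermartingale convergence theorem, $Z_k$ converges almost surely to an almost surely finite limit $Z_\infty$ (and $\mathbb{E}_s[Z_\infty] \le \mathbb{E}_s[Z_0] = r(s)$ by Fatou). On $\{\tau = \infty\}$ we have $Z_k = r(s_k) + \epsilon N_k$, so $N_k \le Z_k/\epsilon$ remains bounded; since $N_k$ is non-decreasing, $\#\{ j \mid s_j \in A \} = \lim_k N_k < \infty$ almost surely on $\{\tau = \infty\}$. This says precisely $\psi(s) = 0$. As a by-product one also reads off $\mathbb{E}_s[\#\{j < \tau \mid s_j \in A\}] \le r(s)/\epsilon$, i.e.\ $(A,B)$-positive recurrence, matching the intuition in the introduction.

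Finally I would reassemble: the local claim kills every term $\mathbb{P}_{s_0}[\{A\text{ i.o.}\}\cap G_n]$, hence $\mathbb{P}_{s_0}[\mathbf{Bad}] = 0$, i.e.\ $\mathbb{P}_{s_0}[\mathbf{Streett}(A,B)] = 1$ for all $s_0$; a general Streett condition $\{(A_1,B_1),\dots,(A_n,B_n)\}$ follows by intersecting the $n$ resulting probability-one events. I expect the main obstacle to be organisational rather than deep: because the time of the last $B$-visit is not a stopping time, one cannot simply stop the process there, and the decomposition into the events $G_n$ plus the Markov property is the device that repairs this. The remaining points — measurability of $\tau$, of $N_k$, of the suffix events, and of $s \mapsto \psi(s)$, together with the harmless normalisation $\epsilon = 1$ — are routine given the measurability of $A$, $B$, $r$ and the standard Markov-chain machinery recalled in Section~\ref{sec:markov-chains}.
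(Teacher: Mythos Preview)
The paper does not give its own proof of this proposition; it simply cites \cite[Theorem~2]{AbateCAV2024} and closes with a \qed. So there is nothing in the paper to compare your argument against.

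That said, your proof is correct and is essentially the standard compensated-supermartingale argument one would expect here. The decomposition of $\{B\text{ f.o.}\}$ into the increasing union of the $G_n$ together with the Markov property is the right device to bypass the non-stopping-time nature of ``last visit to $B$'', and the stopped process $Z_k = r(s_{k\wedge\tau}) + \epsilon\,N_{k\wedge\tau}$ is exactly the object on which the supermartingale convergence theorem bites. The computation on $\{\tau>k\}$ is clean (the $M\chi_B$ term vanishes there, which is the whole point of stopping at $\tau$), integrability of $Z_k$ follows from $\mathbb{E}_s[Z_k]\le Z_0=r(s)<\infty$, and the conclusion $N_\infty<\infty$ a.s.\ on $\{\tau=\infty\}$ is immediate from $r\ge 0$. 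Your by-product observation that one actually gets $(A,B)$-positive recurrence, i.e.\ $\mathbb{E}_s[N_\tau]\le r(s)/\epsilon$, is also correct and fits well with the paper's later narrative around Theorem~\ref{thm:gssm-sound-complete}.
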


Supermartingales for recurrence \cite[Section~3.2]{ChakarovTACAS2016} are also used for $\omega$-regular verification of probabilistic systems \cite{HenzingerCAV2025}.
These can be seen as a special case of Streett supermartingales for the Streett pair $(S \setminus B, B)$, i.e., the B\"uchi condition for $B$.

\section{Positive Recurrence and Streett Supermartingales}
\label{sec:positive-recurrence}

In this section, we first describe the relationship between Streett conditions and positive/null recurrence for Markov chains.
Then, we introduce a novel notion of generalised Streett supermartingales (GSSMs) based on the characterisation of positive recurrence as a least fixed point.
We show that GSSMs are sound and complete certificates for positive recurrence.
We also provide a qualitative verification problem that can be verified by GSSMs but not by Streett supermartingales (Definition~\ref{def:streett-ranking-supermartingale}).
This shows that the class of qualitative verification problems that can be verified by Streett supermartingales is a proper subclass of positively recurrent Markov chains.

\subsection{Recurrence and Streett Conditions}

Given a Markov chain, we consider the number of steps in a measurable set $A$ until reaching another measurable set $B$ and define positive/null recurrence with respect to $(A, B)$.
\begin{definition}
	Let $F \colon S \to \Giry S$ be a Markov chain.
	Recall that a Markov chain $F$ induces a probability measure $\mathbb{P}_{s_0}$ on $S^{\omega}$ for each initial state $s_0 \in S$.
	For any measurable sets $A, B \subseteq S$, we define a random variable $\mathrm{step}_{s_0}^{(A, B)} \colon S^{\omega} \to \mathbb{N}_{\infty}$ as follows.
	\[ \mathrm{step}_{s_0}^{(A, B)}(s_1 s_2 \cdots) \quad\coloneqq\quad |\{ i \mid 0 \le i < \min \{ j \mid s_j \in B \land j \ge 0 \} \land s_i \in A \}| \]
	The function $\mathrm{step}_{s_0}^{(A, B)}(s_1 s_2 \cdots)$ counts the number of $A$-steps in $s_0 s_1 s_2 \cdots$ up to the first $B$-step $s_j \in B$.
	Note that $s_0$ is counted as an $A$-step if $s_0 \in A$ and as a $B$-step if $s_0 \in B$.
	We say a Markov chain $F$ is \emph{positively $(A, B)$-recurrent} from an initial state $s_0$ if the expected number of $A$-steps until reaching $B$ is finite, i.e., $\mathbb{E}_{s_0}[\mathrm{step}_{s_0}^{(A, B)}] < \infty$.
	Similarly, we say $F$ is \emph{null $(A, B)$-recurrent} from an initial state $s_0$ if the number of $A$-steps until reaching $B$ is almost surely finite, i.e., $\mathbb{P}_{s_0}[\mathrm{step}_{s_0}^{(A, B)} < \infty] = 1$.
\end{definition}

The relationship between null recurrence and positive recurrence is analogous to that between almost sure termination and positive almost sure termination.
In fact, if $T \subseteq S$ is a set of terminating states with self-loops, then positive (resp.\ null) $(S \setminus T, \emptyset)$-recurrence from an initial state $s_0$ coincides with the notion of positive almost sure termination (resp.\ almost sure termination) from $s_0$.

It is easy to see that positive recurrence is a strictly stronger condition than null recurrence because if $\mathbb{P}_{s_0}[\mathrm{step}_{s_0}^{(A, B)} = \infty] > 0$, then we have $\mathbb{E}_{s_0}[\mathrm{step}_{s_0}^{(A, B)}] = \infty$.

\begin{lemma}\label{lem:positive-implies-null-recurrence}
	If a Markov chain $F$ is positively $(A, B)$-recurrent from an initial state $s_0$, then it is null $(A, B)$-recurrent from $s_0$.
	\qed
\end{lemma}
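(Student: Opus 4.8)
The plan is to prove the contrapositive: assuming $F$ is \emph{not} null $(A,B)$-recurrent from $s_0$, I will show it is not positively $(A,B)$-recurrent from $s_0$. Concretely, setting $X \coloneqq \mathrm{step}_{s_0}^{(A,B)}$ and $E \coloneqq \{X = \infty\} \subseteq S^{\omega}$, I will show that $\mathbb{P}_{s_0}[E] > 0$ forces $\mathbb{E}_{s_0}[X] = \infty$.

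First I would observe that $X : S^{\omega} \to \mathbb{N} \cup \{\infty\}$ is a measurable $[0,\infty]$-valued random variable and that $E$ is a measurable event: for each $n \in \mathbb{N}$ the event $\{X \ge n\}$ depends only on finite prefixes of the trace and hence lies in the cylinder $\sigma$-algebra $\Sigma_{S^{\omega}}$ (using that $A$ and $B$ are measurable subsets of $S$), and $E = \bigcap_{n} \{X \ge n\}$ is a countable intersection of such sets. Then, pointwise on $S^{\omega}$ we have $X \ge X \cdot \Characteristic{E}$, and on $E$ the right-hand side equals $+\infty$; by monotonicity of the integral for non-negative measurable functions this gives $\mathbb{E}_{s_0}[X] \ge \int X \cdot \Characteristic{E} \,\mathrm{d}(F_{\omega})_{s_0} = \infty \cdot \mathbb{P}_{s_0}[E]$. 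If $\mathbb{P}_{s_0}[E] > 0$, the right-hand side is $+\infty$, so $\mathbb{E}_{s_0}[X] = \infty$, i.e.\ $F$ is not positively $(A,B)$-recurrent from $s_0$. Taking the contrapositive yields the claim.

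An equivalent and equally short route, which I would note as an alternative, is via Markov's inequality: if $\mathbb{E}_{s_0}[X] < \infty$, then $\mathbb{P}_{s_0}[X \ge n] \le \mathbb{E}_{s_0}[X]/n$ for every $n$, and since the events $\{X \ge n\}$ decrease to $E$ as $n \to \infty$, continuity of the measure from above gives $\mathbb{P}_{s_0}[E] = \lim_{n\to\infty}\mathbb{P}_{s_0}[X \ge n] = 0$, which is exactly null $(A,B)$-recurrence. I do not expect any genuine obstacle here: the statement is the standard fact that an $[0,\infty]$-valued random variable with finite expectation is almost surely finite, and the only point needing a line of justification is the measurability of $X$ and of $E$, which is routine from the cylinder-set generation of $\Sigma_{S^{\omega}}$.
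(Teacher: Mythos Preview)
Your proposal is correct and matches the paper's approach: the paper states (immediately before the lemma) that if $\mathbb{P}_{s_0}[\mathrm{step}_{s_0}^{(A,B)} = \infty] > 0$ then $\mathbb{E}_{s_0}[\mathrm{step}_{s_0}^{(A,B)}] = \infty$, and leaves the lemma with a bare \qed. Your argument is exactly this contrapositive, just spelled out with the measurability checks and the integral inequality made explicit.
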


On the other hand, there exists a Markov chain $F$ that is null $(A, B)$-recurrent but not positively $(A, B)$-recurrent.
A well-known example is a fair random walk on $\mathbb{Z}$, which is null recurrent but not positively recurrent for $(A, B) = (Z \setminus \{ 0 \}, \{ 0 \})$.

Null recurrence corresponds to almost sure satisfaction of Streett conditions as follows.
\begin{lemma}\label{lem:streett-null-recurrence}
	Given a Streett pair $(A, B)$, the Streett condition is satisfied almost surely from any initial state $s_0$ if and only if the Markov chain is null $(A, B)$-recurrent from any initial state $s_0$.
	\[ \forall s_0, \mathbb{P}_{s_0}[\mathbf{Streett}(A, B)] = 1 \quad\iff\quad \forall s_0, \mathbb{P}_{s_0}[\mathrm{step}_{s_0}^{(A, B)} < \infty] = 1 \tag*{\qed} \]
\end{lemma}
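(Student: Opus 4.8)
The plan is to reduce the probabilistic equivalence to a purely combinatorial fact about individual traces and then lift it using the Markov property. \textbf{Step 1 (a pointwise characterisation).} First I would prove that for every trace $\sigma = s_0 s_1 \dots \in S^{\omega}$,
\[ \sigma \in \mathbf{Streett}(A, B) \quad\iff\quad \forall n \ge 0,\ \mathrm{step}_{s_n}^{(A,B)}(s_{n+1} s_{n+2} \dots) < \infty . \]
For the left-to-right implication, split on the two disjuncts of $\mathbf{Streett}(A,B)$: if $A$ is visited finitely often then $\mathrm{step}_{s_n}^{(A,B)}(s_{n+1}\dots) \le |\{ i \ge n \mid s_i \in A \}| < \infty$ for every $n$; if $B$ is visited infinitely often then, for every $n$, the least index $\ge n$ lying in $B$ is finite, and $\mathrm{step}_{s_n}^{(A,B)}(s_{n+1}\dots)$ is bounded by the gap to that index. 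For the converse I argue by contraposition: if $\sigma \notin \mathbf{Streett}(A,B)$ then $A$ is visited infinitely often while $B$ is visited only finitely often, so choosing $n$ larger than every index in $B$ makes $\mathrm{step}_{s_n}^{(A,B)}(s_{n+1}\dots)$ equal to $|\{ i \ge n \mid s_i \in A \}| = \infty$, and the right-hand side fails at that $n$.

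\textbf{Step 2 (the easy direction).} Specialising Step~1 to $n = 0$ yields $\mathbf{Streett}(A,B) \subseteq \{ s_0 s_1 \dots \mid \mathrm{step}_{s_0}^{(A,B)}(s_1 s_2 \dots) < \infty \}$, hence $\mathbb{P}_{s_0}[\mathrm{step}_{s_0}^{(A,B)} < \infty] \ge \mathbb{P}_{s_0}[\mathbf{Streett}(A,B)]$ for every $s_0$. Consequently $\forall s_0,\ \mathbb{P}_{s_0}[\mathbf{Streett}(A,B)] = 1$ implies $\forall s_0,\ \mathbb{P}_{s_0}[\mathrm{step}_{s_0}^{(A,B)} < \infty] = 1$; this direction needs neither the Markov property nor the quantifier over all $s_0$.

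\textbf{Step 3 (the main direction).} Assume $\mathbb{P}_{s_0}[\mathrm{step}_{s_0}^{(A,B)} < \infty] = 1$ for every $s_0$ and fix an initial state $s_0$. All the relevant events are measurable: $\{ \mathrm{step}_{t_0}^{(A,B)}(t_1 t_2 \dots) = \infty \}$ equals $\{ \forall j,\ t_j \notin B \} \cap \{ \forall N\ \exists i \ge N,\ t_i \in A \}$, a countable Boolean combination of one-coordinate cylinders over the measurable sets $A,B$. Put $g(x) \coloneqq \mathbb{P}_x[\mathrm{step}_x^{(A,B)} = \infty]$; since $F_{\omega} : S \to \Giry S^{\omega}$ is a Markov kernel, $g$ is measurable, and the hypothesis says $g \equiv 0$. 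For each $n$, the event $E_n \coloneqq \{ \sigma \mid \mathrm{step}_{s_n}^{(A,B)}(s_{n+1}\dots) = \infty \}$ is the preimage of $E_0 \coloneqq \{ \mathrm{step}_{t_0}^{(A,B)}(t_1 t_2\dots) = \infty \}$ under the shift $s_0 s_1 \dots \mapsto s_n s_{n+1}\dots$, so the fixed-time Markov property gives $\mathbb{P}_{s_0}[E_n] = \mathbb{E}_{s_0}[g(s_n)] = 0$. By Step~1, $\compl{\mathbf{Streett}(A,B)} = \bigcup_{n \ge 0} E_n$, so $\mathbb{P}_{s_0}[\compl{\mathbf{Streett}(A,B)}] \le \sum_{n} \mathbb{P}_{s_0}[E_n] = 0$, i.e.\ $\mathbb{P}_{s_0}[\mathbf{Streett}(A,B)] = 1$.

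\textbf{Expected obstacle.} The only non-bookkeeping step is the fixed-time Markov property $\mathbb{P}_{s_0}[\mathrm{shift}_n^{-1}(E_0)] = \mathbb{E}_{s_0}[g(s_n)]$: the extension lemma for $F_{\omega}$ stated above supplies this identity only for integrands depending on finitely many coordinates, whereas $E_0$ is a genuine tail event. I would close this gap by observing that both sides are finite measures in the argument $E_0$ that coincide on the $\pi$-system of cylinder sets (where the identity follows from the iterated-integral formula), and then invoking a monotone-class/Dynkin argument, using the measurability of $g$ established above. Everything else — the case analysis of Step~1 and the union bound of Step~3 — is routine.
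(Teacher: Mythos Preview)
Your proof is correct and follows essentially the same route as the paper: the easy direction is the set inclusion $\mathbf{Streett}(A,B)\subseteq\{\mathrm{step}_{s_0}^{(A,B)}<\infty\}$, and the main direction decomposes the complement of $\mathbf{Streett}(A,B)$ as a countable union of shifted ``$\mathrm{step}=\infty$'' events and applies the fixed-time Markov property. The only cosmetic difference is that the paper argues the main direction by contraposition (if $\mathbb{P}_{s_0}[\mathbf{Streett}(A,B)]<1$, locate an $N$ after which $B$ is never visited and extract a state $s_{N+1}$ with $\mathbb{P}_{s_{N+1}}[\mathrm{step}^{(A,B)}=\infty]>0$), whereas you argue directly via the union bound; the underlying decomposition and the Markov-property step are the same, and your ``expected obstacle'' is exactly the monotone-class extension that both arguments implicitly require.
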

\begin{appendixproof}[Proof of Lemma \ref{lem:streett-null-recurrence}]
	Let $\#^A(s)$ be the number of visits to $A$ in the trace $s$.
	\[ \#^A(s_0 s_1 s_2 \cdots) \quad\coloneqq\quad \# \{ i \mid s_i \in A \} \]
	We define $\#^B(s)$ similarly.
	Note that we have $\mathrm{step}_{s_0}^{(A, B)}(s) = \infty$ if and only if $\#^A(s_0 \cdot s) = \infty$ and $\#^B(s_0 \cdot s) = 0$.
	This further implies the following equation.
	\[ \mathbb{P}_{s_0}[\#^A = \infty \land \#^B = 0] \quad=\quad \int \mathbb{P}_{s_1}[\mathrm{step}_{s_1}^{(A, B)} = \infty] \,\mathrm{d} F_{s_0}(s_1) \]
	\begin{itemize}
		\item ($\implies$) We have $\mathbf{Streett}(A, B) \subseteq \{ s \mid \mathrm{step}_{s_0}^{(A, B)}(s) < \infty \}$ because if $\mathrm{step}_{s_0}^{(A, B)}(s) = \infty$, then we have $\#^A(s_0 \cdot s) = \infty$ and $\#^B(s_0 \cdot s) = 0$, which further implies $\#^A(s) = \infty$ and $\#^B(s) = 0$.
		\item ($\impliedby$) We show the contrapositive.
			Assume that $\mathbb{P}_{s_0}[\mathbf{Streett}(A, B)] < 1$ for some $s_0$.
			Then, $\mathbb{P}_{s_0}[\#^A = \infty \land \#^B < \infty] > 0$.
			Since we have
			\[ \{ s \mid \#^A = \infty \land \#^B < \infty \} = \bigcup_{N = 0}^{\infty} \{ s \mid \#^A (s_{N + 1} s_{N + 2} \cdots) = \infty \land \#^B (s_{N + 1} s_{N + 2} \cdots) = 0 \}, \]
			there exists $N$ such that $\mathbb{P}_{s_0}[\#^A (s_{N + 1} s_{N + 2} \cdots) = \infty \land \#^B (s_{N + 1} s_{N + 2} \cdots) = 0] > 0$.
			Let $X_N = \{ s \mid \#^A (s_{N + 1} s_{N + 2} \cdots) = \infty \land \#^B (s_{N + 1} s_{N + 2} \cdots) = 0 \}$.
			\begin{align}
				\mathbb{P}_{s_0}[X_N] &= \int \chi_{X_N}(s) \,\mathrm{d} (F_{\omega})_{s_0}(s) \\
				&= \iint \chi_{X_N}(s_1 s_2 \cdots s_N s_{N + 1} \cdots) \,\mathrm{d} (F_{\omega})_{s_N}(s_{N + 1} s_{N + 2} \cdots) \,\mathrm{d} (F_N)_{s_0}(s_1 s_2 \cdots s_N) \\
				&= \int \mathbb{P}_{s_N}[\#^A = \infty \land \#^B = 0] \,\mathrm{d} (F_N)_{s_0}(s_1 s_2 \cdots s_N) \\
				&= \iint \mathbb{P}_{s_{N + 1}}[\mathrm{step}_{s_{N + 1}}^{(A, B)} = \infty] \,\mathrm{d} F_{s_N}(s_{N + 1}) \,\mathrm{d} (F_N)_{s_0}(s_1 s_2 \cdots s_N) \\
				&> 0
			\end{align}
			Therefore, there exists $s_{N + 1}$ such that $\mathbb{P}_{s_{N + 1}}[\mathrm{step}_{s_{N + 1}}^{(A, B)} = \infty] > 0$.
	\end{itemize}
	A similar result using the Markov property can be found in \cite[Proposition~9.1.1]{Meyn1996}.
\end{appendixproof}

\subsection{Generalised Streett Supermartingales}

In the literature on termination verification for probabilistic programs, it is known that ranking supermartingales \cite{ChakarovCAV2013} can be understood as a prefixed point \cite{TakisakaTOPLAS2021}.
Specifically, the expected number of steps until termination can be characterised as the least fixed point of a certain monotone function, and ranking supermartingales are its finite prefixed points.
By Park induction (or the Knaster--Tarski theorem), ranking supermartingales are sound and complete certificates for positive almost sure termination.

Here, we extend this idea to positive recurrence.
It is straightforward to see that the expected number $\mathbb{E}_{s_0}[\mathrm{step}_{s_0}^{(A, B)}]$ of $A$-steps until reaching $B$ can be characterised as a least fixed point as well.
By considering its finite prefixed points, we obtain a novel notion of supermartingales, which is sound and complete for positive recurrence.
It turns out that this new notion of supermartingales is a generalisation of Streett supermartingales in \cite{AbateCAV2024}.

We briefly recall some order-theoretic notions.
An \emph{$\omega$-complete partial order} (\emph{$\omega$-cpo}) is a poset in which every $\omega$-chain (i.e., countable increasing sequence) $\{ x_n \}_{n \in \mathbb{N}}$ has a least upper bound $\sup_n x_n$.
The main example of $\omega$-cpos in this paper is the set $\mathbf{Meas}(S, [0, \infty])$ of measurable functions from $S$ to $[0, \infty]$ ordered by the pointwise order.
A function between $\omega$-cpos is \emph{Scott continuous} if it preserves least upper bounds of $\omega$-chains.
If $X$ is an $\omega$-cpo with the least element $\bot$, then any Scott continuous function $f \colon X \to X$ has a least fixed point, which is given by $\mu f = \sup_n f^{n}(\bot)$.

\begin{proposition}\label{prop:expected-number-of-A-steps-as-lfp}
	The expected number of $A$-steps until reaching $B$ is expressed as the least fixed point of the following monotone function $K_{\mathbb{E}} \colon \mathbf{Meas}(S, [0, \infty]) \to \mathbf{Meas}(S, [0, \infty])$.
	\begin{equation}
		\mathbb{E}_{s_0}[\mathrm{step}_{s_0}^{(A, B)}]\ =\ \mu K_{\mathbb{E}} (s_0) \qquad\text{where}\qquad
		(K_{\mathbb{E}} \eta)(x)\ \coloneqq\ \begin{cases}
			(\nexttime \eta) (x) + 1 & x \in A \setminus B \\
			0 & x \in B \\
			(\nexttime \eta)(x) & \text{otherwise}
		\end{cases}
		\label{eq:streett-step}
	\end{equation}
\end{proposition}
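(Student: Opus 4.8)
The plan is to realise $s_0 \mapsto \mathbb{E}_{s_0}[\mathrm{step}_{s_0}^{(A,B)}]$ as the supremum of the Kleene iterates $K_{\mathbb{E}}^n(\bot)$ of $K_{\mathbb{E}}$ (with $\bot$ the constant-$0$ function), by matching the $n$-th iterate with the expectation of an $n$-step truncation of $\mathrm{step}^{(A,B)}$ and then passing to the limit.

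First I would check that $K_{\mathbb{E}}$ is Scott continuous on $\mathbf{Meas}(S,[0,\infty])$: the next-time operator $\nexttime$ is Scott continuous because, for each fixed $x$, integration against $F_x$ commutes with suprema of $\omega$-chains by the monotone convergence theorem, and $K_{\mathbb{E}}$ is obtained from $\nexttime$ by applying, on a fixed measurable partition of $S$, the pointwise-Scott-continuous operations ``add $1$'' (on $A\setminus B$), ``overwrite with the constant $0$'' (on $B$), and ``do nothing'' (outside $A\cup B$). Consequently $\mu K_{\mathbb{E}} = \sup_n K_{\mathbb{E}}^n(\bot)$, and the monotonicity asserted in the statement is immediate.

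Next I would introduce, for each $n\in\mathbb{N}$, the truncated random variable $\mathrm{step}_{s_0}^{(A,B),(n)}$ counting the $A$-visits among $s_0,s_1,\dots$ that occur strictly before $\min(T_B,n)$, where $T_B \coloneqq \min\{ j\ge 0 \mid s_j\in B \}$; following the convention of the definition, $s_0$ itself is counted when $s_0\in A\setminus B$, and the count is $0$ when $s_0\in B$. Each $\mathrm{step}_{s_0}^{(A,B),(n)}$ depends on finitely many coordinates, hence is measurable; the sequence is non-decreasing in $n$ and converges pointwise to $\mathrm{step}_{s_0}^{(A,B)}$ on every trace, including those that never reach $B$. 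The heart of the argument is the claim, proved by induction on $n$, that
\[ \mathbb{E}_{s_0}\bigl[\mathrm{step}_{s_0}^{(A,B),(n)}\bigr] \;=\; K_{\mathbb{E}}^{n}(\bot)(s_0) \qquad\text{for all } s_0\in S. \]
The base case is trivial since both sides vanish. For the inductive step I would case-split on whether $s_0$ lies in $B$, in $A\setminus B$, or outside $A\cup B$ and reindex to obtain the one-step unfolding
\[ \mathrm{step}_{s_0}^{(A,B),(n+1)}(s_1 s_2\cdots) \;=\; \Characteristic{A\setminus B}(s_0) + \Characteristic{\compl B}(s_0)\cdot\mathrm{step}_{s_1}^{(A,B),(n)}(s_2 s_3\cdots), \]
then integrate against $(F_{\omega})_{s_0}$: the first summand passes through, and the second becomes $\Characteristic{\compl B}(s_0)\cdot\int \mathbb{E}_{s_1}[\mathrm{step}_{s_1}^{(A,B),(n)}]\,\mathrm{d}F_{s_0}(s_1)$ by the Markov property of $F_{\omega}$ (the disintegration $s_1\sim F_{s_0}$ followed by the rest $\sim (F_{\omega})_{s_1}$, which the lemma on $F_{\omega}$ supplies once one notes that $\mathrm{step}_{s_1}^{(A,B),(n)}(s_2 s_3\cdots)$ is a measurable function of $(s_1,\dots,s_n)$). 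Applying the induction hypothesis turns this into $\Characteristic{A\setminus B}(s_0) + \Characteristic{\compl B}(s_0)\cdot(\nexttime K_{\mathbb{E}}^{n}(\bot))(s_0)$, and a final case-split identifies it with $K_{\mathbb{E}}(K_{\mathbb{E}}^{n}(\bot))(s_0)$, as required.

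Finally I would pass to the limit: by the monotone convergence theorem $\mathbb{E}_{s_0}[\mathrm{step}_{s_0}^{(A,B)}] = \sup_n \mathbb{E}_{s_0}[\mathrm{step}_{s_0}^{(A,B),(n)}] = \sup_n K_{\mathbb{E}}^{n}(\bot)(s_0) = \mu K_{\mathbb{E}}(s_0)$. I expect the main obstacle to be bookkeeping rather than depth: getting the ``$s_0$ is counted'' convention, the index shift, and the budget decrement $n+1\mapsto n$ exactly right in the one-step unfolding, and making the appeal to the Markov property precise via the finite-support observation above. The remaining ingredients — measurability of the truncations, Scott continuity of $\nexttime$, and the final MCT step — are routine.
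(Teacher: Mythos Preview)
Your proposal is correct and follows essentially the same approach as the paper: both define finite-horizon truncations of $\mathrm{step}^{(A,B)}$, identify their expectations with the Kleene iterates of $K_{\mathbb{E}}$ by induction via a one-step unfolding and case-split on $s_0\in B$, $s_0\in A\setminus B$, $s_0\notin A\cup B$, and then conclude by monotone convergence. The only differences are cosmetic---an index shift (you match $K_{\mathbb{E}}^n(\bot)$ with the $n$-truncation whereas the paper matches $K_{\mathbb{E}}^{n+1}(\bot)$ with an $n$-indexed truncation) and your more explicit treatment of Scott continuity.
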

\begin{appendixproof}[Proof of Proposition~\ref{prop:expected-number-of-A-steps-as-lfp}]
	It is easy to see the following properties.
	\begin{itemize}
		\item $\sup_n \mathrm{step}_{s_0, n}^{(A, B)} \comp \pi_n = \mathrm{step}_{s_0}^{(A, B)}$ where $\pi_n \colon S^{\omega} \to S^{n}$ is the projection onto the first $n$ steps.
		\item If $s_0 \in A \setminus B$, then $\mathrm{step}_{s_0, n}^{(A, B)}(s_1 \dots s_n) = 1 + \mathrm{step}_{s_1, n - 1}^{(A, B)}(s_2 \dots s_n)$.
		\item If $s_0 \in B$, then $\mathrm{step}_{s_0, n}^{(A, B)}(s_1 \dots s_n) = 0$.
		\item If $s_0 \notin A \cup B$, then $\mathrm{step}_{s_0, n}^{(A, B)}(s_1 \dots s_n) = \mathrm{step}_{s_1, n}^{(A, B)}(s_2 \dots s_n)$.
	\end{itemize}
	Now, we show the following claim by induction on $n$.
	\[ K_{\mathbb{E}}^{n + 1}(\bot)(s_0) = \mathbb{E}_{s_0}[\mathrm{step}_{s_0, n}^{(A, B)} \comp \pi_n] \]
	\begin{itemize}
		\item Base case ($n = 0$): we have $K_{\mathbb{E}}(\bot)(s_0) = 1$ if and only if $s_0 \in A \setminus B$.
		On the other hand, we have $\mathrm{step}_{s_0, 0}^{(A, B)}() = 1$ if and only if $s_0 \in A \setminus B$.
		\item Step case: 
		\begin{itemize}
			\item If $s_0 \in A \setminus B$, then
			\begin{align}
				K_{\mathbb{E}}^{n + 2}(\bot)(s_0) &= \nexttime (K_{\mathbb{E}}^{n + 1}(\bot))(s_0) + 1 \\
				&= \int K_{\mathbb{E}}^{n + 1}(\bot)(s_1) \,\mathrm{d} F_{s_0}(s_1) + 1 \\
				&= \int \mathbb{E}_{s_1}[\mathrm{step}_{s_1, n}^{(A, B)} \comp \pi_n] \,\mathrm{d} F_{s_0}(s_1) + 1 \\
				&= \int \dots \int \mathrm{step}_{s_1, n}^{(A, B)}(s_2 \dots s_{n + 1}) \,\mathrm{d} F_{s_n}(s_{n+1}) \dots \,\mathrm{d} F_{s_0}(s_1) + 1 \\
				&= \int \dots \int \mathrm{step}_{s_0, n + 1}^{(A, B)}(s_1 s_2 \dots s_{n + 1}) \,\mathrm{d} F_{s_n}(s_{n+1}) \dots \,\mathrm{d} F_{s_0}(s_1) \\
				&= \mathbb{E}_{s_0}[\mathrm{step}_{s_0, n + 1}^{(A, B)} \comp \pi_{n + 1}]
			\end{align}
			\item If $s_0 \in B$, then
			\[ K_{\mathbb{E}}^{n + 2}(\bot)(s_0) = 0 = \mathbb{E}_{s_0}[\mathrm{step}_{s_0, n + 1}^{(A, B)} \comp \pi_{n + 1}] \]
			\item If $s_0 \notin A \cup B$, then
			\begin{align}
				K_{\mathbb{E}}^{n + 2}(\bot)(s_0) &= \nexttime (K_{\mathbb{E}}^{n + 1}(\bot))(s_0) \\
				&= \int K_{\mathbb{E}}^{n + 1}(\bot)(s_1) \,\mathrm{d} F_{s_0}(s_1) \\
				&= \int \mathbb{E}_{s_1}[\mathrm{step}_{s_1, n}^{(A, B)} \comp \pi_n] \,\mathrm{d} F_{s_0}(s_1) \\
				&= \int \dots \int \mathrm{step}_{s_1, n}^{(A, B)}(s_2 \dots s_{n + 1}) \,\mathrm{d} F_{s_n}(s_{n+1}) \dots \,\mathrm{d} F_{s_0}(s_1) \\
				&= \int \dots \int \mathrm{step}_{s_0, n + 1}^{(A, B)}(s_1 s_2 \dots s_{n + 1}) \,\mathrm{d} F_{s_n}(s_{n+1}) \dots \,\mathrm{d} F_{s_0}(s_1) \\
				&= \mathbb{E}_{s_0}[\mathrm{step}_{s_0, n + 1}^{(A, B)} \comp \pi_{n + 1}]
			\end{align}
		\end{itemize}
	\end{itemize}
	By the monotone convergence theorem, we have
	\[ \mu K_{\mathbb{E}} (s_0) = \sup_n K_{\mathbb{E}}^n(\bot)(s_0) = \mathbb{E}_{s_0}[\mathrm{step}_{s_0}^{(A, B)}]. \qedhere \]
\end{appendixproof}

We now introduce new supermartingales as finite prefixed points of $K_{\mathbb{E}}$ in~\eqref{eq:streett-step}.
\begin{definition}\label{def:generalised-streett-supermartingale}
	Let $F \colon S \to \Giry S$ be a Streett Markov chain with a Streett pair $(A, B)$.
	A \emph{generalised Streett supermartingale} (GSSM) is a finite-valued measurable function $r \colon S \to [0, \infty)$ that is a prefixed point of the function $K_{\mathbb{E}}$ in~\eqref{eq:streett-step}:
	concretely, $r$ is such that
	\[ (\nexttime r)(x) \le r(x) - \Characteristic{A \setminus B}(x) \qquad\text{for any $x \notin B$} \]
	where $\Characteristic{A \setminus B} \colon S \to \{ 0, 1 \}$ is the characteristic function of $A \setminus B$.
\end{definition}
By Park induction, GSSMs witness positive recurrence and hence almost sure satisfaction of the Streett condition.

\begin{theorem}\label{thm:gssm-sound-complete}
	Let $F \colon S \to \Giry S$ be a Streett Markov chain with a Streett pair $(A, B)$.
	There exists a generalised Streett supermartingale if and only if $F$ is positively $(A, B)$-recurrent from any initial state.
\end{theorem}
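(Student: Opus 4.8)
The plan is to read both directions off the least-fixed-point characterisation of Proposition~\ref{prop:expected-number-of-A-steps-as-lfp}, once we observe that the defining inequality of a GSSM is \emph{exactly} the prefixed-point condition $K_{\mathbb{E}}(r) \le r$ in the $\omega$-cpo $\mathbf{Meas}(S,[0,\infty])$. First I would verify this reformulation by unfolding $K_{\mathbb{E}}$ from~\eqref{eq:streett-step}: for $x \in A \setminus B$ the inequality $K_{\mathbb{E}}(r)(x) = (\nexttime r)(x) + 1 \le r(x)$ is $(\nexttime r)(x) \le r(x) - 1 = r(x) - \Characteristic{A\setminus B}(x)$; for $x \notin A \cup B$ it is $(\nexttime r)(x) \le r(x) = r(x) - \Characteristic{A\setminus B}(x)$; and for $x \in B$ it reads $0 \le r(x)$, which holds automatically because $r$ is non-negative (this is precisely why GSSMs impose no constraint on $B$). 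Hence, for a finite-valued non-negative measurable $r$, being a GSSM is equivalent to $r$ being a prefixed point of $K_{\mathbb{E}}$.

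For the forward direction ($\Rightarrow$), i.e.\ soundness: given a GSSM $r$, the reformulation says $K_{\mathbb{E}}(r) \le r$. Since $\nexttime$ is Scott continuous (by the monotone convergence theorem, already used in the proof of Proposition~\ref{prop:expected-number-of-A-steps-as-lfp}), so is $K_{\mathbb{E}}$, and Park induction applies: a routine induction gives $K_{\mathbb{E}}^n(\bot) \le r$ for all $n$ — the base case is $\bot \le r$, and the step is $K_{\mathbb{E}}^{n+1}(\bot) \le K_{\mathbb{E}}(r) \le r$ — whence $\mu K_{\mathbb{E}} = \sup_n K_{\mathbb{E}}^n(\bot) \le r$. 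By Proposition~\ref{prop:expected-number-of-A-steps-as-lfp}, $\mathbb{E}_{s_0}[\mathrm{step}_{s_0}^{(A,B)}] = \mu K_{\mathbb{E}}(s_0) \le r(s_0) < \infty$ for every $s_0 \in S$, so $F$ is positively $(A,B)$-recurrent from any initial state.

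For the converse ($\Leftarrow$), i.e.\ completeness: assume $\mathbb{E}_{s_0}[\mathrm{step}_{s_0}^{(A,B)}] < \infty$ for every $s_0 \in S$, and set $r \coloneqq \mu K_{\mathbb{E}}$. By Proposition~\ref{prop:expected-number-of-A-steps-as-lfp}, $r(s_0) = \mathbb{E}_{s_0}[\mathrm{step}_{s_0}^{(A,B)}]$, so $r$ is everywhere finite, hence a measurable function $S \to [0,\infty)$. As the least \emph{fixed} point, $r$ satisfies $K_{\mathbb{E}}(r) = r$, in particular $K_{\mathbb{E}}(r) \le r$, so by the reformulation $r$ is a GSSM; the inequality $(\nexttime r)(x) \le r(x) - \Characteristic{A\setminus B}(x)$ in particular shows $\nexttime r$ is finite wherever it is constrained.

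I do not expect a genuine obstacle here: the analytic content has been front-loaded into Proposition~\ref{prop:expected-number-of-A-steps-as-lfp}, and the theorem is essentially the combination of that fixed-point identity with Park induction. The only points needing care are (i) correctly handling the $x \in B$ case of the prefixed-point condition, where non-negativity of $r$ — rather than any ranking constraint — is what makes it vacuous, matching the GSSM definition; and (ii) the finiteness bookkeeping: ``positive recurrence from \emph{any} initial state'' is exactly the hypothesis that makes $\mu K_{\mathbb{E}}$ everywhere finite and hence a legitimate $[0,\infty)$-valued GSSM, and conversely finiteness of a GSSM is what transfers, via $\mu K_{\mathbb{E}} \le r$, to the expected step count.
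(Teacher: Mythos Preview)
Your proposal is correct and follows essentially the same approach as the paper: both directions are read off Proposition~\ref{prop:expected-number-of-A-steps-as-lfp} via Park induction for soundness and by exhibiting $\mu K_{\mathbb{E}}$ itself as the witnessing GSSM for completeness. Your write-up is simply a more explicit unfolding of the paper's three-sentence proof, including the case analysis that identifies the GSSM inequality with the prefixed-point condition.
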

\begin{proof}
	If there exists a generalised Streett supermartingale $r$, then by Park induction, we have $\mathbb{E}_{s_0}[\mathrm{step}_{s_0}^{(A, B)}] \le r(s_0) < \infty$.
	Conversely, if $F$ is positively $(A, B)$-recurrent, then the measurable function $s_0 \mapsto \mathbb{E}_{s_0}[\mathrm{step}_{s_0}^{(A, B)}]$ is finite, and by Proposition~\ref{prop:expected-number-of-A-steps-as-lfp}, it is also a prefixed point of $K_{\mathbb{E}}$.
\end{proof}

\begin{corollary}
	If a GSSM exists, then the Streett condition is satisfied almost surely from any initial state.
	\qed
\end{corollary}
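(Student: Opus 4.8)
The plan is to chain together three results already established in the excerpt, so that no new argument is needed. The only subtlety is bookkeeping about the universal quantification over initial states: both the soundness half of Theorem~\ref{thm:gssm-sound-complete} and the statement of Lemma~\ref{lem:streett-null-recurrence} are phrased ``from any initial state,'' so the chain of implications goes through cleanly at that level of generality.

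First I would invoke Theorem~\ref{thm:gssm-sound-complete}: the hypothesis that a GSSM exists is exactly the left-hand side of the ``if and only if'' there, so we immediately obtain that $F$ is positively $(A,B)$-recurrent from every initial state $s_0 \in S$. Next I would apply Lemma~\ref{lem:positive-implies-null-recurrence} pointwise in $s_0$: positive $(A,B)$-recurrence from $s_0$ implies null $(A,B)$-recurrence from $s_0$, i.e.\ $\mathbb{P}_{s_0}[\mathrm{step}_{s_0}^{(A,B)} < \infty] = 1$ for every $s_0$. Finally, since this holds for \emph{all} $s_0$, the right-to-left direction of Lemma~\ref{lem:streett-null-recurrence} yields $\mathbb{P}_{s_0}[\mathbf{Streett}(A,B)] = 1$ for every $s_0$, which is the claim. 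For a general Streett condition $\{(A_1,B_1),\dots,(A_n,B_n)\}$, one runs this argument for each pair and takes the intersection of the corresponding probability-one events, using that a countable (here finite) intersection of probability-one events has probability one.

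There is essentially no obstacle: the corollary is a direct consequence of Theorem~\ref{thm:gssm-sound-complete} together with the two lemmas, and the only thing to be mildly careful about is keeping the ``for any initial state'' quantifier consistent across the three invocations, since Lemma~\ref{lem:streett-null-recurrence} is stated as an equivalence over all initial states rather than for a fixed one.
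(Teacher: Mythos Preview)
Your proposal is correct and matches the paper's intent: the corollary is stated with only a \qed and no explicit proof, so it is meant to follow immediately from the chain Theorem~\ref{thm:gssm-sound-complete} $\Rightarrow$ Lemma~\ref{lem:positive-implies-null-recurrence} $\Rightarrow$ Lemma~\ref{lem:streett-null-recurrence}, exactly as you spell out.
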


\begin{example}\label{ex:generalised-streett-supermartingale}
	Consider the Markov chain $F \colon \mathbb{N} \to \Giry \mathbb{N}$ defined by $F(x) \coloneqq \delta_{x - 1}$ if $x > 0$ and $F(x) \coloneqq \sum_{i = 0}^{\infty} 2^{-(i + 1)} \delta_{2^{i + 1}}$ if $x = 0$.
	That is, if the current position is $x > 0$, then the next position is $x - 1$ deterministically; and if the current position is $0$, then the next position is $2^{i + 1}$ with probability $2^{-(i + 1)}$ for each $i \in \mathbb{N}$.
	If $A = \{ x \in \mathbb{N} \mid x > 0 \}$ and $B = \{ 0 \}$, then $r(x) = x$ is a GSSM because $r$ satisfies
	$x > 0 \implies r(x - 1) \le r(x) - 1$.
	On the other hand, if $A = \{ x \in \mathbb{N} \mid x \ge 2 \}$ and $B = \{ 1 \}$, then a GSSM $r$ must satisfy $x \ge 2 \implies r(x - 1) \le r(x) - 1$ and $x = 0 \implies \sum_{i = 0}^{\infty} 2^{-(i + 1)} r(2^{i + 1}) \le r(x)$.
	However, this is impossible because the Markov chain is not positively $(A, B)$-recurrent when $x = 0$ is the initial state:
	$\mathbb{E}_0[\mathrm{step}_{0}^{(A, B)}] = \sum_{i = 0}^{\infty} 2^{-(i + 1)} \cdot (2^{i + 1} - 1) = \infty$.
	\qed
\end{example}

\subsection{Comparison with Streett Supermartingales}

By definition, it is clear that GSSMs (Definition~\ref{def:generalised-streett-supermartingale}) generalise SSM (Definition~\ref{def:streett-ranking-supermartingale}) by dropping the requirement for $x \in B$.
This relaxation strictly increases the verification power of GSSMs compared to SSMs.
The following example shows the difference between two notions of supermartingales.

\begin{example}\label{ex:no-streett-supermartingale}
	Consider the following (deterministic) program where $n, m$ are natural numbers.
	\[ \while{\mathtt{true}}{\assign{m}{n}; \while{m > 0}{\assign{m}{m - 1}}; \assign{n}{n + 1}} \]
	This program corresponds to the pCFG shown below.
	Here, we merge locations for conditional branching and assignments for simplicity.
	The pCFG defines a Markov chain $F \colon S \to \Giry S$ over the state space $S = \{ l_0, l_1 \} \times \mathbb{N}^2$.
	\begin{center}
		\begin{tikzpicture}[scale=0.9, transform shape]
			\node[initial, state] (s0) {$l_0$};
			\node[state, right=10em of s0] (s1) {$l_1$};
			\draw[->] (s0) edge[bend left=5, transform canvas={yshift=1pt}] node[above] {$\assign{m}{n}$} (s1);
			\draw[->] (s1) edge[bend left=5, transform canvas={yshift=-1pt}] node[below, align=center] {$[\lnot (m > 0)]$\\$\assign{n}{n + 1}$} (s0);
			\draw[->] (s1) edge[loop right] node[right, align=center] {$[m > 0]$\\$\assign{m}{m - 1}$} (s1);
		\end{tikzpicture}
	\end{center}
	Consider the Streett condition given by $(A, B) = (\{ l_1 \} \times \mathbb{N}^2, \{ l_0 \} \times \mathbb{N}^2)$.
	Then, there exists a GSSM $r \colon S \to [0, \infty)$ for this Streett Markov chain as follows.
	\[ r(l_0, m, n) = 0 \qquad\qquad r(l_1, m, n) = m + 1 \tag*{\qed} \]
\end{example}

Now, consider SSMs for the Streett Markov chain in Example~\ref{ex:no-streett-supermartingale}.
For $r$ to be an SSM, it must satisfy the following additional condition:
\[ \exists M > 0,\qquad \forall m, n \in \mathbb{N},\qquad r(l_1, n, n) \quad\le\quad r(l_0, m, n) + M. \]
The GSSM presented in Example~\ref{ex:no-streett-supermartingale} is not an SSM because the variable $n$ becomes bigger and bigger as the program executes, and the inequality $m + 1 \le 0 + M$ will be eventually violated for any fixed $M$.
Similarly, we can show that it is impossible to satisfy this condition no matter how we define $r$.

\begin{proposition}\label{prop:no-streett-ranking-supermartingale}
	There is no Streett supermartingale for Example~\ref{ex:no-streett-supermartingale}.
\end{proposition}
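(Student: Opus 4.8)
The plan is to assume that a Streett supermartingale $r$ exists for the Streett Markov chain of Example~\ref{ex:no-streett-ranking-supermartingale} and to derive a contradiction with its non-negativity by following the deterministic run of the program. Fix $r : S \to [0, \infty)$ satisfying~\eqref{eq:streett-ranking-supermartingale} with constants $\epsilon, M > 0$; after rescaling we may take $\epsilon = 1$. Since the Markov chain is deterministic, $(\nexttime r)(x)$ is simply $r$ evaluated at the unique successor of $x$, and since $A = \{l_1\} \times \mathbb{N}^2$ and $B = \{l_0\} \times \mathbb{N}^2$ are disjoint, the condition~\eqref{eq:streett-ranking-supermartingale} unfolds into three families of inequalities valid for all $m, n \in \mathbb{N}$: first, along the inner loop, $r(l_1, m-1, n) \le r(l_1, m, n) - 1$ whenever $m > 0$; second, when the inner loop exits, $r(l_0, 0, n+1) \le r(l_1, 0, n) - 1$; and third, when the inner loop is (re)entered, $r(l_1, n, n) \le r(l_0, m, n) + M$ for every $m$.

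Next I would collapse these into a single one-dimensional recurrence. Iterating the first inequality $n$ times gives $r(l_1, 0, n) \le r(l_1, n, n) - n$, and instantiating the third inequality at $m = 0$ gives $r(l_1, n, n) \le r(l_0, 0, n) + M$. Abbreviating $g(n) \coloneqq r(l_0, 0, n) \ge 0$ and chaining these with the second inequality yields, for every $n$,
\[ g(n+1) \le r(l_1, 0, n) - 1 \le r(l_1, n, n) - n - 1 \le g(n) + M - n - 1. \]

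Finally, I would note that this recurrence is incompatible with non-negativity of $g$: telescoping from $0$ to $N - 1$ gives $g(N) \le g(0) + \sum_{n=0}^{N-1} (M - n - 1) = g(0) + N M - \tfrac{N(N+1)}{2}$, and the right-hand side tends to $-\infty$ as $N \to \infty$; hence $g(N) < 0$ for large $N$, contradicting $r \ge 0$. This is exactly the formalisation of the informal observation preceding the proposition that ``$n$ becomes bigger and bigger''. The argument consists only of elementary chains of inequalities, so I do not anticipate a real technical obstacle; the one point that requires care is selecting the right specialisations — iterating the inner-loop bound exactly $n$ times and fixing $m = 0$ in the re-entry bound — so that all quantities are expressed through the single sequence $g$, reducing the claim to a one-line divergence argument. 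A minor subtlety worth recording is that~\eqref{eq:streett-ranking-supermartingale} is imposed at \emph{every} state of $S$, not only the reachable ones, which is precisely what licenses the choice $m = 0$ in the third family of inequalities.
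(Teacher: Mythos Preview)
Your proposal is correct and follows essentially the same route as the paper: unfold the SSM inequality along one outer-loop iteration (inner-loop descent by $n$, exit, re-entry), obtain a recurrence of the form $g(n+1) \le g(n) + M - (n+1)$ for $g(n) = r(l_0,0,n)$, telescope, and contradict non-negativity. The only cosmetic differences are that you rescale $\epsilon$ to $1$ up front and name the auxiliary sequence $g$, whereas the paper keeps $\epsilon$ and works directly with $r(l_0,0,n)$.
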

\begin{proofsketch}
	Suppose that there exists an SSM $r \colon S \to [0, \infty)$.
	Then, $r$ decreases by at least $\epsilon$ whenever $l_1$ is visited, while it increases by at most $M$ whenever $l_0$ is visited.
	In each iteration of the outer loop, the inner loop is executed $n$ times, and $n$ is incremented by 1 afterward.
	The larger $n$ becomes, the more times $r$ is decreased by $\epsilon$.
	Thus, after sufficiently many iterations, the non-negativity of $r$ is eventually violated.
	A more formal proof is given in
	\iflong
	the appendix.
	\else
	the long version \cite{arxiv}.
	\fi
\end{proofsketch}
\begin{appendixproof}[Proof of Proposition~\ref{prop:no-streett-ranking-supermartingale}]
	Suppose that there exists a Streett supermartingale $r \colon S \to [0, \infty)$.
	Then, the following conditions must hold.
	\begin{align}
		&r(l_0, m, n) + M \ge r(l_1, n, n) \\
		m > 0 \implies &r(l_1, m, n) \ge \epsilon + r(l_1, m - 1, n) \\
		m = 0 \implies &r(l_1, m, n) \ge \epsilon + r(l_0, m, n + 1)
	\end{align}
	For each $n$, we have the following.
	\begin{align}
		r(l_0, 0, n) &\ge r(l_1, n, n) - M \\
		&\ge r(l_1, n - 1, n) + \epsilon - M \\
		&\ge \dots \\
		&\ge r(l_1, 0, n) + n \epsilon - M \\
		&\ge r(l_0, 0, n + 1) + (n + 1) \epsilon - M
	\end{align}
	Thus, $r(l_0, 0, 0)$ satisfies the following condition for each $n$.
	\begin{align}
		r(l_0, 0, 0) &\ge r(l_0, 0, 1) + \epsilon - M \\
		&\ge \dots \\
		&\ge r(l_0, 0, n) + \sum_{i = 1}^n i \epsilon - n \cdot M \\
		&= r(l_0, 0, n) + n \left( \frac{n + 1}{2} \epsilon - M \right)
	\end{align}
	For sufficiently large $n$, we have
	\[ r(l_0, 0, 0) < n \left( \frac{n + 1}{2} \epsilon - M \right), \]
	which implies that $r(l_0, 0, n) < 0$.
	This contradicts the non-negativity of $r_{l_0}$.
\end{appendixproof}

\section{Null Recurrence and Streett Supermartingales}
\label{sec:null-recurrence}

Similarly to the case of positive recurrence, null recurrence can be characterised as the least fixed point of a certain monotone function, following the idea of \emph{distribution-valued ranking supermartingales} in \cite{UrabeLICS2017}.
This leads to a sound and complete supermartingale-based certificate for Streett conditions, which we call \emph{distribution-valued Streett supermartingales} (DVSSMs).
Although this result is theoretically interesting, it has a drawback in practical applicability, since there is no existing synthesis algorithm for distribution-valued supermartingales as far as we know.
To address this issue, we also introduce \emph{lexicographic generalised Streett supermartingales} (LexGSSMs).
Although LexGSSMs are not complete for null recurrence, they are more amenable to automatic synthesis and able to prove qualitative problems beyond positive recurrence.

\subsection{Distribution-Valued Streett Supermartingales}

Following \cite{UrabeLICS2017}, we define an $\omega$-cpo of distributions over $\mathbb{N}_{\infty} = \mathbb{N} \cup \{ \infty \}$ and characterise the distribution of the number of $A$-steps until reaching $B$ as a least fixed point.

\begin{definition}
	Consider the set $\mathbb{N}_{\infty}$ of extended natural numbers equipped with discrete $\sigma$-algebra.
	Let $\Giry \mathbb{N}_{\infty}$ be the set of all distributions over $\mathbb{N}_{\infty}$ as defined in Section~\ref{sec:markov-chains}.
	We define an order relation $\le$ on $\Giry \mathbb{N}_{\infty}$ as follows, which is known as the \emph{stochastic order}.
	\[ \zeta_1 \le \zeta_2 \quad\iff\quad \forall a \in \mathbb{N}, \zeta_1([a, \infty]) \le \zeta_2([a, \infty]) \]
	Here, $[a, \infty] = \{ n \in \mathbb{N}_{\infty} \mid n \ge a \}$ denotes the interval.
	Then, $\Giry \mathbb{N}_{\infty}$ is an $\omega$cpo: 
	the least upper bound of $\zeta_0 \le \zeta_1 \le \dots$ is such that $(\sup_n \zeta_n)([a, \infty]) = \sup_n \zeta_n([a, \infty])$ for any $a \in \mathbb{N}$.
	\[ (\sup_n \zeta_n)(a) = \sup_n \zeta_n([a, \infty]) - \sup_n \zeta_n([a + 1, \infty]) \qquad (\sup_n \zeta_n)(\infty) = \inf_a \sup_n \zeta_n([a, \infty]) \]
	The Dirac distribution at 0, denoted by $\delta_0$, is the least element of $\Giry \mathbb{N}_{\infty}$.
\end{definition}

Let $\mathbf{Meas}(S, \Giry \mathbb{N}_{\infty})$ be the set of measurable functions from $S$ to $\Giry \mathbb{N}_{\infty}$ ordered by the pointwise order.
We define a monotone function on $\Giry \mathbb{N}_{\infty}$ similarly to positive recurrence (Proposition~\ref{prop:expected-number-of-A-steps-as-lfp}).

\begin{definition}
	We define the \emph{next-time operator} $\nexttime_F \colon \mathbf{Meas}(S, \Giry \mathbb{N}_{\infty}) \to \mathbf{Meas}(S, \Giry \mathbb{N}_{\infty})$ of a Markov chain $F \colon S \to \Giry S$, and an action $\oplus$ of the monoid $(\mathbb{N}, 0, {+})$ on $\Giry \mathbb{N}_{\infty}$ as follows.
	\[ (\nexttime_F \zeta)(s)(m) \quad\coloneqq\quad \int \zeta(s')(m) \,\mathrm{d} F_s(s') \qquad\qquad
	(n \oplus \zeta)(m) \quad\coloneqq\quad \begin{cases}
		\zeta(m - n) & m \ge n \\
		0 & m < n
	\end{cases} \]
	Intuitively, given a distribution $\zeta$ of the number of steps, $n \oplus \zeta$ is the distribution obtained by adding $n$ to each outcome of $\zeta$.
	We extend $\oplus$ to an action of $(\mathbb{N}, 0, {+})$ on $\mathbf{Meas}(S, \Giry \mathbb{N}_{\infty})$ pointwise.
\end{definition}

\begin{proposition}\label{prop:streett-distribution-lfp}
	Let $F \colon S \to \Giry S$ be a Streett Markov chain with a Streett pair $(A, B)$.
	The distribution of the number of $A$-steps until reaching $B$ is characterised as the least fixed point of the monotone function $K_{\mathbb{P}} \colon \mathbf{Meas}(S, \Giry \mathbb{N}_{\infty}) \to \mathbf{Meas}(S, \Giry \mathbb{N}_{\infty})$ defined by $K_{\mathbb{P}} (r)(x) \coloneqq 1 \oplus \nexttime_F (r)(x)$ if $x \in A \setminus B$, $K_{\mathbb{P}} (r)(x) \coloneqq \delta_{0}$ if $x \in B$, and $K_{\mathbb{P}} (r)(x) \coloneqq \nexttime_F (r)(x)$ otherwise.
	\[ \mathbb{P}_{s_0}[\mathrm{step}_{s_0}^{(A, B)} = m]\ =\ \mu K_{\mathbb{P}}(s_0)(m) \]
\end{proposition}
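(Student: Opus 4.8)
The plan is to mirror the proof of Proposition~\ref{prop:expected-number-of-A-steps-as-lfp}, replacing expectations by laws and the monotone convergence theorem by its distribution-valued counterpart. I reuse the finite-horizon random variables $\mathrm{step}_{s_0, n}^{(A, B)} : S^{n} \to \mathbb{N}$ introduced there, which count the number of $A$-visits among $s_0, s_1, \dots$ occurring strictly before the first visit to $B$ or before step $n$, whichever comes first. These satisfy the same three recursion clauses ($\mathrm{step}_{s_0,n}^{(A,B)} = 1 + \mathrm{step}_{s_1, n-1}^{(A,B)}$ when $s_0 \in A \setminus B$; $= 0$ when $s_0 \in B$; $= \mathrm{step}_{s_1, n}^{(A,B)}$ when $s_0 \notin A \cup B$) together with $\sup_n \mathrm{step}_{s_0, n}^{(A,B)} \comp \pi_n = \mathrm{step}_{s_0}^{(A,B)}$ pointwise on $S^{\omega}$, where $\pi_n : S^{\omega} \to S^{n}$ is the projection onto the first $n$ steps.

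First I would record that $K_{\mathbb{P}}$ is Scott continuous, so that $\mu K_{\mathbb{P}} = \sup_n K_{\mathbb{P}}^n(\bot)$ for $\bot$ the constant $\delta_0$ function, the least element of $\mathbf{Meas}(S, \Giry \mathbb{N}_{\infty})$. Since the three clauses partition $S$ into measurable pieces, and both $n \oplus ({-})$ for fixed $n$ and the constant $\delta_0$ are Scott continuous, it suffices to check continuity of $\nexttime_F$ on $\mathbf{Meas}(S, \Giry \mathbb{N}_{\infty})$; this follows by applying the monotone convergence theorem to the tail probabilities $s' \mapsto \zeta_k(s')([a, \infty])$, using that suprema in $\Giry \mathbb{N}_{\infty}$ are computed tailwise. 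Next I would prove by induction on $n$ that
$K_{\mathbb{P}}^{n+1}(\bot)(s_0)(m) = \mathbb{P}_{s_0}\big[\mathrm{step}_{s_0, n}^{(A,B)} \comp \pi_n = m\big]$:
the base case is a direct evaluation ($K_{\mathbb{P}}(\bot)(x) = \delta_1$ if $x \in A \setminus B$ and $\delta_0$ otherwise), and the step case follows from the three recursion clauses together with the dictionary ``$1 \oplus ({-})$ on laws $=$ adding $1$ to the random variable'' and ``$\nexttime_F$ on laws $=$ integrating the law of the $n$-horizon count against the first-step kernel $F_{s_0}$'', the latter using the product form of $F_{\omega}$.

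Finally I would pass to the limit. The sequence $\mathrm{step}_{s_0, n}^{(A,B)} \comp \pi_n$ is pointwise nondecreasing with pointwise supremum $\mathrm{step}_{s_0}^{(A,B)}$, so for each $a \in \mathbb{N}$ the events $\{\mathrm{step}_{s_0, n}^{(A,B)} \comp \pi_n \ge a\}$ increase to $\{\mathrm{step}_{s_0}^{(A,B)} \ge a\}$, whence continuity of $\mathbb{P}_{s_0}$ from below gives $\mathbb{P}_{s_0}[\mathrm{step}_{s_0, n}^{(A,B)} \comp \pi_n \ge a] \uparrow \mathbb{P}_{s_0}[\mathrm{step}_{s_0}^{(A,B)} \ge a]$. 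Since the $\omega$-cpo structure on $\Giry \mathbb{N}_{\infty}$ is exactly this tail-probability supremum, the laws of the finite-horizon counts have supremum the law of $\mathrm{step}_{s_0}^{(A,B)}$; combining with the previous two steps yields $\mu K_{\mathbb{P}}(s_0)(m) = \sup_n K_{\mathbb{P}}^{n+1}(\bot)(s_0)(m) = \mathbb{P}_{s_0}[\mathrm{step}_{s_0}^{(A,B)} = m]$. I expect the main obstacle to be this last step, namely the ``distribution-valued monotone convergence'' identifying the $\omega$-cpo supremum in $(\Giry \mathbb{N}_{\infty}, \le)$ with the law of a pointwise-increasing limit of $\mathbb{N}_{\infty}$-valued random variables, together with the continuity check for $\nexttime_F$; both are arguments under the integral sign where the stochastic-order definition of the $\omega$-cpo must be used carefully, whereas the inductive bookkeeping is routine once the $\oplus$/$\nexttime_F$ dictionary is set up.
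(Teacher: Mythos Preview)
Your proposal is correct and follows essentially the same approach as the paper's proof: an induction on $n$ establishing $K_{\mathbb{P}}^{n+1}(\bot)(s_0)(m) = \mathbb{P}_{s_0}[\mathrm{step}_{s_0, n}^{(A,B)} \comp \pi_n = m]$ via the three recursion clauses, followed by a passage to the limit using the tail-probability characterisation of suprema in $\Giry \mathbb{N}_{\infty}$. The only difference is that you make the Scott continuity of $K_{\mathbb{P}}$ and $\nexttime_F$ explicit, whereas the paper leaves this implicit.
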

\begin{proofsketch}
	It is straightforward to show that $K_{\mathbb{P}}^{n}(\bot)$ gives the $n$-step approximation of the distribution of the number of $A$-steps until reaching $B$.
	Then, by taking the limit, we obtain the desired result.
	Details of the proof are given in
	\iflong
	the appendix.
	\else
	the long version \cite{arxiv}.
	\fi
\end{proofsketch}
\begin{appendixproof}[Proof of Proposition~\ref{prop:streett-distribution-lfp}]
	We show the following claim by induction on $n$.
	\[ K_{\mathbb{P}}^{n + 1}(\bot)(s_0)(m) \quad=\quad \mathbb{P}_{s_0}[\mathrm{step}_{s_0, n}^{(A, B)} \comp \pi_n = m] \]
	\begin{itemize}
		\item Base case ($n = 0$): we have $K_{\mathbb{P}}(\bot)(s_0) = \delta_1$ if $s_0 \in A \setminus B$ and $K_{\mathbb{P}}(\bot)(s_0) = \delta_0$ otherwise.
		On the other hand, we have $\mathrm{step}_{s_0, 0}^{(A, B)}() = 1$ if $s_0 \in A \setminus B$ and $\mathrm{step}_{s_0, 0}^{(A, B)}() = 0$ otherwise.
		\item Step case:
		\begin{itemize}
			\item If $s_0 \in A \setminus B$, then $\mathrm{step}_{s_0, n + 1}^{(A, B)}(s_1 \dots s_{n + 1}) = 1 + \mathrm{step}_{s_1, n}^{(A, B)}(s_2 \dots s_{n + 1})$.
			\begin{itemize}
				\item If $m = 0$, then
				\[ K_{\mathbb{P}}^{n + 2}(\bot)(s_0)(m) = 0 = \mathbb{P}_{s_0}[\mathrm{step}_{s_0, n + 1}^{(A, B)} \comp \pi_{n + 1} = 0] \]
				\item If $m > 0$, then
				\begin{align}
					K_{\mathbb{P}}^{n + 2}(\bot)(s_0)(m) &= (1 \oplus \nexttime_F (K_{\mathbb{P}}^{n + 1}(\bot))(s_0))(m) \\
					&= \nexttime_F (K_{\mathbb{P}}^{n + 1}(\bot))(s_0)(m - 1) \\
					&= \int K_{\mathbb{P}}^{n + 1}(\bot)(s_1)(m - 1) \,\mathrm{d} F_{s_0}(s_1) \\
					&= \int \mathbb{P}_{s_1}[\mathrm{step}_{s_1, n}^{(A, B)} \comp \pi_n = m - 1] \,\mathrm{d} F_{s_0}(s_1) \\
					&= \mathbb{P}_{s_0}[\mathrm{step}_{s_0, n + 1}^{(A, B)} \comp \pi_{n + 1} = m]
				\end{align}
			\end{itemize}
			\item If $s_0 \in B$, then $\mathrm{step}_{s_0, n + 1}^{(A, B)}(s_1 \dots s_{n + 1}) = 0$.
			\[ K_{\mathbb{P}}^{n + 2}(\bot)(s_0)(m) = \delta_0(m) = \mathbb{P}_{s_0}[\mathrm{step}_{s_0, n + 1}^{(A, B)} \comp \pi_{n + 1} = m] \]
			\item If $s_0 \notin A \cup B$, then $\mathrm{step}_{s_0, n + 1}^{(A, B)}(s_1 \dots s_{n + 1}) = \mathrm{step}_{s_1, n}^{(A, B)}(s_2 \dots s_{n + 1})$.
			\begin{align}
				K_{\mathbb{P}}^{n + 2}(\bot)(s_0)(m) &= \nexttime_F (K_{\mathbb{P}}^{n + 1}(\bot))(s_0)(m) \\
				&= \int K_{\mathbb{P}}^{n + 1}(\bot)(s_1)(m) \,\mathrm{d} F_{s_0}(s_1) \\
				&= \int \mathbb{P}_{s_1}[\mathrm{step}_{s_1, n}^{(A, B)} \comp \pi_n = m] \,\mathrm{d} F_{s_0}(s_1) \\
				&= \mathbb{P}_{s_0}[\mathrm{step}_{s_0, n + 1}^{(A, B)} \comp \pi_{n + 1} = m]
			\end{align}
		\end{itemize}
	\end{itemize}
	Now, we have
	\[ \sup_n K_{\mathbb{P}}^{n}(\bot)(s_0)([m, \infty]) = \sup_n \mathbb{P}_{s_0}[\mathrm{step}_{s_0, n}^{(A, B)} \comp \pi_n \ge m] = \mathbb{P}_{s_0}[\mathrm{step}_{s_0}^{(A, B)} \ge m] \]
	and thus
	\[ \mu K_{\mathbb{P}} (s_0)(m) = \mathbb{P}_{s_0}[\mathrm{step}_{s_0}^{(A, B)} = m]. \qedhere \]
\end{appendixproof}
Now, we define a new supermartingale as a ``finite'' prefixed point of $K_{\mathbb{P}}$.
\begin{definition}\label{def:distribution-valued-streett-supermartingale}
	A \emph{distribution-valued Streett supermartingale} (DVSSM) is a prefixed point $r$ of $K_{\mathbb{P}}$ such that $r(s)([0, \infty)) = 1$ for all $s \in S$.
	Specifically, $r$ is a measurable function $r \colon S \to \Giry \mathbb{N}$ such that the following condition holds.
	\[ \forall x \in A \setminus B,\quad 1 \oplus \nexttime(r)(x) \le r(x) \qquad
	\forall x \notin A \cup B,\quad \nexttime(r)(x) \le r(x) \]
\end{definition}

\begin{corollary}\label{cor:dvssm-sound-complete}
	DVSSMs are sound and complete for almost sure satisfaction of Streett conditions.
\end{corollary}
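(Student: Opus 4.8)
My plan is to obtain both directions immediately from the least-fixed-point characterisation of Proposition~\ref{prop:streett-distribution-lfp} together with the equivalence between null $(A,B)$-recurrence and almost sure satisfaction of the Streett condition (Lemma~\ref{lem:streett-null-recurrence}), in exact analogy with the proof of Theorem~\ref{thm:gssm-sound-complete} for GSSMs. The only genuinely new ingredient compared with the positive-recurrence case is a small observation about how the stochastic order on $\Giry \mathbb{N}_{\infty}$ interacts with the mass placed on $\infty$, which is exactly the part the finiteness clause $r(s)([0,\infty)) = 1$ is there to control.

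For soundness, I would take a DVSSM $r$, i.e.\ a prefixed point $K_{\mathbb{P}}(r) \le r$ with $r(s)([0,\infty)) = 1$ for all $s$, equivalently $r(s)(\{\infty\}) = 0$. Since $K_{\mathbb{P}}$ is Scott continuous (hence monotone) on the $\omega$-cpo $\mathbf{Meas}(S, \Giry \mathbb{N}_{\infty})$, Park induction gives $\mu K_{\mathbb{P}} \le r$ pointwise, i.e.\ $\mu K_{\mathbb{P}}(s_0) \le r(s_0)$ in the stochastic order for every $s_0$. The key point is that for any $\zeta \in \Giry \mathbb{N}_{\infty}$ we have $\zeta(\{\infty\}) = \inf_{a \in \mathbb{N}} \zeta([a,\infty])$ by continuity from above of measures, so the stochastic order $\mu K_{\mathbb{P}}(s_0) \le r(s_0)$, which a priori only compares the tails $[a,\infty]$ for $a \in \mathbb{N}$, still forces $\mu K_{\mathbb{P}}(s_0)(\{\infty\}) \le r(s_0)(\{\infty\}) = 0$. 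By Proposition~\ref{prop:streett-distribution-lfp} the left-hand side is $\mathbb{P}_{s_0}[\mathrm{step}_{s_0}^{(A,B)} = \infty]$, so $F$ is null $(A,B)$-recurrent from every initial state, and Lemma~\ref{lem:streett-null-recurrence} yields $\mathbb{P}_{s_0}[\mathbf{Streett}(A,B)] = 1$ for all $s_0$.

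For completeness, assuming $\mathbb{P}_{s_0}[\mathbf{Streett}(A,B)] = 1$ for every $s_0$, Lemma~\ref{lem:streett-null-recurrence} gives $\mathbb{P}_{s_0}[\mathrm{step}_{s_0}^{(A,B)} < \infty] = 1$, and I then exhibit $r(s_0) \coloneqq \mu K_{\mathbb{P}}(s_0)$ as the required DVSSM. Indeed $\mu K_{\mathbb{P}}$ is a measurable map $S \to \Giry \mathbb{N}_{\infty}$ (it is the least fixed point living inside $\mathbf{Meas}(S, \Giry \mathbb{N}_{\infty})$), by Proposition~\ref{prop:streett-distribution-lfp} it is the distribution of $\mathrm{step}_{s_0}^{(A,B)}$, hence $r(s_0)([0,\infty)) = \mathbb{P}_{s_0}[\mathrm{step}_{s_0}^{(A,B)} < \infty] = 1$, and being a fixed point of $K_{\mathbb{P}}$ it is in particular a prefixed point. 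I expect no real obstacle: the monotonicity of $K_{\mathbb{P}}$ and Park induction are identical to the GSSM case, and the only subtlety worth spelling out is the $\infty$-mass argument above. (As with SSMs and GSSMs, a DVSSM for a general Streett condition is a tuple of DVSSMs for its pairs, so the single-pair statement suffices.)
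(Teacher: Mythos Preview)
Your proposal is correct and follows exactly the route the paper takes: the paper's proof is the single line ``By Proposition~\ref{prop:streett-distribution-lfp} and Lemma~\ref{lem:streett-null-recurrence}'', and you have simply unpacked that citation into the Park-induction argument for soundness and the choice $r = \mu K_{\mathbb{P}}$ for completeness. The observation you flag about the stochastic order controlling the mass at $\infty$ is the right detail to make explicit, and it is precisely what the paper leaves implicit.
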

\begin{proof}
	By Proposition~\ref{prop:streett-distribution-lfp} and Lemma~\ref{lem:streett-null-recurrence}.
\end{proof}

\subsection{A Lexicographic Extension of Generalised Streett Supermartingales}\label{sec:lexgssm}

Since DVSSMs take values in distributions, they are difficult to synthesise automatically.
We therefore introduce another type of supermartingales called \emph{lexicographic generalised Streett supermartingales} (LexGSSMs).
LexGSSMs are, in short, defined as a combination of lexicographic ranking supermartingales \cite{AgrawalPOPL2018} and generalised Streett supermartingales (Definition~\ref{def:generalised-streett-supermartingale}).
To explain the idea, let us first recall lexicographic ranking supermartingales.

Lexicographic ranking supermartingales \cite{AgrawalPOPL2018} be used to verify not only positive almost-sure termination but also almost-sure termination of probabilistic programs.
They combine ideas from lexicographic ranking functions and ranking supermartingales.
Due to the probabilistic nature of target programs, the definition of ``lexicographic ordering'' used for lexicographic ranking supermartingales is slightly different from the standard one used for lexicographic ranking functions.

\begin{definition}\label{def:lexicographic-ranking-supermartingale-order}
	We define a ``strict'' lexicographic ordering $\succ$ on $[0, \infty)^m$ as follows: for each $(r_1, \dots, r_m), (r'_1, \dots, r'_m) \in [0, \infty)^m$,
	\begin{align}
		(r_1, \dots, r_m) \succ_{[l]} (r'_1, \dots, r'_m) \qquad&\coloneqq\qquad (\forall i, 1 \le i < l \implies r_i \ge r'_i) \land r_l \ge 1 + r'_l \\
		(r_1, \dots, r_m) \succ (r'_1, \dots, r'_m) \qquad&\coloneqq\qquad \exists l \in \OneToN{m}, (r_1, \dots, r_m) \succ_{[l]} (r'_1, \dots, r'_m).
	\end{align}
	If $(r_1, \dots, r_m) \succ_{[l]} (r'_1, \dots, r'_m)$, then we say that $l$ is a \emph{level} of $(r_1, \dots, r_m) \succ (r'_1, \dots, r'_m)$.
	We also define a ``non-strict'' lexicographic ordering on $[0, \infty)^m$ by the union $({\succeq}) \coloneqq ({\succ}) \cup ({\ge})$ where $(r_1, \dots, r_m) \ge (r'_1, \dots, r'_m)$ is defined as $r_i \ge r'_i$ for each $i$.
\end{definition}
Both $\succ$ and $\succeq$ are slightly different from the standard lexicographic orderings in the sense that (a) $\succ$ requires a decrease by at least $1$ in the first differing component, and (b) $\succeq$ is defined as the union with $\ge$ instead of $=$.
Lexicographic ranking supermartingales are defined as a function that decreases with respect to $\succ$ in expectation.

\begin{definition}
	Let $F \colon S \to \Giry S$ be a Markov chain and $S' \subseteq S$ be a measurable subset.
	A \emph{lexicographic ranking supermartingale} is a measurable function $r \colon S \to [0, \infty)^m$ such that for any $x \in S'$, $r(x) \succ (\nexttime r)(x)$ holds where the next-time operator $\nexttime$ is applied to $r$ pointwise.
\end{definition}

\begin{proposition}\label{prop:lexicographic-ranking-supermartingale-soundness}
	The existence of a lexicographic ranking supermartingale witnesses that the probability of staying in $S'$ for infinitely many steps is $0$, i.e., witnesses almost sure reachability to $S \setminus S'$ \cite[Theorem~3.4]{AgrawalPOPL2018}.
	\qed
\end{proposition}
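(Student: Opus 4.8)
The plan is to show that if the process stays in $S'$ forever with positive probability, then some fixed component of the lexicographic ranking supermartingale $r$ becomes negative with positive probability, contradicting non-negativity. First I would fix an initial state $s_0$ and assume $\mathbb{P}_{s_0}[\text{stay in } S' \text{ forever}] > 0$. Along every such trace, for each time step $t$ the condition $r(X_t) \succ (\nexttime r)(X_t)$ holds, so there is a well-defined \emph{level} $\ell_t \in \OneToN{m}$ of this strict decrease; $\ell_t$ is a random variable taking values in a finite set. The key observation is that on the event of staying in $S'$ forever, some level $\ell \in \OneToN{m}$ must be attained infinitely often; partitioning by the minimal such $\ell$, there is a fixed level $\ell^\star$ and a positive-probability event $W$ on which $\ell_t \ge \ell^\star$ for all sufficiently large $t$ and $\ell_t = \ell^\star$ infinitely often.

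Next I would analyse the component process $r_{\ell^\star}(X_t)$ restricted to $W$ (more precisely, after the last time the level drops below $\ell^\star$). By definition of $\succ_{[\ell]}$ and $\succeq$, whenever $\ell_t \ge \ell^\star$ we have, in expectation, $(\nexttime r_{\ell^\star})(X_t) \le r_{\ell^\star}(X_t)$, i.e. $r_{\ell^\star}$ is an ordinary (one-dimensional) nonnegative supermartingale step; and whenever $\ell_t = \ell^\star$ exactly, $(\nexttime r_{\ell^\star})(X_t) \le r_{\ell^\star}(X_t) - 1$, a strict decrease by $1$ in expectation. So $r_{\ell^\star}(X_t)$ behaves like a nonnegative supermartingale that additionally has a $-1$ drift infinitely often. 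The standard argument — e.g. via the supermartingale convergence theorem, or an elementary expectation/Borel–Cantelli computation adapted to the conditioning on $W$ — shows this forces $r_{\ell^\star}(X_t) < 0$ with positive probability, contradicting $r : S \to [0,\infty)^m$. This yields $\mathbb{P}_{s_0}[\text{stay in }S'\text{ forever}] = 0$, which is exactly almost sure reachability of $S \setminus S'$.

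The main obstacle is the bookkeeping around the random levels: the level $\ell_t$ is not determined in advance, so the conditioning needed to turn ``$r_{\ell^\star}$ is a supermartingale with infinitely many $-1$ drifts'' into a genuine martingale-theoretic statement has to be done carefully — one typically conditions on the $\sigma$-algebra generated by the trajectory, or restricts attention to the sub-event where the level has stabilised at $\ge \ell^\star$ from some (random) time onward, and then works with the shifted process. Handling the measurability of these events and the fact that the stopping time at which the level stabilises may be unbounded is the delicate part; everything else (non-negativity giving the contradiction, finiteness of the level set giving a fixed $\ell^\star$) is routine. I would either cite \cite[Theorem~3.4]{AgrawalPOPL2018} for this technical core, or reproduce its proof via the supermartingale convergence theorem applied componentwise along the stabilised level.
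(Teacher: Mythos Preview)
Your proof plan is correct and follows exactly the argument of \cite[Theorem~3.4]{AgrawalPOPL2018}. Note, however, that the paper does not supply its own proof of this proposition: it is stated with a citation and a \texttt{\textbackslash qed}, deferring entirely to Agrawal et al. The same argument structure you outline---take the minimal level $\ell^\star$ occurring infinitely often on a positive-probability event, observe that $r_{\ell^\star}$ is a nonnegative supermartingale with a $-1$ drift at infinitely many steps, and derive a contradiction---is precisely what the paper later reproduces (in the appendix) when proving the soundness of progress-measure supermartingales (Theorem~\ref{thm:progress-measure}), where it explicitly says the proof ``combines the ideas of \cite{AgrawalPOPL2018} and \cite{JurdzinskiSTACS2000}''. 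There the measurability bookkeeping you flag as the main obstacle is handled by constructing an explicit measurable level function $\mathrm{lev}$ via measurable partitions, and the unbounded stabilisation time is dealt with by intersecting with $\{r_l(x_{n'}) \le B\}$ for some $B \in \mathbb{N}$ before running the expectation argument; you may find it useful to look at that proof for the details you would need to fill in.
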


We adapt this idea to Streett conditions to address null recurrence.
The idea is similar to GSSMs (Definition~\ref{def:generalised-streett-supermartingale}): given a Streett pair $(A, B)$, we require a strict decrease for states in $A \setminus B$, a non-strict decrease for states not in $A \cup B$, and do not require any condition for states in $B$.

\begin{definition}\label{def:lexicographic-generalised-streett-supermartingale}
	Let $F \colon S \to \Giry S$ be a Streett Markov chain with a Streett pair $(A, B)$.
	A \emph{lexicographic generalised Streett supermartingale} (LexGSSM) is a function $r \colon S \to [0, \infty)^m$ such that for any $x \in A \setminus B$, we have $(\nexttime r)(x) \prec r(x)$ and for any $x \notin A \cup B$, we have $(\nexttime r)(x) \preceq r(x)$.
\end{definition}

\begin{theorem}
	Let $F \colon S \to \Giry S$ be a Streett Markov chain with a Streett pair $(A, B)$.
	If there exists a LexGSSM, then the Streett condition is almost surely satisfied from any initial state.
\end{theorem}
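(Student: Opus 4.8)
The plan is to reduce to null $(A,B)$-recurrence and then run a contradiction argument in the spirit of the soundness proof for lexicographic ranking supermartingales (Proposition~\ref{prop:lexicographic-ranking-supermartingale-soundness}), with $B$ playing the role of a region one need not control. By Lemma~\ref{lem:streett-null-recurrence} it suffices to prove $\mathbb{P}_{s_0}[\mathrm{step}_{s_0}^{(A,B)} < \infty] = 1$ for every initial state $s_0$. Write $F$ for the event that $B$ is never visited while $A$ is visited infinitely often; this is exactly $\{\mathrm{step}_{s_0}^{(A,B)} = \infty\}$, so the goal is $\mathbb{P}_{s_0}[F] = 0$. On $F$ every $s_n$ lies outside $B$, hence the LexGSSM inequalities apply at every step: at $s_n \in A\setminus B$ we have $(\nexttime r)(s_n) \prec r(s_n)$ and at $s_n \notin A\cup B$ we have $(\nexttime r)(s_n) \preceq r(s_n)$. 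For each state $x\notin B$ I would fix a measurable ``level'' $\lambda(x) \in \{1,\dots,m+1\}$: the minimal $l$ witnessing $(\nexttime r)(x) \prec r(x)$ at level $l$ when such $l$ exists, and $\lambda(x) := m+1$ when instead $(\nexttime r)(x)$ is componentwise $\le r(x)$. Measurability holds because the level predicates are Boolean combinations of the measurable inequalities $(\nexttime r_i)(x)\le r_i(x)$ and $(\nexttime r_i)(x)\le r_i(x)-1$; thus $l_n := \lambda(s_n)$ is adapted to the natural filtration $(\mathcal{F}_n)$.

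Now assume $\mathbb{P}_{s_0}[F] > 0$. On $F$ infinitely many steps satisfy $l_n \le m$, so $j^\ast := \liminf_n l_n \in \{1,\dots,m\}$, and it is enough to show $\mathbb{P}_{s_0}[F\cap\{j^\ast = j\}] = 0$ for each fixed $j$. The key observations are that $l_n \ge j$ implies the $j$-th component is non-increasing in one-step conditional expectation, $\mathbb{E}[r_j(s_{n+1})\mid\mathcal{F}_n] = (\nexttime r_j)(s_n) \le r_j(s_n)$, while $l_n = j$ forces a drop of at least $1$. On $F\cap\{j^\ast=j\}$ one has $l_n \ge j$ eventually and $l_n = j$ infinitely often, so beyond some (random) time $r_j(s_n)$ becomes a nonnegative supermartingale that decreases by $\ge 1$ in conditional expectation infinitely often — which is impossible for a nonnegative process. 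Once $\mathbb{P}_{s_0}[F]=0$ is established, null $(A,B)$-recurrence from every $s_0$ follows, and Lemma~\ref{lem:streett-null-recurrence} yields $\mathbb{P}_{s_0}[\mathbf{Streett}(A,B)] = 1$ for all $s_0$.

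To make the last step rigorous I would decompose over a \emph{fixed} stabilization time: $F\cap\{j^\ast=j\} = \bigcup_{N_0\in\mathbb{N}} D_{N_0}$ where $D_{N_0} = \{s_n\notin B\ \forall n\}\cap\{l_n\ge j\ \forall n\ge N_0\}\cap\{l_n = j\text{ infinitely often}\}$. For fixed $N_0$, with the stopping time $\sigma := \inf\{n\ge N_0 : l_n < j\text{ or }s_n\in B\}$, the stopped process $W_k := r_j(s_{(N_0+k)\wedge\sigma})$ is a nonnegative supermartingale by the one-step inequalities above. The main obstacle is that $r_j(s_{N_0})$ need not be integrable — the LexGSSM constraints control $\nexttime r$ pointwise at non-$B$ states but say nothing about its expectation after excursions through $B$ or through uncontrolled components — so $L^1$-martingale convergence is not directly available. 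This is handled by two standard moves: first, conditioning on $\mathcal{F}_{N_0}$ reduces to an $L^1$-bounded conditional supermartingale, so every nonnegative supermartingale converges a.s.\ to a finite limit; hence on $D_{N_0}$ the quantity $C := \sup_k r_j(s_{N_0+k})$ is a.s.\ finite. Second, for $c\in\mathbb{N}$ the truncation $Z^{(c)}_k := \min(W_k, c)$ is a supermartingale bounded by $c$ (by concavity of $\min(\cdot,c)$), so $\sum_k \mathbb{E}[Z^{(c)}_k - Z^{(c)}_{k+1}\mid \mathcal{F}_{N_0+k}] < \infty$ a.s.\ (its expectation is $\le \mathbb{E}[Z^{(c)}_0] \le c$). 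On $D_{N_0}\cap\{C\le c\}$ the truncation is inert and each of the infinitely many $l_n = j$ steps contributes at least $1$ to that sum, forcing $\mathbb{P}[D_{N_0}\cap\{C\le c\}] = 0$; unions over $c$, then $N_0$, then $j\in\{1,\dots,m\}$ give $\mathbb{P}_{s_0}[F] = 0$. As a sanity check, the case $m=1$ of this theorem coincides with the soundness direction of Theorem~\ref{thm:gssm-sound-complete}, since a one-component LexGSSM is precisely a GSSM.
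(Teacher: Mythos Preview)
Your argument is correct, but it takes a different route from the paper. The paper does not prove this theorem directly; it obtains it as a corollary of two later results, namely the translation from LexGSSMs to (reduced) LexPMSMs (Proposition~\ref{prop:lexgssm-to-lexpmsm}) and the soundness of LexPMSMs (Theorem~\ref{thm:lexicographic-progress-measure}). The heavy lifting is then done in the appendix proof of Theorem~\ref{thm:progress-measure-general-parity-condition}, which runs exactly the kind of contradiction argument you sketch (construct a measurable level function, fix the minimal infinitely-recurring level, find a stabilization time, build a stopped nonnegative supermartingale that must become negative), but phrased for general parity conditions. So the underlying probabilistic mechanism is the same; what differs is the packaging. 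Your approach applies the lexicographic-supermartingale argument directly to the Streett pair $(A,B)$, which is shorter and self-contained, at the cost of not immediately reusing the work for parity conditions. The paper's approach pays the overhead of the LexGSSM $\to$ LexPMSM translation but then gets both soundness results from a single proof.

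One small point worth tightening in your write-up: when you say ``on $D_{N_0}\cap\{C\le c\}$ the truncation is inert and each $l_n=j$ step contributes at least $1$ to that sum,'' the sum $\sum_k \mathbb{E}[Z^{(c)}_k - Z^{(c)}_{k+1}\mid\mathcal{F}_{N_0+k}]$ is a random variable that is not restricted to $D_{N_0}\cap\{C\le c\}$. The clean statement is that the $\mathcal{F}_{N_0+k}$-measurable event $E_k := \{\sigma > N_0+k\}\cap\{l_{N_0+k}=j\}\cap\{r_j(s_{N_0+k})\le c\}$ satisfies $\Delta_k \ge \mathbf{1}_{E_k}$ pointwise (using $\min(\cdot,c)\le (\cdot)$ on the conditional term and $r_j(s_{N_0+k})\le c$ on the current term), and that $D_{N_0}\cap\{C\le c\}\subseteq \{\sum_k \mathbf{1}_{E_k}=\infty\}$; this is what your argument actually uses.
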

\begin{proof}
	This is a consequence of the equivalence with lexicographic progress-measure supermartingales and their soundness (Proposition~\ref{prop:lexgssm-to-lexpmsm} and Theorem~\ref{thm:lexicographic-progress-measure}), which we will show later.
\end{proof}

Obviously, LexGSSMs subsume GSSMs, and thus the class of problems verifiable by LexGSSMs is a super-class of positive recurrence.
It is actually a proper super-class, as shown below.
\begin{example}\label{ex:lexgssm}
	Consider the Markov chain in Example~\ref{ex:generalised-streett-supermartingale} again, and suppose that a Streett condition is given as $A = \{ x \in \mathbb{N} \mid x \ge 2 \}$ and $B = \{ 1 \}$.
	In this case, there is no GSSM as we have seen in Example~\ref{ex:generalised-streett-supermartingale}.
	However, there exists a LexGSSM $r \colon \mathbb{N} \to [0, \infty)^2$ defined by $r(0) = (1, 0)$ and $r(x) = (0, x - 1)$ for $x > 0$.
	Note that the condition for LexGSSMs is satisfied even when $x = 0 \notin A \cup B$: $\nexttime r(0) = (0, \infty) \preceq r(0) = (1, 0)$.
	\qed
\end{example}

On the other hand, LexGSSMs are not complete for null recurrence.
\begin{example}\label{ex:lexgssm-not-complete}
	Consider the symmetric random walk on $\mathbb{Z}$ with a Streett condition given as $A = \mathbb{Z} \setminus \{ 0 \}$ and $B = \{ 0 \}$.
	This is the well-known recurrence property of the 1-dimensional symmetric random walk.
	Although the Streett condition is almost surely satisfied from any initial state, there is no LexGSSM for this problem.
	The proof is given in \referappendix{Proposition}{B.1}{prop:lexgssm-not-complete}.
	\qed
\end{example}

\begin{remark}
	The notions of GSSM, LexGSSM, and DVSSM can be understood in analogy with their counterparts for (almost-sure) termination: ranking supermartingales, lexicographic ranking supermartingales, and distribution-valued ranking supermartingales, respectively.
	In this correspondence, positive recurrence for Streett conditions corresponds to positive almost-sure termination, for which GSSMs and ranking supermartingales provide sound and complete certificates.
	Similarly, null recurrence corresponds to almost-sure termination, with DVSSMs and distribution-valued ranking supermartingales serving as sound and complete certificates.
	LexGSSMs lie between positive and null recurrence, just as lexicographic ranking supermartingales lie between positive almost-sure termination and almost-sure termination.
\end{remark}

\begin{toappendix}
\begin{proposition}\label{prop:lexgssm-not-complete}
	There is no LexGSSM for the recurrence property of the 1-dimensional symmetric random walk (Example~\ref{ex:lexgssm-not-complete}).
\end{proposition}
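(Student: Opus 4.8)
The plan is to argue by contradiction: assume a LexGSSM $r = (r_1, \dots, r_m) : \mathbb{Z} \to [0, \infty)^m$ exists for the pair $(A, B) = (\mathbb{Z} \setminus \{0\}, \{0\})$, and derive that $\mathbb{Z} \setminus \{0\}$ must be finite. Since $A \cup B = \mathbb{Z}$, the only constraint on $r$ is $(\nexttime r)(x) \prec r(x)$ for every $x \neq 0$, which means that each such $x$ admits a \emph{level} $\ell(x) \in \OneToN{m}$ with $\tfrac12 r_i(x-1) + \tfrac12 r_i(x+1) \le r_i(x)$ for all $i < \ell(x)$ and $\tfrac12 r_{\ell(x)}(x-1) + \tfrac12 r_{\ell(x)}(x+1) \le r_{\ell(x)}(x) - 1$. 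Fixing one level per state, the sets $P_j := \{x \neq 0 \mid \ell(x) = j\}$ partition $\mathbb{Z} \setminus \{0\}$, so it suffices to show that each $P_j$ is finite.

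I will prove this by induction on $j$. Put $Q_j := \{0\} \cup P_1 \cup \dots \cup P_{j-1}$, which is finite by the induction hypothesis, and choose $K$ with $Q_j \subseteq [-K, K]$. Every $x \notin Q_j$ has $\ell(x) \ge j$, so, writing $d_j(x) := r_j(x+1) - r_j(x)$, the $j$-th component inequality yields $d_j(x) \le d_j(x-1)$ for all such $x$, and $d_j(x) \le d_j(x-1) - 2$ whenever in addition $x \in P_j$. Consequently $d_j$ is non-increasing on $\{K, K+1, \dots\}$ and non-decreasing along $\{-K-1, -K-2, \dots\}$. Because $r_j \ge 0$, this monotonicity forces $d_j \ge 0$ on $[K, \infty)$ and $d_j \le 0$ on $(-\infty, -K-1]$: indeed, a single slope of the wrong sign would, by monotonicity, persist along an entire ray and drive $r_j$ to $-\infty$ there, contradicting non-negativity.

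Now I count $P_j$. To the right of $K$, each element of $P_j$ forces $d_j$ to drop by at least $2$, while $d_j$ remains $\ge 0$ and non-increasing; telescoping along $K, K+1, \dots$ gives $|P_j \cap (K, \infty)| \le d_j(K)/2$. Symmetrically, $|P_j \cap (-\infty, -K)| \le -d_j(-K-1)/2$, using that $d_j \le 0$ and non-decreasing along the negative ray. With the trivial bound $|P_j \cap [-K, K]| \le 2K+1$, we conclude that $P_j$ is finite, completing the induction. Hence $\mathbb{Z} \setminus \{0\} = \bigcup_{j=1}^m P_j$ would be finite, a contradiction.

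The main obstacle is the bookkeeping around the ``hole'' at $0$ (and, at the inductive stage, the finitely many lower-level points): it is exactly this hole that permits a non-trivial first component such as $r_1(x) = |x|$, which satisfies the $j=1$ inequality with equality off $0$, so no global concavity/affineness argument is available and one must localise to the two rays $x \gg 0$ and $x \ll 0$. Everything else — the sign analysis of the slopes $d_j$ and the telescoping counts, where each strict level-$j$ step costs $2$ in slope because it is a decrease by $1$ in $r_j$ — is routine.
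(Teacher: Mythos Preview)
Your proof is correct and rests on the same core observation as the paper's: once all lower levels are confined to a bounded interval, the slope $d_j(x)=r_j(x+1)-r_j(x)$ is monotone along each ray, non-negativity of $r_j$ pins its sign, and every strict level-$j$ point forces a drop of at least $2$ in that slope. The only organisational difference is that the paper, instead of inducting on $j$ to show each $P_j$ finite, applies pigeonhole directly---some level $l$ must occur infinitely often among the positive integers, take the minimum such $l$, and the identical slope analysis on the single ray $x\gg 0$ drives $\Delta r_l$ (and hence $r_l$) below zero; this is shorter since one ray already yields the contradiction, whereas your inductive framing treats both rays symmetrically.
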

\begin{proof}
	We prove this by contradiction.
	Suppose that there exists a LexGSSM $r \colon \mathbb{Z} \to [0, \infty)^m$.
	Then, for any $x \in \mathbb{Z} \setminus \{ 0 \}$, we have
	\[ r(x) \quad\succ\quad \frac{1}{2} r(x - 1) + \frac{1}{2} r(x + 1) \]
	Let $l$ be the minimum level of the above lexicographic ordering that occurs infinitely many times for $x > 0$.
	Then, the $l$-th component $r_l \colon \mathbb{Z} \to [0, \infty)$ satisfies $r_l(x) \ge \frac{1}{2} r_l(x - 1) + \frac{1}{2} r_l(x + 1)$ for all $x > 0$ and $r_l(x) \ge 1 + \frac{1}{2} r_l(x - 1) + \frac{1}{2} r_l(x + 1)$ for infinitely many $x > 0$.
	Let $\Delta r_l(x) \coloneqq r_l(x + 1) - r_l(x)$.
	Then, we have $\Delta r_l(x - 1) \ge \Delta r_l(x)$ for all $x > 0$ and $\Delta r_l(x - 1) \ge 2 + \Delta r_l(x)$ for infinitely many $x > 0$.
	For sufficiently large $x$, this implies $\Delta r_l(x) \le -1$.
	This further implies that for sufficiently large $x$, $r_l(x)$ becomes negative, which contradicts the non-negativity of $r_l$.
\end{proof}
\end{toappendix}

\section{Progress-Measure Supermartingales}
\label{sec:progress-measure}

So far, we have considered Streett conditions.
We now turn to another acceptance condition, namely, parity conditions.
In the literature of model checking, \emph{parity progress measures} \cite{JurdzinskiSTACS2000} are well-known as certificates for parity conditions on non-probabilistic systems.
This naturally raises the question of whether parity progress measures can be extended to Markov chains, and, if so, how powerful they are as certificates for $\omega$-regular verification of Markov chains.
To the best of our knowledge, this question has not been addressed in the existing literature.

In this section, we give an affirmative answer to this question by proposing a new type of supermartingales, which we call \emph{progress-measure supermartingales} (PMSMs).
Furthermore, we show that the class of properties verifiable by a lexicographic extension of PMSMs, which we call \emph{lexicographic progress-measure supermartingales} (LexPMSMs), is equal to the class verifiable by LexGSSMs.
We show this results by providing mutual translations between LexPMSMs and LexGSSMs.

\subsection{Preliminaries on Parity Progress Measures}\label{sec:preliminaries-parity-progress-measures}

We review the notion of parity progress measures for parity graphs~\cite{JurdzinskiSTACS2000} and give an idea of how to extend it to Markov chains.
For brevity, we often refer to parity progress measures simply as \emph{progress measures} below.

A \emph{parity graph} is a tuple $(V, E, p)$ where $V$ is a set of vertices, $E \subseteq V \times V$ is a set of edges, and $p \colon V \to \OneToN{d}$ is a priority function.
Consider the problem of verifying that any infinite walk in a parity graph (i.e., an infinite sequence of vertices $v_1, v_2, \dots$ such that $(v_i, v_{i + 1}) \in E$ for each $i$) satisfies the parity condition.

The idea of progress measures is to assign a $d$-dimensional vector of natural numbers to each vertex in the graph such that the vectors decrease with respect to lexicographic ordering truncated by the priorities of vertices along edges.

\begin{definition}
	The strict lexicographic ordering $>_{\mathrm{lex}}$ on $\mathbb{N}^d$ is defined in the standard way.
	\[ (n_1, n_2, \dots, n_d) >_{\mathrm{lex}} (n'_1, n'_2, \dots, n'_d) \quad\iff\quad \exists l, (\forall i, 1 \le i < l \implies n_i = n'_i) \land n_l > n'_l \]
	The non-strict lexicographic ordering $\ge_{\mathrm{lex}}$ is defined by $({\ge}_{\mathrm{lex}}) \coloneqq ({>}_{\mathrm{lex}}) \cup ({=})$.
	The truncation to the first $i$ components is denoted by $(n_1, n_2, \dots, n_d) {\downharpoonright} i = (n_1, n_2, \dots, n_i)$.
	For each $i \in \OneToN{d}$ and $x, y \in \mathbb{N}^d$, we define \emph{non-strict truncated lexicographic ordering} $x \ge_i y$ by $x {\downharpoonright} i \ge_{\mathrm{lex}} y {\downharpoonright} i$, and the strict one $x >_i y$ is defined similarly.
\end{definition}

\begin{definition}
	A \emph{progress measure} for a parity graph $(V, E, p)$ is a function $r \colon V \to \mathbb{N}^d$ such that for each $(v, w) \in E$, we have $r(v) \ge_{p(v)} r(w)$ if $p(v)$ is even and $r(v) >_{p(v)} r(w)$ if $p(v)$ is odd \cite[Definition~3]{JurdzinskiSTACS2000}.
\end{definition}

The strict inequality for odd priorities prohibits odd priorities from being the minimum priority that occurs infinitely many times.
We give a proof of the soundness of progress measures here, as we will later extend them to the probabilistic case so that the same argument can be applied.

\begin{proposition}
	If there exists a progress measure $r$ for a parity graph $(V, E, p)$, then any infinite walk in the graph satisfies the parity condition \cite[Proposition~4]{JurdzinskiSTACS2000}.
\end{proposition}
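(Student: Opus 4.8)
The plan is to argue by contradiction along a fixed infinite walk, using a truncated lexicographic projection of $r$ as a ranking quantity. Fix an infinite walk $v_1 v_2 \dots$ in $(V, E, p)$ and let $q$ be the minimum priority occurring infinitely often along it (the analogue of $\min \mathbf{Inf}_p$ for walks). Suppose, for contradiction, that $q$ is odd. First I would choose an index $N$ with $p(v_i) \ge q$ for all $i \ge N$; such an $N$ exists because $q$ is the least priority that recurs infinitely often, so every priority strictly below $q$ occurs only finitely often.

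The key lemma I would isolate is a monotonicity property of truncation: if $q \le j$ and $x \ge_j y$ in $\mathbb{N}^d$, then $x \downharpoonright q \ge_{\mathrm{lex}} y \downharpoonright q$. This is immediate from the definitions: either $x$ and $y$ agree on their first $j$ components, hence on their first $q$ components, or they first differ at some position $l \le j$, and whether $l \le q$ or $l > q$ the conclusion follows because the first $q$ coordinates form a prefix of the first $j$. Applying this at each $i \ge N$ with $j = p(v_i) \ge q$, the progress-measure condition $r(v_i) \ge_{p(v_i)} r(v_{i+1})$ (which holds even in the odd case, where it is in fact strict) yields $r(v_i) \downharpoonright q \ge_{\mathrm{lex}} r(v_{i+1}) \downharpoonright q$. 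Moreover, at the infinitely many indices $i \ge N$ with $p(v_i) = q$, since $q$ is odd the strict clause of the definition of a progress measure gives $r(v_i) >_q r(v_{i+1})$, i.e.\ $r(v_i) \downharpoonright q >_{\mathrm{lex}} r(v_{i+1}) \downharpoonright q$.

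Finally I would derive the contradiction. Listing the indices $N \le i_1 < i_2 < \dots$ with $p(v_{i_k}) = q$, the strict step at $i_k$ followed by the non-increasing steps from $i_k + 1$ up to $i_{k+1}$ (all these priorities are $\ge q$) give $r(v_{i_k}) \downharpoonright q >_{\mathrm{lex}} r(v_{i_{k+1}}) \downharpoonright q$ for every $k$, an infinite strictly descending chain in $\mathbb{N}^q$ under the lexicographic order. Since $\mathbb{N}^q$ ordered lexicographically is well-founded, being a finite lexicographic power of the well-order $\mathbb{N}$, this is impossible; hence $q$ must be even and the walk satisfies the parity condition. I do not expect a real obstacle: the only delicate point is stating the truncation monotonicity lemma precisely, which is exactly where the \emph{truncated} (rather than full) lexicographic orderings in the definition of progress measures earn their keep, and I expect this same contradiction-along-a-run skeleton to reappear, with expectation/martingale arguments replacing the pointwise descent, when the notion is lifted to Markov chains.
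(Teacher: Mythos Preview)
Your proof is correct and follows essentially the same approach as the paper's: argue by contradiction, truncate $r$ to its first $q$ components, use the truncation-monotonicity lemma (which is exactly the paper's Lemma~\ref{lem:truncation-lexicographic-ordering}) to obtain a non-increasing sequence with infinitely many strict drops, and invoke well-foundedness of $>_{\mathrm{lex}}$ on $\mathbb{N}^q$. Your explicit choice of a tail index $N$ beyond which all priorities are $\ge q$ is in fact slightly more careful than the paper's phrasing, which writes $r(v_1) \ge_{p_{\min}} r(v_2) \ge_{p_{\min}} \dots$ from the start; and your closing remark correctly anticipates how the paper lifts this skeleton to the probabilistic setting.
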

\begin{proof}
	We slightly modify the original proof so that it does not depend on the finiteness of the graph.
	We prove by contradiction.
	Let $v_1 \to v_2 \to \dots$ be an infinite walk in the graph and $p_{\min}$ be the minimum priority of vertices that occurs infinitely many times in the walk.
	Assume that $p_{\min}$ is odd.
	Consider the first $p_{\min}$ components of the progress measure $r$.
	Since the lexicographic ordering is preserved by truncation (Lemma~\ref{lem:truncation-lexicographic-ordering}), and by definition of a progress measure, we have $r(v_1) \ge_{p_{\min}} r(v_2) \ge_{p_{\min}} \dots$.
	Here, the strict inequality $p(v_i) >_{p_{\min}} p(v_{i + 1})$ holds for infinitely many $i$'s because $p_{\min}$ is odd and $p(v_i) = p_{\min}$ holds for infinitely many $i$'s.
	This contradicts the well-foundedness of the strict lexicographic ordering $>_{p_{\min}}$.
\end{proof}

By analysing the proof above, we can see that the essence of progress measures is the use of a strict and non-strict lexicographic ordering such that (1) the strict one prohibits infinite descending chains and (2) truncation preserves the non-strict lexicographic ordering in the following sense.

\begin{lemma}\label{lem:truncation-lexicographic-ordering}
	If $i \le j$ and $x \ge_j y$, then $x \ge_i y$.
	\qed
\end{lemma}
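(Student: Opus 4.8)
The plan is to unfold the definitions of $\ge_i$ and $\ge_j$ directly and proceed by a short case analysis. Recall that $x \ge_j y$ abbreviates $x \downharpoonright j \ge_{\mathrm{lex}} y \downharpoonright j$, and that $\ge_{\mathrm{lex}}$ is by definition the union $({>}_{\mathrm{lex}}) \cup ({=})$. So I would first split on which of these two disjuncts holds for $x \downharpoonright j$ and $y \downharpoonright j$.

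If $x \downharpoonright j = y \downharpoonright j$, then since $i \le j$ the first $i$ coordinates of $x$ and of $y$ already coincide, i.e.\ $x \downharpoonright i = y \downharpoonright i$, which gives $x \ge_i y$ immediately. If instead $x \downharpoonright j >_{\mathrm{lex}} y \downharpoonright j$, then by definition of $>_{\mathrm{lex}}$ there is a witnessing index $l$ with $1 \le l \le j$ such that $x_k = y_k$ for all $k < l$ and $x_l > y_l$. Here I would further distinguish $l \le i$ from $l > i$. In the first case the same index $l$ witnesses $x \downharpoonright i >_{\mathrm{lex}} y \downharpoonright i$ — the required coincidences before $l$ and the strict increase at $l$ all take place within the first $i$ coordinates — so $x >_i y$ and a fortiori $x \ge_i y$. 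In the second case every one of the indices $1, \dots, i$ lies strictly before $l$, hence $x \downharpoonright i = y \downharpoonright i$ and again $x \ge_i y$. This exhausts the cases, proving Lemma~\ref{lem:truncation-lexicographic-ordering}.

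I do not expect any real obstacle: this is the familiar fact that the lexicographic order restricted to a prefix is coarser than on the whole tuple. The only mild care needed is that, because $\ge_{\mathrm{lex}}$ is defined as $({>}_{\mathrm{lex}}) \cup ({=})$ rather than through a single inequality manipulation, the equality case and the strict-inequality case must be handled separately, and inside the strict case one must keep track of where the witnessing index $l$ sits relative to the truncation point $i$.
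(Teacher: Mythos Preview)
Your proof is correct. The paper does not actually supply a proof of this lemma---it is stated with a bare \qed\ and treated as immediate---so there is nothing to compare against; your case analysis on whether $x \downharpoonright j = y \downharpoonright j$ or $x \downharpoonright j >_{\mathrm{lex}} y \downharpoonright j$, and in the latter case on the position of the witnessing index $l$ relative to $i$, is exactly the standard unfolding one would expect.
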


\subsection{Progress Measures for Parity Markov Chains}

As we have discussed in Section~\ref{sec:preliminaries-parity-progress-measures}, progress measures are defined using truncated lexicographic orderings.
Now, we adapt this idea to Markov chains by replacing the standard lexicographic ordering with the one used for lexicographic ranking supermartingales (Definition~\ref{def:lexicographic-ranking-supermartingale-order}).
We define a binary relation $\succeq_i$ on $[0, \infty)^d$ by $x \succeq_i y$ if $x {\downharpoonright} i \succeq y {\downharpoonright} i$ where $i \in \OneToN{d}$.
The strict version $\succ_i$ is defined similarly.
Then, we define progress measures for Markov chains as follows.

\begin{definition}\label{def:progress-measure}
	Let $F \colon S \to \Giry S$ be a parity Markov chain with a priority function $p \colon S \to \OneToN{d}$.
	A \emph{progress-measure supermartingale} (PMSM) is a non-negative measurable function $r \colon S \to [0, \infty)^d$ such that the following conditions hold:
	for each $x \in S$, if $p(x)$ is even, then $r(x) \succeq_{p(x)} (\nexttime r)(x)$; and if $p(x)$ is odd, then $r(x) \succ_{p(x)} (\nexttime r)(x)$.
\end{definition}

\begin{theorem}\label{thm:progress-measure}
	Let $F \colon S \to \Giry S$ be a parity Markov chain with a priority function $p$.
	If there exists a PMSM, then the parity condition is satisfied almost surely for any initial state $x_0 \in S$.
\end{theorem}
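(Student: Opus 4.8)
The plan is to reduce the parity condition to a finite conjunction of Streett conditions, translate each into a null-recurrence statement, observe that a truncation of the PMSM is a lexicographic-generalised-Streett-type certificate for that pair, and finally prove directly (so as to avoid circularity with the soundness theorem for LexGSSMs, whose proof is deferred) that such a certificate forces null recurrence, via a nonnegative-supermartingale argument in the style of \cite[Thm~3.4]{AgrawalPOPL2018}.

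First, by padding the priority range we may assume $d = 2d'$, so by Lemma~\ref{lem:parity-streett} we have $\mathbf{Parity}(p) = \bigcap_{i=1}^{d'} \mathbf{Streett}(S_{\le 2i-1}, S_{\le 2i-2})$. Since a finite intersection of probability-one events has probability one, it suffices to fix $i$ and show that $\mathbf{Streett}(A,B)$ holds almost surely from every $x_0$, where $A = S_{\le 2i-1}$ and $B = S_{\le 2i-2}$; note $A \setminus B = \{x : p(x) = 2i-1\}$ (an odd priority) and $S \setminus (A \cup B) = \{x : p(x) \ge 2i\}$. By Lemma~\ref{lem:streett-null-recurrence} this is equivalent to null $(A,B)$-recurrence from every $x_0$.

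Next, put $\rho := r \downharpoonright (2i-1) : S \to [0,\infty)^{2i-1}$, the first $2i-1$ components of $r$. Because $\nexttime$ acts componentwise, truncation commutes with it, i.e.\ $(\nexttime r)(x) \downharpoonright (2i-1) = (\nexttime \rho)(x)$. Hence for $x \in A \setminus B$ (so $p(x)=2i-1$ is odd), the PMSM condition $r(x) \succ_{2i-1} (\nexttime r)(x)$ yields $\rho(x) \succ (\nexttime \rho)(x)$; and for $x \notin A\cup B$ (so $p(x) \ge 2i > 2i-1$), the PMSM condition $r(x) \succeq_{p(x)} (\nexttime r)(x)$, combined with the $\succeq$-version of the truncation Lemma~\ref{lem:truncation-lexicographic-ordering} (which holds by the same argument), gives $r(x) \succeq_{2i-1} (\nexttime r)(x)$, i.e.\ $\rho(x) \succeq (\nexttime \rho)(x)$. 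Thus $\rho$ is exactly a LexGSSM for $(A,B)$ in the sense of Definition~\ref{def:lexicographic-generalised-streett-supermartingale}, and it remains to show that the existence of a LexGSSM entails null $(A,B)$-recurrence.

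Finally, suppose for contradiction that $\mathbb{P}_{x_0}[\mathrm{step}^{(A,B)}_{x_0} = \infty] > 0$ for some $x_0$; unfolding one step at a time as in the appendix proof of Lemma~\ref{lem:streett-null-recurrence}, there is a reachable state $s$ with $\mathbb{P}_s[C] > 0$, where $C$ is the event ``the trajectory stays in $S \setminus B$ forever and visits $A \setminus B$ infinitely often''. Fix a measurable level map $\mathrm{lev} : S \setminus B \to \{1,\dots,2i-1,\infty\}$ picking, for each $x$, a level witnessing $\rho(x) \succ (\nexttime\rho)(x)$ (value $\infty$ when only the componentwise $\rho(x) \ge (\nexttime\rho)(x)$ holds), so that $\mathrm{lev}(x) \le 2i-1$ on $A \setminus B$. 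One proves by induction on $j = 1,\dots,2i-1$ that, almost surely on $C$, $\{t : \mathrm{lev}(s_t) = j\}$ is finite: for $j=1$, $\rho_1$ stopped at the exit time of $S\setminus B$ is a nonnegative supermartingale (since $\mathrm{lev}(x) \ge 1$ forces $\rho_1(x) \ge (\nexttime\rho_1)(x)$), hence converges a.s.\ with integrable Doob compensator $A_\infty$, while $A_\infty \ge \#\{t : \mathrm{lev}(s_t)=1\}$ on $C$; for the step, after the a.s.\ finite last visit to levels $<j$, the shifted sub-vector $(\rho_j,\dots,\rho_{2i-1})$ is again a LexGSSM on the remaining region and the same estimate applies to $\rho_j$. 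Summing over $j$, almost surely on $C$ every level in $\{1,\dots,2i-1\}$ occurs finitely often; but on $C$ there are infinitely many visits to $A\setminus B$, each with a level in $\{1,\dots,2i-1\}$, so by pigeonhole some level occurs infinitely often — a contradiction, giving $\mathbb{P}_s[C]=0$ and hence null $(A,B)$-recurrence. The main obstacle is the inductive step: ``the last visit to levels $<j$'' is not a stopping time, so restarting the supermartingale argument past it needs care; this is exactly the technical heart of \cite[Thm~3.4]{AgrawalPOPL2018}, which I would adapt (or re-derive via a last-exit/conditioning decomposition of $C$).
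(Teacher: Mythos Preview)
Your proposal is correct, and the core supermartingale argument you sketch is essentially the one the paper uses; the difference lies in how the reduction to that core argument is organised.

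You first factor the parity condition through Streett pairs (Lemma~\ref{lem:parity-streett}) and observe that the truncation $r\downharpoonright(2i-1)$ is a LexGSSM for the $i$-th pair --- this is exactly the content of Proposition~\ref{prop:lexpmsm-to-lexgssm}, which you rediscover en route. You then prove LexGSSM soundness by an inductive level argument. The paper instead works directly with the parity condition: it assumes an odd priority $k$ is the minimum infinitely-recurring one with positive probability, identifies the minimum infinitely-recurring level $l$, and via repeated countable decomposition extracts a \emph{deterministic} time $n'$ after which all levels are $\ge l$ and a deterministic bound $B$ on $r_l(x_{n'})$. It then builds a single stopped supermartingale $Y_m$ (stopping at the first time level drops below $l$) and derives a contradiction with non-negativity.

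The practical difference is precisely the obstacle you flag: your induction step needs to restart past the last visit to levels $<j$, which is not a stopping time. The paper's one-shot pigeonhole argument sidesteps this entirely: by passing to a positive-probability subevent with deterministic $n'$ and $B$, the stopped process $Y_m$ is an honest supermartingale on the whole space from time $n'$ onward, with no last-exit conditioning needed. Your route is more modular (it isolates the LexGSSM soundness as a standalone fact), while the paper's is shorter and avoids the delicate step you would otherwise have to import from \cite{AgrawalPOPL2018}.
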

\begin{proofsketch}
	Intuitively, this theorem holds because the strict lexicographic ordering $\succ$ prohibits infinite descending chains (Proposition~\ref{prop:lexicographic-ranking-supermartingale-soundness}) and truncation preserves the non-strict lexicographic ordering $\succeq$ similarly to Lemma~\ref{lem:truncation-lexicographic-ordering}.
	We give a formal proof in \referappendix{Appendix}{D.1}{sec:proof-progress-measure}.
\end{proofsketch}

\subsection{A Lexicographic Extension of Progress-Measure Supermartingales}

For each priority $i$, the $i$-th component of a PMSM $r \colon S \to [0, \infty)^d$ in Definition~\ref{def:progress-measure} is a function $r_i \colon S \to [0, \infty)$ whose codomain is one-dimensional.
We generalise this by allowing a vector of non-negative real numbers $r_i \colon S \to [0, \infty)^{m_i}$ for each priority $i$ where each vector is ordered lexicographically.
We call the resulting supermartingales \emph{lexicographic progress-measure supermartingales} (LexPMSMs).
This extension has better theoretical properties than PMSMs as we will see later in Section~\ref{sec:properties-lexpmsm}.

\begin{definition}
	Let $m_1, \dots, m_d > 0$ be positive integers.
	We extend the lexicographic ordering $\succ$ and $\succeq$ on $[0, \infty)^d$ to \emph{nested lexicographic ordering} $\succ^{(2)}$ and $\succeq^{(2)}$ on $[0, \infty)^{m_1} \times \dots \times [0, \infty)^{m_d}$ as follows.
	For each $(r_1, \dots, r_d), (r'_1, \dots, r'_d) \in [0, \infty)^{m_1} \times \dots \times [0, \infty)^{m_d}$,
	\begin{align}
		(r_1, \dots, r_d) \succ^{(2)} (r'_1, \dots, r'_d) \quad&\iff\quad \exists l, (\forall i, 1 \le i < l \implies r_i \succeq r'_i) \land r_l \succ r'_l \\
		r \succeq^{(2)} r' \quad&\iff\quad r \succ^{(2)} r' \lor (\forall i, r_i \succeq r'_i)
	\end{align}
\end{definition}
The definition of $\succ^{(2)}$ is a straightforward extension of the lexicographic ordering $\succ$ in Definition~\ref{def:lexicographic-ranking-supermartingale-order}.
If $m_1 = \dots = m_n = 1$, then $\succ^{(2)}$ and $\succeq^{(2)}$ are equivalent to $\succ$ and $\succeq$.
We define truncated nested lexicographic ordering $\succeq^{(2)}_i$ and $\succ^{(2)}_i$ in the same way as $\succeq_i$ and $\succ_i$.
One may think that handling nested lexicographic ordering $\succ^{(2)}$ is a bit complicated, but in fact, it can be flattened into unnested lexicographic ordering $\succ$ as shown below.

\begin{lemma}\label{lem:flatten-lexicographic-ordering}
	Let $\mathrm{flat} \colon [0, \infty)^{m_1} \times \dots \times [0, \infty)^{m_d} \to [0, \infty)^{m_1 + \dots + m_d}$ be a function that flattens a nested vector into a flat vector by concatenation.
	Then, we have the following equivalences.
	\begin{align}
		(r_1, \dots, r_d) \succ^{(2)} (r'_1, \dots, r'_d) \quad&\iff\quad \mathrm{flat}(r_1, \dots, r_d) \succ \mathrm{flat}(r'_1, \dots, r'_d) \\
		(r_1, \dots, r_d) \succeq^{(2)} (r'_1, \dots, r'_d) \quad&\iff\quad \mathrm{flat}(r_1, \dots, r_d) \succeq \mathrm{flat}(r'_1, \dots, r'_d)
		\tag*{\qed}
	\end{align}
\end{lemma}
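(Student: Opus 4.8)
The plan is to unfold both orderings in terms of their \emph{levels} and match them up. Recall from Definition~\ref{def:lexicographic-ranking-supermartingale-order} that $\mathrm{flat}(r_1,\dots,r_d) \succ \mathrm{flat}(r'_1,\dots,r'_d)$ holds exactly when there is an index $k \in \OneToN{m_1 + \dots + m_d}$ --- a level of the comparison --- such that every flat component of index $< k$ is $\ge$ its counterpart and the $k$-th flat component is $\ge 1$ plus its counterpart. Every such $k$ sits in a unique block, i.e.\ $k = m_1 + \dots + m_{l-1} + j$ for unique $l \in \OneToN{d}$ and $j \in \OneToN{m_l}$. First I would prove the ``$\impliedby$'' direction of the strict equivalence by reading this decomposition off: blocks $1, \dots, l-1$ are componentwise $\ge$ (hence $\succeq$) their counterparts, and inside block $l$ one has $r_l \succ r'_l$ witnessed by the level $j$ --- which is exactly a witness for $(r_1,\dots,r_d) \succ^{(2)} (r'_1,\dots,r'_d)$.

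For ``$\implies$'', I would start from a witnessing block $l$ for $\succ^{(2)}$ and confront the one genuine subtlety: the intermediate blocks are only assumed to satisfy $r_i \succeq r'_i$, which is weaker than being componentwise $\ge$, so one cannot immediately produce a flat level. The fix is a normalisation: take $i_0 \le l$ to be the \emph{least} index with $r_{i_0} \succ r'_{i_0}$, which exists since $r_l \succ r'_l$. Then, by minimality of $i_0$ and the fact that $\succeq$ is the union of $\succ$ with the componentwise order $\ge$, every block $i < i_0$ is componentwise $\ge$ its counterpart, while $r_{i_0} \succ r'_{i_0}$ supplies a within-block level $j$; hence $k = m_1 + \dots + m_{i_0 - 1} + j$ is a level of the flat comparison. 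I expect this ``first strictly decreasing block'' observation to be the only step requiring care --- the remaining index arithmetic is routine. (The same reasoning can instead be packaged as an induction on $d$ through the two-block case $[0,\infty)^{m} \times [0,\infty)^{m'}$, whichever reads more cleanly.)

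Finally I would deduce the non-strict equivalence from the strict one. Since $\mathrm{flat}$ is injective and carries componentwise-$\ge$ tuples to componentwise-$\ge$ vectors, it suffices to observe that $\forall i,\ r_i \succeq r'_i$ is equivalent to $(r_1,\dots,r_d) \succ^{(2)} (r'_1,\dots,r'_d)$ or ($\forall i,\ r_i \ge r'_i$) --- the forward direction again uses the ``first strictly decreasing block'' trick and the converse is immediate --- and, dually, that $\mathrm{flat}(r) \succeq \mathrm{flat}(r')$ unfolds to $\mathrm{flat}(r) \succ \mathrm{flat}(r')$ or $\mathrm{flat}(r) \ge \mathrm{flat}(r')$ componentwise. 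Combining these with the strict equivalence yields $(r_1,\dots,r_d) \succeq^{(2)} (r'_1,\dots,r'_d) \iff \mathrm{flat}(r_1,\dots,r_d) \succeq \mathrm{flat}(r'_1,\dots,r'_d)$, with no obstacle beyond tracking the two ways each non-strict ordering can be realised.
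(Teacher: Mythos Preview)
The paper itself offers no proof of this lemma; it is stated with a bare \qed. Your argument for the strict equivalence is correct, and the ``least block $i_0$ with $r_{i_0} \succ r'_{i_0}$'' normalisation is precisely the right way to handle the subtlety you identified.

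Your treatment of the non-strict equivalence, however, contains an incorrect intermediate claim. You assert that $\forall i,\ r_i \succeq r'_i$ is \emph{equivalent} to $(r \succ^{(2)} r') \lor (\forall i,\ r_i \ge r'_i)$, calling the converse ``immediate''. But $r \succ^{(2)} r'$ does not imply $\forall i,\ r_i \succeq r'_i$: the nested strict order imposes no constraint on blocks beyond the witnessing index $l$. For instance with $d = 2$, $m_1 = m_2 = 1$, $r = ((2),(0))$, $r' = ((0),(5))$, one has $r \succ^{(2)} r'$ (witnessed at $l = 1$) yet $r_2 \not\succeq r'_2$. The fix is small: what you actually need, and what your ``first strictly decreasing block'' argument does establish, is only the one-way implication $(\forall i,\ r_i \succeq r'_i) \implies (r \succ^{(2)} r') \lor (\forall i,\ r_i \ge r'_i)$, together with the trivial $(\forall i,\ r_i \ge r'_i) \implies (\forall i,\ r_i \succeq r'_i)$. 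These two implications, combined with the strict equivalence and the obvious fact that $\mathrm{flat}(r) \ge \mathrm{flat}(r')$ iff every block is componentwise $\ge$, already yield $\succeq^{(2)} \iff {\succeq}$ on the flattened vectors; no biconditional at the intermediate step is required.
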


By replacing $\succ$ and $\succeq$ in Definition~\ref{def:progress-measure} with $\succ^{(2)}$ and $\succeq^{(2)}$, we obtain the following definition.
\begin{definition}\label{def:lexpmsm}
	Let $F \colon S \to \Giry S$ be a parity Markov chain with a priority function $p \colon S \to \OneToN{d}$.
	A \emph{lexicographic progress-measure supermartingale} (LexPMSM) is a non-negative measurable function $r \colon S \to \prod_{i = 1}^d [0, \infty)^{m_i}$ such that the following conditions hold:
	for each $x \in S$, if $p(x)$ is even, then $r(x) \succeq^{(2)}_{p(x)} (\nexttime r)(x)$, and if $p(x)$ is odd, then $r(x) \succ^{(2)}_{p(x)} (\nexttime r)(x)$.
\end{definition}

\begin{theorem}\label{thm:lexicographic-progress-measure}
	Let $F \colon S \to \Giry S$ be a parity Markov chain and a priority function $p \colon S \to \OneToN{d}$.
	If there exists a LexPMSM, then the parity condition is satisfied almost surely for any initial state.
	\qed
\end{theorem}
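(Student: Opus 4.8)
The plan is to reduce a LexPMSM to an ordinary PMSM by flattening the nested lexicographic structure, so that Theorem~\ref{thm:progress-measure} can be invoked directly. By Lemma~\ref{lem:flatten-lexicographic-ordering}, the relations $\succ^{(2)}$ and $\succeq^{(2)}$ on $[0,\infty)^{m_1} \times \dots \times [0,\infty)^{m_d}$ are exactly the pullbacks of $\succ$ and $\succeq$ on $[0,\infty)^{m_1 + \dots + m_d}$ along the concatenation map $\mathrm{flat}$; moreover, since $\mathrm{flat}$ of the first $i$ blocks is precisely the vector of the first $M_i \coloneqq m_1 + \dots + m_i$ components of $\mathrm{flat}$ applied to the whole nested vector, the truncated versions $\succ^{(2)}_i$ and $\succeq^{(2)}_i$ correspond to $\succ_{M_i}$ and $\succeq_{M_i}$. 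Since $\nexttime$ acts componentwise on vector-valued functions, it commutes with the linear map $\mathrm{flat}$, i.e.\ $\nexttime(\mathrm{flat} \circ r) = \mathrm{flat} \circ (\nexttime r)$. Thus, given a LexPMSM $r \colon S \to \prod_{i=1}^d [0,\infty)^{m_i}$ for a priority function $p$, the measurable composite $\mathrm{flat} \circ r \colon S \to [0,\infty)^{M_d}$ satisfies, for each $x \in S$, the inequality $\mathrm{flat}(r(x)) \succeq_{M_{p(x)}} \nexttime(\mathrm{flat} \circ r)(x)$ when $p(x)$ is even and $\mathrm{flat}(r(x)) \succ_{M_{p(x)}} \nexttime(\mathrm{flat} \circ r)(x)$ when $p(x)$ is odd.

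The only remaining mismatch with Definition~\ref{def:progress-measure} is that $M_{p(x)}$ need not have the same parity as $p(x)$, whereas a PMSM must truncate at a priority whose parity is the one being tested. I fix this by padding: replace each block $[0,\infty)^{m_i}$ by $[0,\infty)^{m'_i}$, where $m'_i$ is the least odd integer $\ge m_i$ (so $m'_i - m_i \in \{0,1\}$), embedding $r_i$ as $(0,\dots,0,r_i)$ by prepending $m'_i - m_i$ zeros. Prepending a common leading zero to two tuples preserves both $\succ$ and $\succeq$: a strict drop by $1$ can never occur at a coordinate that is $0$ on both sides, and $0 \ge 0$ always holds; hence the padded map $\hat r \colon S \to [0,\infty)^{M'_d}$ with $M'_i \coloneqq m'_1 + \dots + m'_i$ still satisfies the truncated inequalities above with each $M_i$ replaced by $M'_i$, and it is still measurable. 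Because every $m'_i$ is odd, $M'_i \equiv i \pmod 2$; so defining $p' \colon S \to \OneToN{M'_d}$ by $p'(x) \coloneqq M'_{p(x)}$ gives $p'(x) \equiv p(x) \pmod 2$, and $\hat r$ is literally a PMSM for the parity Markov chain $(F, p')$.

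Finally I verify that $(F, p')$ and $(F, p)$ have the same acceptance set. The priorities actually occurring under $p'$ are $M'_1 < M'_2 < \dots < M'_d$, so along any trace the minimum of $\{M'_i \mid i \in \mathbf{Inf}_p\}$ equals $M'_{\min \mathbf{Inf}_p}$, which is even iff $\min \mathbf{Inf}_p$ is even; therefore $\mathbf{Parity}(p') = \mathbf{Parity}(p)$ as subsets of $S^\omega$. Applying Theorem~\ref{thm:progress-measure} to the PMSM $\hat r$ gives $\mathbb{P}_{x_0}[\mathbf{Parity}(p')] = 1$ for every $x_0 \in S$, hence $\mathbb{P}_{x_0}[\mathbf{Parity}(p)] = 1$ for every $x_0$, as required. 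The only delicate point is the bookkeeping around truncation levels and parities — specifically, checking that odd-padding makes the cumulative block sizes alternate in parity while leaving $\succ$, $\succeq$, and their truncations unchanged; the rest is a direct appeal to Lemma~\ref{lem:flatten-lexicographic-ordering} and Theorem~\ref{thm:progress-measure}.
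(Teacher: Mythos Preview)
Your proof is correct and follows the same essential strategy as the paper: flatten the nested LexPMSM via Lemma~\ref{lem:flatten-lexicographic-ordering} and invoke the soundness result for ordinary PMSMs. The only difference lies in how the parity mismatch between $p(x)$ and the cumulative block size $M_{p(x)}$ is handled. The paper sidesteps this by working with a generalised parity condition $(p', I')$ in which the accepting set $I'$ is specified explicitly rather than by even/odd parity (this generalisation is introduced in the appendix), so that the flattened map is literally a PMSM for $(p', I')$ and the generalised form of Theorem~\ref{thm:progress-measure} applies directly. Your zero-padding trick instead forces each block size to be odd so that $M'_{p(x)} \equiv p(x) \pmod 2$, letting you appeal to the standard even/odd statement of Theorem~\ref{thm:progress-measure} without any generalisation. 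Your route is a little more self-contained (no auxiliary $(p,I)$ machinery), while the paper's is a little cleaner once that machinery is available; neither gains anything substantive over the other.
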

\begin{proofsketch}
	We can reduce this theorem to Theorem~\ref{thm:progress-measure} by flattening the LexPMSM using Lemma~\ref{lem:flatten-lexicographic-ordering}.
	Details can be found in \referappendix{Appendix}{D.1}{sec:proof-progress-measure}.
\end{proofsketch}

\subsection{Properties of Lexicographic Progress-Measure Supermartingales}\label{sec:properties-lexpmsm}

\subsubsection{Reduced LexPMSMs}\label{sec:reduced-lexpmsm}

It is known that for parity progress measures for parity graphs, assuming that the components corresponding to even priorities are always $0$ does not affect their verification power, and thus we can drop those components without loss of generality.
We show that the same holds for LexPMSMs.

\begin{definition}\label{def:reduced-lexpmsm}
	A \emph{reduced LexPMSM} is a nonnegative measurable function $r \colon S \to \prod_{i = 1}^{\lceil d / 2 \rceil} [0, \infty)^{m_i}$ such that the following conditions hold:
	for each $x \in S$, if $p(x)$ is even, then $r(x) \succeq^{(2)}_{\lceil p(x) / 2 \rceil} (\nexttime r)(x)$, and if $p(x)$ is odd, then $r(x) \succ^{(2)}_{\lceil p(x) / 2 \rceil} (\nexttime r)(x)$.
	Here, $\lceil \cdot \rceil$ is the ceiling function.
\end{definition}
Here, we drop the components corresponding to even priorities using the bijection $\{ i \mid i \in \{ 1, \dots, d \}, \text{$i$ is odd} \} \ni i \mapsto \lceil i / 2 \rceil \in \{ 1, \dots, \lceil d / 2 \rceil \}$, whose inverse is given by $j \mapsto 2 j - 1$.
It is obvious that any reduced LexPMSM induces a LexPMSM by inserting $0$ at the components corresponding to even priorities.
The converse also holds as shown below.

\begin{proposition}\label{prop:lexpmsm-to-reduced-lexpmsm}
	Let $r \colon S \to \prod_{i = 1}^{d} [0, \infty)^{m_i}$ be a LexPMSM for a parity Markov chain $F \colon S \to \Giry S$ with a priority function $p \colon S \to \OneToN{d}$.
	Then, there exists a reduced LexPMSM $r' \colon S \to \prod_{i = 1}^{\lceil d / 2 \rceil} [0, \infty)^{m_i'}$ for the same parity Markov chain.
	Specifically, if we define the $i$-th component $r'_i$ of $r'$ as follows, then $r'$ is a reduced LexPMSM.
	\[ r'_1 = r_1, \qquad r'_2 = \mathrm{flat}(r_2, r_3), \qquad \dots \qquad, r'_{\lceil d / 2 \rceil} = \mathrm{flat}(r_{2 \lceil d / 2 \rceil - 2}, r_{2 \lceil d / 2 \rceil - 1}) \]
\end{proposition}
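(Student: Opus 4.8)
The plan is to push everything through the flattening map of Lemma~\ref{lem:flatten-lexicographic-ordering}, after which the claim becomes a short computation with initial segments (prefixes) of the flattened vector $\mathrm{flat}(r)$. Measurability of $r'$ is immediate, since $r'$ is obtained from the measurable $r$ by permuting and regrouping coordinates; for the same reason $\nexttime$ (which acts coordinatewise) commutes with flattening and regrouping, so $\mathrm{flat}((\nexttime r')(x))$ is obtained from $\mathrm{flat}((\nexttime r)(x))$ by exactly the same recipe by which $\mathrm{flat}(r'(x))$ is obtained from $\mathrm{flat}(r(x))$.

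The key observation is that, by associativity of concatenation, $\mathrm{flat}(r'(x)) = r_1(x)\,r_2(x)\,r_3(x)\cdots r_{2\lceil d/2\rceil-1}(x)$; that is, $\mathrm{flat}(r'(x))$ is the initial segment $\mathrm{flat}(r(x)) \downharpoonright M_{2\lceil d/2\rceil-1}$ of $\mathrm{flat}(r(x)) = r_1(x)\cdots r_d(x)$, where $M_k \coloneqq m_1 + \dots + m_k$ (so $M_0 = 0$); the only block possibly discarded is $r_d$, and that happens only when $d$ is even, in which case $r_d$ carries an even priority. Moreover the partial sums $M'_j$ of the block sizes of $r'$ satisfy $M'_j = M_{2j-1}$. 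Now fix $x$ and write $i = p(x)$. Unfolding the truncated nested orderings and applying Lemma~\ref{lem:flatten-lexicographic-ordering} to the truncations, the reduced-LexPMSM obligation at $x$ reads: $\mathrm{flat}(r(x)) \downharpoonright M_{2\lceil i/2\rceil-1}$ is $\succ$-related (if $i$ is odd) or $\succeq$-related (if $i$ is even) to $\mathrm{flat}((\nexttime r)(x)) \downharpoonright M_{2\lceil i/2\rceil-1}$, while the LexPMSM hypothesis for $r$ gives precisely the same relation for the longer prefixes of length $M_i$. Since $2\lceil i/2\rceil-1 = i$ when $i$ is odd, the odd case is literally the hypothesis; since $2\lceil i/2\rceil-1 = i-1 \le i$ when $i$ is even, the even case follows from the hypothesis by the $\succeq$-analogue of Lemma~\ref{lem:truncation-lexicographic-ordering} (passing from a $\succeq$-comparison of prefixes to a $\succeq$-comparison of shorter prefixes).

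The computations are routine; the only points needing care are the index bookkeeping — chiefly the identities $M'_j = M_{2j-1}$ and $2\lceil i/2\rceil-1 \in \{i, i-1\}$ according to the parity of $i$, which is exactly what makes the relaxed (non-strict) obligation at an even priority line up after one even-priority block has been dropped — and recording the $\succeq$-analogue of Lemma~\ref{lem:truncation-lexicographic-ordering}, proved by the same case split used there (distinguishing whether the level witnessing a strict comparison lies inside the shorter prefix). I expect no real obstacle beyond getting this bookkeeping straight.
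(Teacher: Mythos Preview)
Your proposal is correct and follows essentially the same approach as the paper: both arguments rest on the observation that the truncation of $r'$ to its first $\lceil i/2\rceil$ blocks coincides (after flattening) with the truncation of $r$ to its first $2\lceil i/2\rceil-1$ blocks, so the odd case is literally the hypothesis and the even case follows by passing from a $\succeq$-comparison at length $p(x)$ to one at length $p(x)-1$. The paper compresses this into a single two-clause sentence, while you spell out the flattening and the index identities $M'_j=M_{2j-1}$ and $2\lceil i/2\rceil-1\in\{i,i-1\}$ explicitly; your version is more detailed but not a different route.
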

\begin{proof}[Proof of Proposition~\ref{prop:lexpmsm-to-reduced-lexpmsm}]
	By definition of $r'$, if $p(x)$ is odd, then $r'(x) \succ^{(2)}_{\lceil p(x) / 2 \rceil} (\nexttime r')(x)$ if and only if $r(x) \succ^{(2)}_{p(x)} (\nexttime r)(x)$; and if $p(x)$ is even, then $r'(x) \succeq^{(2)}_{\lceil p(x) / 2 \rceil} (\nexttime r')(x)$ if and only if $r(x) \succeq^{(2)}_{p(x) - 1} (\nexttime r)(x)$.
\end{proof}

\subsubsection{Equivalence to LexGSSMs}

Let $F \colon S \to \Giry S$ be a parity Markov chain with a priority function $p \colon S \to \OneToN{2 d}$.
As we have seen in Lemma~\ref{lem:parity-streett}, the parity condition defined by $p$ is equivalent to the Streett condition defined by the Streett pairs $(S_{\le 2 i - 1}, S_{\le 2 i - 2})$ for each $i = 1, \dots, d$.
We show that LexPMSMs for the parity condition are equivalent to LexGSSMs for the corresponding Streett condition by giving mutual translations between them.

\begin{proposition}\label{prop:lexpmsm-to-lexgssm}
	Let $r = (r_1, \dots, r_{2 d})$ be a LexPMSM where $r_i$ is the $i$-th component for each priority $i$.
	Then, $r {\downharpoonright} (2 i - 1) = (r_1, \dots, r_{2 i - 1})$ is a LexGSSM for the Streett pair $(S_{\le 2 i - 1}, S_{\le 2 i - 2})$ for each $i = 1, \dots, d$.
	\qed
\end{proposition}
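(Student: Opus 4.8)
The plan is to fix $i \in \OneToN{d}$, put $A = S_{\le 2i-1}$ and $B = S_{\le 2i-2}$, and verify the two defining inequalities of a LexGSSM for $r \downharpoonright (2i-1)$ directly from the LexPMSM conditions on $r$, the argument being uniform in $i$. First I record the sets involved: $A \setminus B = \{x \mid p(x) = 2i-1\}$, which consists exactly of the states of odd priority $2i-1$, and $S \setminus (A \cup B) = \{x \mid p(x) \ge 2i\}$. The only structural observation needed is that the next-time operator commutes with truncation: since $\nexttime$ acts componentwise, $(\nexttime(r \downharpoonright (2i-1)))(x) = ((\nexttime r)(x)) \downharpoonright (2i-1)$ for every $x \in S$. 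Also, since a LexGSSM is by definition valued in a flat tuple ordered by $\succ$, I identify $r \downharpoonright (2i-1)$ with its flattening and invoke Lemma~\ref{lem:flatten-lexicographic-ordering} to move freely between the nested orderings $\succ^{(2)}, \succeq^{(2)}$ and the flat ones $\succ, \succeq$.

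For the strict requirement, take $x \in A \setminus B$, so $p(x) = 2i-1$ is odd. The LexPMSM condition gives $r(x) \succ^{(2)}_{p(x)} (\nexttime r)(x)$, i.e. $r(x) \downharpoonright (2i-1) \succ^{(2)} (\nexttime r)(x) \downharpoonright (2i-1)$; rewriting the right-hand side by commuting $\nexttime$ past the truncation yields $(\nexttime(r \downharpoonright (2i-1)))(x) \prec^{(2)} (r \downharpoonright (2i-1))(x)$, as required. For the non-strict requirement, take $x$ with $p(x) \ge 2i$. If $p(x)$ is even, the LexPMSM condition gives $r(x) \downharpoonright p(x) \succeq^{(2)} (\nexttime r)(x) \downharpoonright p(x)$ directly; if $p(x)$ is odd (hence $\ge 2i+1$), it gives the strict version, which in particular implies the same non-strict inequality at level $p(x)$. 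Since $2i-1 < p(x)$, applying the nested analogue of Lemma~\ref{lem:truncation-lexicographic-ordering} (truncation to fewer blocks preserves $\succeq^{(2)}$) gives $r(x) \downharpoonright (2i-1) \succeq^{(2)} (\nexttime r)(x) \downharpoonright (2i-1)$, and commuting $\nexttime$ past the truncation again yields $(\nexttime(r \downharpoonright (2i-1)))(x) \preceq^{(2)} (r \downharpoonright (2i-1))(x)$. Hence $r \downharpoonright (2i-1)$ satisfies both conditions and is a LexGSSM for $(S_{\le 2i-1}, S_{\le 2i-2})$.

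There is no real obstacle here; the argument is essentially an unfolding of definitions, and it becomes routine once the sets $A \setminus B$ and $S \setminus (A \cup B)$ are correctly identified with the priority classes. The only points deserving a moment's care are: (i) confirming that the nested analogue of Lemma~\ref{lem:truncation-lexicographic-ordering} and the implication $\succ^{(2)}_j \Rightarrow \succeq^{(2)}_j$ are available — both are immediate from the definitions, or from Lemma~\ref{lem:flatten-lexicographic-ordering} combined with Lemma~\ref{lem:truncation-lexicographic-ordering}; and (ii) keeping the index bookkeeping straight, remembering that the truncation index $2i-1$ counts priority blocks $r_1, \dots, r_{2i-1}$ rather than flat coordinates.
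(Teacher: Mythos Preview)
Your proof is correct and follows essentially the same approach as the paper: both arguments split into the cases $p(x) = 2i-1$ and $p(x) \ge 2i$, apply the LexPMSM condition at level $p(x)$, and then (in the second case) truncate back to level $2i-1$ using the order-preservation of truncation. Your write-up is in fact more careful than the paper's, making explicit the commutation of $\nexttime$ with truncation and the passage between nested and flat orderings via Lemma~\ref{lem:flatten-lexicographic-ordering}.
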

\begin{appendixproof}[Proof of Proposition~\ref{prop:lexpmsm-to-lexgssm}]
	If $x \in S_{\le 2 i - 1} \setminus S_{\le 2 i - 2}$, then by definition of LexPMSMs, we have $(\nexttime (r_1, \dots, r_{2 i - 1}))(x) \prec (r_1, \dots, r_{2 i - 1})(x)$.
	If $x \in S \setminus (S_{\le 2 i - 1} \cup S_{\le 2 i - 2})$, then we have $p(x) > 2 i - 1$. In this case, we have $(\nexttime (r_1, \dots, r_{p(x)}))(x) \prec (r_1, \dots, r_{2 i - 1})(x)$ if $p(x)$ is odd, and $(\nexttime (r_1, \dots, r_{p(x)}))(x) \preceq (r_1, \dots, r_{2 i - 1})(x)$ if $p(x)$ is even.
	In either case, we have $(\nexttime (r_1, \dots, r_{2 i - 1}))(x) \preceq (r_1, \dots, r_{2 i - 1})(x)$.
\end{appendixproof}

\begin{proposition}\label{prop:lexgssm-to-lexpmsm}
	Any LexGSSM can be transformed into a (reduced) LexPMSM.
	Specifically, suppose that for each $i = 1, \dots, d$, we have a LexGSSM $r_i \colon S \to [0, \infty)^{m_i}$ for the Streett pair $(S_{\le 2 i - 1}, S_{\le 2 i - 2})$.
	Then, $r \coloneqq (r_1, r_2, \dots, r_d)$ is a reduced LexPMSM for the parity Markov chain.
	\qed
\end{proposition}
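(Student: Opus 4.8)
The plan is to verify the defining inequalities of a reduced LexPMSM (Definition~\ref{def:reduced-lexpmsm}) directly by a case analysis on the priority $q = p(x)$ of a state $x$, exploiting that for this parity Markov chain the priority function $p : S \to \OneToN{2d}$ determines the Streett sets by $S_{\le k} = \{ x \mid p(x) \le k \}$, so membership of $x$ in the $j$-th Streett sets is governed purely by the numerical relation between $q$ and $2j$.

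First I would record, for a fixed $x$ with $q = p(x)$, exactly which constraint each component $r_j$ contributes. Writing $A_j = S_{\le 2j-1}$ and $B_j = S_{\le 2j-2}$, one checks from $p(x) = q$ that $x \in B_j$ (no constraint) holds precisely when $q \le 2j-2$; that $x \in A_j \setminus B_j$ (strict decrease $(\nexttime r_j)(x) \prec r_j(x)$ by the LexGSSM condition) holds precisely when $q = 2j-1$; and that $x \notin A_j \cup B_j$ (non-strict decrease $(\nexttime r_j)(x) \preceq r_j(x)$) holds precisely when $q \ge 2j$. Setting $i = \lceil q / 2 \rceil$, this means: for every $j < i$ we get the non-strict inequality from $r_j$; for $j = i$ we get the strict inequality when $q = 2i-1$ is odd and the non-strict inequality when $q = 2i$ is even; and for $j > i$ there is no constraint on $r_j$.

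Next I would assemble these component-wise facts into the truncated nested lexicographic comparison on the first $i$ components (note $i = \lceil q/2 \rceil$ in both parities). If $q = 2i$ is even, the inequalities $r_j(x) \succeq (\nexttime r_j)(x)$ for all $j \le i$ yield $r(x) \succeq^{(2)}_i (\nexttime r)(x)$ at once, via the second disjunct in the definition of $\succeq^{(2)}$. If $q = 2i-1$ is odd, the inequalities $r_j(x) \succeq (\nexttime r_j)(x)$ for $j < i$ together with $r_i(x) \succ (\nexttime r_i)(x)$ witness $r(x) \succ^{(2)}_i (\nexttime r)(x)$ by choosing level $l = i$ in the definition of $\succ^{(2)}$. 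This is exactly what Definition~\ref{def:reduced-lexpmsm} demands. Measurability and non-negativity of $r = (r_1, \dots, r_d)$ are inherited from the $r_i$.

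There is no genuine obstacle here; the only point requiring care is the bookkeeping in the first step --- correctly determining, as a function of the relation between $j$ and $i = \lceil q/2 \rceil$, whether component $r_j$ supplies a strict, a non-strict, or a vacuous constraint --- and in particular handling the boundary index $j = i$ with the strictness dictated by the parity of $q$. Once this table is verified, the remainder is a routine unfolding of Definitions~\ref{def:lexicographic-generalised-streett-supermartingale} and~\ref{def:reduced-lexpmsm}.
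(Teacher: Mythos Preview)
Your proposal is correct and follows essentially the same approach as the paper's proof: both perform a case analysis on the priority $q = p(x)$, determine for each component $r_j$ whether the LexGSSM supplies a strict, non-strict, or vacuous constraint according to whether $q = 2j-1$, $q \ge 2j$, or $q \le 2j-2$, and then assemble these componentwise facts into the required truncated nested lexicographic inequality $\succ^{(2)}_{\lceil q/2 \rceil}$ or $\succeq^{(2)}_{\lceil q/2 \rceil}$. Your write-up is slightly more explicit about the bookkeeping (spelling out the three-way split and the role of the boundary index), but the argument is the same.
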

\begin{appendixproof}[Proof of Proposition~\ref{prop:lexgssm-to-lexpmsm}]
	By definition, the following conditions hold for each $i$.
	\begin{enumerate}
		\item If $p(x) = 2 i - 1$, then $r_i(x) \succ (\nexttime r_i)(x)$.
		\label{item:lexgssm-to-lexpmsm-1}
		\item If $p(x) \ge 2 i$, then $r_i(x) \succeq (\nexttime r_i)(x)$.
		\label{item:lexgssm-to-lexpmsm-2}
	\end{enumerate}
	We can rephrase this as follows.
	\begin{itemize}
		\item If $p(x) = 2 j - 1$, then for any $i < j$, we have $p(x) \ge 2 i$ and thus $r_i(x) \succeq (\nexttime r_i)(x)$ by Item ~\ref{item:lexgssm-to-lexpmsm-2}.
		If $p(x) = 2 j - 1$ and $i = j$, then we have $p(x) = 2 i - 1$ and thus $r_i(x) \succ (\nexttime r_i)(x)$ by Item~\ref{item:lexgssm-to-lexpmsm-1}.
		Therefore, if $p(x) = 2 j - 1$ is odd, then we have $r(x) \succ^{(2)}_{j} (\nexttime r)(x)$.
		\item If $p(x) = 2 j$, then for any $i \le j$, we have $p(x) \ge 2 i$ and thus $r_i(x) \succeq (\nexttime r_i)(x)$ by Item~\ref{item:lexgssm-to-lexpmsm-2}.
		Therefore, if $p(x) = 2 j$ is even, then we have $r(x) \succeq^{(2)}_{j} (\nexttime r)(x)$.
	\end{itemize}
	Therefore, $r$ is a reduced LexPMSM.
\end{appendixproof}
The proof of these propositions can be found in the
\iflong
appendix.
\else
long version \cite{arxiv}.
\fi
Note that Proposition~\ref{prop:lexgssm-to-lexpmsm} also gives a translation from GSSMs to PMSMs.

\begin{toappendix}
\section{Details of Proofs}

\subsection{Proof of Theorem~\ref{thm:progress-measure}}\label{sec:proof-progress-measure}

For convenience, we slightly generalise parity conditions (Definition~\ref{def:parity-condition}).

\begin{definition}[parity condition]
	A \emph{parity condition} over a set $S$ is specified by a pair $(p, I)$ of a \emph{priority function} $p \colon S \to \OneToN{d}$ and a subset $I \subseteq \OneToN{d}$.
	An infinite sequence $s_0 s_1 \dots \in S^{\omega}$ satisfies the parity condition $(p, I)$ if the minimum priority that occurs infinitely often in the sequence is in $I$.
	\[ \mathbf{Parity}(p, I) \coloneqq \{ s \in S^{\omega} \mid \min \mathbf{Inf}_p(s) \in I \} \]
\end{definition}
Obviously, Definition~\ref{def:parity-condition} is a special case where $I$ is the set of even numbers in $\OneToN{d}$.

We also generalise PMSMs (Definition~\ref{def:progress-measure}) accordingly.
That is, $r$ is a PMSM if the following conditions hold for each $x \in S$.
\begin{itemize}
	\item If $p(x) \in I$, then $r(x) \succeq_{p(x)} (\nexttime r)(x)$.
	\item If $p(x) \notin I$, then $r(x) \succ_{p(x)} (\nexttime r)(x)$.
\end{itemize}

Then, Theorem~\ref{thm:progress-measure} is restated as follows.
\begin{theorem}\label{thm:progress-measure-general-parity-condition}
	Let $F \colon S \to \Giry S$ be a parity Markov chain with parity condition $(p, I)$.
	If there exists a parity progress measure, then the parity condition is satisfied almost surely for any initial state $x_0 \in S$.
	\[ \mathbb{P}_{x_0}[\mathbf{Parity}(p, I)] = 1 \]
\end{theorem}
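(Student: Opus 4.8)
The plan is to show that $\mathbb{P}_{x_0}[\mathbf{Parity}(p,I)] = 1$ by bounding the probability of the complement. Since there are only finitely many priorities, $\mathbf{Parity}(p,I)^{c} = \bigcup_{q \in \OneToN{d}\setminus I}\{\,s \mid \min \mathbf{Inf}_p(s) = q\,\}$, and for a fixed $q$ the event $\{\min\mathbf{Inf}_p(s)=q\}$ equals $\bigcup_{N\in\mathbb N} B_{q,N}$, where $B_{q,N}$ is the event that $p(s_i)\ge q$ for all $i\ge N$ and $p(s_i)=q$ for infinitely many $i$. So it suffices to fix $q \notin I$ and $N$ and prove $\mathbb{P}_{x_0}[B_{q,N}] = 0$.

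First I would pass to the truncated progress measure $\tilde r \coloneqq r\downharpoonright q : S \to [0,\infty)^{q}$; note that $\nexttime$ commutes with $\downharpoonright q$, since both act componentwise, so $(\nexttime\tilde r)(x) = (\nexttime r)(x)\downharpoonright q$. Using the truncation lemma for the orderings $\succ$ and $\succeq$ (the analogue of Lemma~\ref{lem:truncation-lexicographic-ordering}: if $i\le j$ then $x\succeq_j y \Rightarrow x\succeq_i y$ and $x\succ_j y \Rightarrow x\succeq_i y$), together with the PMSM conditions of Definition~\ref{def:progress-measure}, one checks that along any trace in $B_{q,N}$ and for every $i\ge N$ one has $\tilde r(s_i)\succeq (\nexttime\tilde r)(s_i)$, with the strict inequality $\tilde r(s_i)\succ(\nexttime\tilde r)(s_i)$ whenever $p(s_i)=q$ (here $q\notin I$, which is precisely what forces strictness at priority $q$).

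The core of the argument adapts the soundness proof of lexicographic ranking supermartingales (Proposition~\ref{prop:lexicographic-ranking-supermartingale-soundness}) by tracking the ``level'' of the strict decrease. For $x\in S$ with $\tilde r(x)\succeq(\nexttime\tilde r)(x)$, let $\mathrm{lev}(x)\in\{1,\dots,q,q+1\}$ be the least $l$ at which $\tilde r(x)\succ(\nexttime\tilde r)(x)$ is witnessed (so $\tilde r_j(x)\ge(\nexttime\tilde r_j)(x)$ for $j<l$ and $\tilde r_l(x)\ge 1+(\nexttime\tilde r_l)(x)$), and $\mathrm{lev}(x)=q+1$ if instead $\tilde r_j(x)\ge(\nexttime\tilde r_j)(x)$ for all $j\le q$ (put $\mathrm{lev}(x)=0$ on the remaining states, which are not visited after time $N$ on $B_{q,N}$); set $U_l\coloneqq\{x\mid \mathrm{lev}(x)\ge l\}$, which is measurable. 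Then I would prove, by induction on $l=1,\dots,q+1$, that almost surely on $B_{q,N}$ the trace is eventually always in $U_l$. The base case $l=1$ is immediate, since on $B_{q,N}$ we have $s_i\in U_1$ for all $i\ge N$ by the previous paragraph. For the step $l\to l+1$: fix a deterministic time $M$ and let $\rho$ be the first $i\ge M$ with $s_i\notin U_l$; then $i\mapsto \tilde r_l(s_{i\wedge\rho})$ (for $i\ge M$) is a non-negative supermartingale (on $U_l$ we have $\mathbb{E}[\tilde r_l(s_{i+1})\mid\mathcal F_i]=(\nexttime\tilde r_l)(s_i)\le\tilde r_l(s_i)$, and it is frozen after $\rho$) that incurs an expected decrease of at least $1$ exactly at those $i<\rho$ with $\mathrm{lev}(s_i)=l$; by the standard fact that a non-negative supermartingale has almost surely summable decrements, such times are a.s.\ finite on $\{\rho=\infty\}$, and taking the countable union over $M$ shows that a.s.\ on the event ``eventually always in $U_l$'' there are only finitely many $i$ with $\mathrm{lev}(s_i)=l$, i.e.\ with $s_i\in U_l\setminus U_{l+1}$; since $U_{l+1}\subseteq U_l$, this together with the induction hypothesis yields ``eventually always in $U_{l+1}$''. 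Applying the case $l=q+1$: a.s.\ on $B_{q,N}$ the trace is eventually always in $U_{q+1}$. But every state of priority $q$ lies outside $U_{q+1}$ (since $q\notin I$ forces a strict decrease at some level $\le q$ there), while on $B_{q,N}$ such states are visited infinitely often --- a contradiction. Hence $\mathbb{P}_{x_0}[B_{q,N}]=0$, and a union bound over the finitely many $q\notin I$ and over $N\in\mathbb N$ finishes the proof. For the general parity condition $(p,I)$ of Theorem~\ref{thm:progress-measure-general-parity-condition} nothing changes; ``$q$ odd'' is replaced throughout by ``$q\notin I$''.

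The main obstacle I anticipate is the measure-theoretic bookkeeping: the time at which the level stabilises --- equivalently, the minimal level taken infinitely often --- is not a stopping time, so one cannot apply optional stopping directly. The workaround, as above, is to run the supermartingale argument with the process stopped at the exit time of the fixed measurable set $U_l$ started from a deterministic time $M$, and then take a countable union over $M$; this is exactly the device used in the soundness proofs for lexicographic ranking supermartingales. The remaining ingredients --- measurability of $\mathrm{lev}$ and of the $U_l$, commutation of $\nexttime$ with truncation, the truncation lemma for $\succ$ and $\succeq$, and the martingale lemma on summable decrements --- are routine.
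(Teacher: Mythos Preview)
Your proposal is correct and follows essentially the same approach as the paper's proof: both fix a bad priority $q\notin I$ and a time after which priorities stay $\ge q$, introduce a measurable level function on states, and run a stopped non-negative supermartingale on the relevant component, handling the non-stopping-time issue via countable unions over deterministic start times (exactly the device from \cite{AgrawalPOPL2018} that you cite). The paper picks the minimum infinitely-occurring level directly and derives a single contradiction---also making explicit one further countable union over a bound $B$ on the initial value $r_l(x_{n'})$ to ensure integrability, which you fold into the ``standard fact'' about summable decrements---whereas you organise the same argument as an induction on $l$ showing the trace is eventually always in $U_l$; the technical content is identical.
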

\begin{proof}
	Let $r$ be a parity progress measure.
	We first construct an auxiliary measurable function $\mathrm{lev} \colon S \to \OneToN{d} \cup \{ 1^{\sharp}, \dots, d^{\sharp} \}$ that satisfies the following conditions for each $x \in S$.
	\begin{itemize}
		\item If $p(x) \in I$, then $\mathrm{lev}(x) \in \OneToN{p(x)} \cup \{ p(x)^{\sharp} \}$. If $p(x) \notin I$, then $\mathrm{lev}(x) \in \OneToN{p(x)}$.
		\item If $\mathrm{lev}(x) = p(x)^{\sharp}$, then $(r_1(x), \dots, r_{p(x)}(x)) \ge ((\nexttime r_1)(x), \dots, (\nexttime r_{p(x)})(x))$, that is, for any $i$ such that $1 \le i \le p(x)$, $r_i(x) \ge (\nexttime r_i)(x)$.
		\item If $\mathrm{lev}(x) \in \OneToN{p(x)}$, then we have $(r_1(x), \dots, r_{p(x)}(x)) \succ ((\nexttime r_1)(x), \dots, (\nexttime r_{p(x)})(x))$, and $\mathrm{lev}(x)$ is a level. That is, for any $i$ such that $1 \le i < \mathrm{lev}(x)$, $r_i(x) \ge (\nexttime r_i)(x)$, and if $i = \mathrm{lev}(x)$, then $r_i(x) \ge 1 + (\nexttime r_i)(x)$.
	\end{itemize}
	The function $\mathrm{lev}$ is defined as follows.
	\begin{itemize}
		\item If $p(x) = i \notin I$, then we have
		\[ p^{-1}(\{ i \}) \subseteq \bigcup_{1 \le l \le i} R_l \quad\text{where}\quad R_l \coloneqq \bigcap_{1 \le j < l} \{ x \mid r_j(x) \ge (\nexttime r_j)(x) \} \cap \{ x \mid r_l(x) \ge 1 + (\nexttime r_l)(x) \}. \]
		Since $p^{-1}(\{ i \})$ and each $R_l$ are measurable sets, there exists a measurable partition $p^{-1}(\{ i \}) = \bigcup_{1 \le l \le i} R'_l$ such that $R'_l \subseteq R_l$.
		We define $\mathrm{lev}(x) = l$ if $x \in R'_l$.
		\item If $p(x) \in I$, then we have
		\[ p^{-1}(\{ i \}) \subseteq \bigcup_{1 \le l \le i} R_l \cap R_{i^{\sharp}} \quad\text{where}\quad R_{i^{\sharp}} \coloneqq \bigcap_{1 \le j < i} \{ x \mid r_j(x) \ge (\nexttime r_j)(x) \}. \]
		Similarly to the previous case, there exists a measurable partition $p^{-1}(\{ i \}) = \bigcup_{1 \le l \le i} R'_l \cap R'_{i^{\sharp}}$ such that $R'_l \subseteq R_l$ and $R'_{i^{\sharp}} \subseteq R_{i^{\sharp}}$.
		We define $\mathrm{lev}(x) = l$ if $x \in R'_l$ and $\mathrm{lev}(x) = i^{\sharp}$ if $x \in R'_{i^{\sharp}}$.
	\end{itemize}

	The rest of the proof combines the ideas of \cite{AgrawalPOPL2018} and \cite{JurdzinskiSTACS2000}.
	We prove by contradiction.
	Note that $\mathbb{P}_{x_0}[\mathbf{Parity}(p, I)] = 1$ is equivalent to $\mathbb{P}_{x_0}[\mathbf{Parity}(p, \compl{I})] = 0$ where $\compl{I} \coloneqq \OneToN{d} \setminus I$ is the complement of $I$.
	Assume there exists an initial state $x_0$ such that
	\[ \mathbb{P}_{x_0}[\mathbf{Parity}(p, \compl{I})] > 0. \]
	We repeatedly apply the following principle: if $\bigcup_{i = 0}^{\infty} E_i$ is a countable union of measurable sets such that $\mathbb{P}_{x_0}[\bigcup_{i = 0}^{\infty} E_i] > 0$, then there exists $i$ such that $\mathbb{P}_{x_0}[E_i] > 0$.
	\begin{itemize}
		\item By definition of $\mathbf{Parity}(p, \compl{I})$, we have
		\[ \mathbf{Parity}(p, \compl{I}) = \bigcup_{k \in \compl{I}} \mathbf{Parity}(p, \{ k \}). \]
		There exists $k \in \compl{I}$ such that $\min \mathbf{Inf}_p$ is $k$ with positive probability.
		\[ \mathbb{P}_{x_0}[\mathbf{Parity}(p, \{ k \})] > 0 \]
		\item If $\min \mathbf{Inf}_p = k$, then after sufficiently many steps, the priority function must be at least $k$.
		Specifically, we have
		\[ \mathbf{Parity}(p, \{ k \}) = \bigcup_{n = 1}^{\infty} \{ x \in S^{\omega} \mid \min \mathbf{Inf}_p(x) = k \land \forall m \ge n, p(x_m) \ge k \}. \]
		There exists $n$ such that after $n$ steps, the priority function is at least $k$ with positive probability.
		\[ \mathbb{P}_{x_0} [\{ x \in S^{\omega} \mid \min \mathbf{Inf}_p(x) = k \land \forall m \ge n, p(x_m) \ge k \}] > 0 \]
		\item Since $p(x_i) = k$ holds for infinitely many $i$'s, there exists $l$ such that $1 \le l \le k$ and $\mathrm{lev}(x_i) = l$ for infinitely many $i$'s.
		Consider the minimum of such $l$'s, which we write as $\min \mathbf{Inf}_{\mathrm{lev}}(x)$.
		More precisely, we take the minimum with respect to the order relation on $\OneToN{d} \cup \{ 1^{\sharp}, \dots, d^{\sharp} \}$ defined by $1 < 1^{\sharp} < 2 < \dots < d < d^{\sharp}$.
		Since we have $p(x_i) \ge k$ after sufficiently many steps, $\mathrm{lev}(x_i) \in \{ 1^{\sharp}, \dots, (k - 1)^{\sharp} \}$ cannot happen infinitely many times.
		\begin{align}
			&\{ x \in S^{\omega} \mid \min \mathbf{Inf}_p(x) = k \land \forall m \ge n, p(x_m) \ge k \} \\
			&= \bigcup_{1 \le l \le k} \{ x \in S^{\omega} \mid \min \mathbf{Inf}_p(x) = k \land \forall m \ge n, p(x_m) \ge k \land \min \mathbf{Inf}_{\mathrm{lev}}(x) = l \}
		\end{align}
		There exists $l$ such that $\min \mathbf{Inf}_{\mathrm{lev}} = l$ with positive probability.
		\[ \mathbb{P}_{x_0} [\{ x \in S^{\omega} \mid \min \mathbf{Inf}_p(x) = k \land \forall m \ge n, p(x_m) \ge k \land \min \mathbf{Inf}_{\mathrm{lev}}(x) = l \}] > 0. \]
		\item Similarly to $\min \mathbf{Inf}_{p}$, after sufficiently many steps, the level $\mathrm{lev}(x_i)$ must be at least $l$.
		\begin{align}
			&\{ x \in S^{\omega} \mid \min \mathbf{Inf}_p(x) = k \land \forall m \ge n, p(x_m) \ge k \land \min \mathbf{Inf}_{\mathrm{lev}}(x) = l \} \\
			&= \bigcup_{n' \ge n} \{ x \in S^{\omega} \mid \min \mathbf{Inf}_p(x) = k \land \forall m \ge n, p(x_m) \ge k \land \min \mathbf{Inf}_{\mathrm{lev}}(x) = l \land \forall m \ge n', \mathrm{lev}(x_m) \ge l \}
		\end{align}
		There exists $n' \ge n$ such that after $n'$ steps, the level $\mathrm{lev}(x_i)$ is at least $l$ with positive probability.
		\[ \mathbb{P}_{x_0}[\{ x \in S^{\omega} \mid \min \mathbf{Inf}_p(x) = k \land \forall m \ge n, p(x_m) \ge k \land \min \mathbf{Inf}_{\mathrm{lev}}(x) = l \land \forall m \ge n', \mathrm{lev}(x_m) \ge l \}] > 0 \]
		\item By omitting the condition $\min \mathbf{Inf}_p(x) = k$ and unifying $n$ and $n'$, we have the following.
		\begin{align}
			&\{ x \in S^{\omega} \mid \min \mathbf{Inf}_p(x) = k \land \forall m \ge n, p(x_m) \ge k \land \min \mathbf{Inf}_{\mathrm{lev}}(x) = l \land \forall m \ge n', \mathrm{lev}(x_m) \ge l \} \\
			&\subseteq \{ x \in S^{\omega} \mid \min \mathbf{Inf}_{\mathrm{lev}}(x) = l \land \forall m \ge n', p(x_m) \ge k \land \mathrm{lev}(x_m) \ge l \}
		\end{align}
		We simplify the current event as follows.
		\[ \mathbb{P}_{x_0} [\{ x \in S^{\omega} \mid \min \mathbf{Inf}_{\mathrm{lev}}(x) = l \land \forall m \ge n', p(x_m) \ge k \land \mathrm{lev}(x_m) \ge l \}] > 0 \]
		Intuitively, this condition implies that the $l$-th component $r_l$ must be non-increasing after $n'$ steps and strictly decreasing for infinitely many steps.
		\item By definition of a parity progress measure, we have $r_l(x_{n'}) \in [0, \infty) = \bigcup_{B \in \mathbb{N}} [0, B]$.
		\begin{align}
			&\{ x \in S^{\omega} \mid \min \mathbf{Inf}_{\mathrm{lev}}(x) = l \land \forall m \ge n', p(x_m) \ge k \land \mathrm{lev}(x_m) \ge l \} \\
			&= \bigcup_{B \in \mathbb{N}} \{ x \in S^{\omega} \mid \min \mathbf{Inf}_{\mathrm{lev}}(x) = l \land \forall m \ge n', p(x_m) \ge k \land \mathrm{lev}(x_m) \ge l \land r_l(x_{n'}) \le B \}
		\end{align}
		There exists $B \in \mathbb{N}$ such that $r_l(x_{n'})$ is at most $B$ with positive probability.
		\[ \mathbb{P}_{x_0}[\{ x \in S^{\omega} \mid \min \mathbf{Inf}_{\mathrm{lev}}(x) = l \land \forall m \ge n', p(x_m) \ge k \land \mathrm{lev}(x_m) \ge l \land r_l(x_{n'}) \le B \}] > 0 \]
	\end{itemize}
	Now, we write $M$ for the measurable set above.
	\[ M \coloneqq \{ x \in S^{\omega} \mid \min \mathbf{Inf}_{\mathrm{lev}}(x) = l \land \forall m \ge n', p(x_m) \ge k \land \mathrm{lev}(x_m) \ge l \land r_l(x_{n'}) \le B \} \]
	Similarly to \cite{AgrawalPOPL2018}, we construct a stochastic process, which leads to a contradiction.
	We define a filtration for $S^{\omega}$ by $\{ \mathcal{F}_m \}_{m \in \mathbb{N}} \coloneqq \{ \{ \pi_m^{-1}(C) \mid C \in \Sigma(S^{m}) \} \}_{m \in \mathbb{N}}$ where $\pi_m \colon S^{\omega} \to S^{m}$ is the projection of the first $m$ elements and $\Sigma(S^{m})$ is the product $\sigma$-algebra.
	Consider the following stochastic process $Y_m$.
	\[ Y_m(x) \quad\coloneqq\quad \begin{cases}
		0 & r_l(x_{n'}) > B \\
		B & r_l(x_{n'}) \le B \land m < n' \\
		r_l(x_m) & r_l(x_{n'}) \le B \land m \ge n' \land F(x) > m \\
		r_l(x_{F(x)}) & r_l(x_{n'}) \le B \land m \ge n' \land F(x) \le m \\
	\end{cases} \]
	where $F$ is the stopping time with respect to the filtration $\{ \mathcal{F}_m \}_{m = 0}^{\infty}$ and defined as follows.
	\[ F(x) \quad\coloneqq\quad \min \{ m \mid m \ge n' \land \mathrm{lev}(x_m) < l \} \]
	Let $D \coloneqq \{ x \in S^{\omega} \mid r_l(x_{n'}) \le B \}$.
	By definition, we have
	\[ Y_{n'}(x) = \begin{cases}
		0 & r_l(x_{n'}) > B \\
		r_{l}(x_{n'}) & \text{otherwise}
	\end{cases}, \]
	and thus, we have the following.
	\begin{equation}
		\mathbb{E}_{x_0}[Y_{n'}] \le \mathbb{E}_{x_0}[B \cdot \chi_{D}] = B \cdot \mathbb{P}_{x_0}[D]
		\label{eq:progress-measure-proof1}
	\end{equation}
	For each $m \ge n'$, we have
	\[ Y_{m + 1}(x) - Y_m(x) = \begin{cases}
		r_{l}(x_{m + 1}) - r_{l}(x_m) & \text{if } r_{l}(x_{n'}) \le B \land F(x) \ge m + 1 \\
		0 & \text{otherwise.}
	\end{cases} \]
	Let $D_{m} \coloneqq \{ x \mid r_{l}(x_{n'}) \le B \land F(x) \ge m + 1 \}$.
	Since $F$ is a stopping time, we have $D_{m} \in \mathcal{F}_m$.
	By identifying $D_m \subseteq S^{\omega}$ with the corresponding measurable set in $\Sigma(S^m)$, we have the following.
	\begin{align}
		&\mathbb{E}_{x_0}[Y_{m + 1}] - \mathbb{E}_{x_0}[Y_{m}] \\
		&= \iint \chi_{D_m}(x_1 \dots x_m) \cdot (r_{l}(x_{m + 1}) - r_{l}(x_m)) \,\mathrm{d} f_{x_m}(x_{m + 1}) \,\mathrm{d} (f_{m})_{x_0}(x_1 \dots x_m) \\
		&= \int \chi_{D_m}(x_1 \dots x_m) \cdot \left( \int r_{l}(x_{m + 1}) \,\mathrm{d} f_{x_m}(x_{m + 1}) - r_{l}(x_m) \right) \,\mathrm{d} (f_{m})_{x_0}(x_1 \dots x_m)
	\end{align}
	For each $x \in D_m$, if $\mathrm{lev}(x_m) = l$, then
	\[ \int r_{l}(x_{m + 1}) \,\mathrm{d} f_{x_m}(x_{m + 1}) - r_{l}(x_m) \le -1 \]
	and otherwise, we have
	\[ \int r_{l}(x_{m + 1}) \,\mathrm{d} f_{x_m}(x_{m + 1}) - r_{l}(x_m) \le 0 \]
	because $F(x) \ge m + 1$ implies $\mathrm{lev}(x_m) \ge l$.
	Thus
	\[ \mathbb{E}_{x_0}[Y_{m + 1}] - \mathbb{E}_{x_0}[Y_{m}] \le - \mathbb{P}_{x_0}[D_m \cap \{ x \mid \mathrm{lev}(x_m) = l \}]. \]
	Since we have $M \subseteq D_m$,
	\begin{equation}
		\mathbb{E}_{x_0}[Y_{m + 1}] - \mathbb{E}_{x_0}[Y_{m}] \le - \mathbb{P}_{x_0}[M \cap \{ x \mid \mathrm{lev}(x_m) = l \}].
		\label{eq:progress-measure-proof2}
	\end{equation}
	By~\eqref{eq:progress-measure-proof1} and~\eqref{eq:progress-measure-proof2}, for each $m \ge n'$, we have
	\begin{align}
		\mathbb{E}_{x_0}[Y_{m}] &= \mathbb{E}_{x_0}[Y_{n'}] + \sum_{i = n'}^{m - 1} (\mathbb{E}_{x_0}[Y_{i + 1}] - \mathbb{E}_{x_0}[Y_{i}]) \\
		&\le B \cdot \mathbb{P}_{x_0}[D] - \sum_{i = n'}^{m - 1} \mathbb{P}_{x_0}[M \cap \{ x \mid \mathrm{lev}(x_i) = l \}] \\
		&= B \cdot \mathbb{P}_{x_0}[D] - \mathbb{E}_{x_0}[\chi_M \cdot \sum_{i = n'}^{m - 1} \chi_{\{ x \mid \mathrm{lev}(x_i) = l \}}]
	\end{align}
	Let $\sharp_m(x) \coloneqq \sum_{i = n'}^{m - 1} \chi_{\{ x \mid \mathrm{lev}(x_i) = l \}}(x) = |\{ i \mid n' \le i < m \land \mathrm{lev}(x_i) = l \}|$.
	For any $j \in \mathbb{N}$, we have
	\[ M = \bigcup_{m \ge n'} (M \cap \{ x \mid \sharp_m(x) \ge j \}) \]
	Since the right-hand side is an increasing sequence of measurable sets, we have
	\[ \mathbb{P}_{x_0}[M] = \sup_{m \ge n'} \mathbb{P}_{x_0}[M \cap \{ x \mid \sharp_m(x) \ge j \}] \]
	There exists $m' \ge n'$ such that
	\[ \mathbb{P}_{x_0}[M \cap \{ x \mid \sharp_{m'}(x) \ge j \}] \ge \mathbb{P}_{x_0}[M] / 2 \]
	Now, we have
	\begin{align}
		\mathbb{E}_{x_0}[Y_{m'}] &\le B \cdot \mathbb{P}_{x_0}[D] - \mathbb{E}_{x_0}[\chi_M \cdot \sum_{i = n'}^{m - 1} \chi_{\{ x \mid \mathrm{lev}(x_i) = l \}}] \\
		&\le B \cdot \mathbb{P}_{x_0}[D] - \mathbb{E}_{x_0}[\chi_{M \cap \{ x \mid \sharp_{m'}(x) \ge j \}} \cdot j] \\
		&\le B \cdot \mathbb{P}_{x_0}[D] - j \cdot \mathbb{P}_{x_0}[M] / 2
	\end{align}
	We retrospectively choose $j$ sufficiently large so that $B \cdot \mathbb{P}_{x_0}[D] - j \cdot \mathbb{P}_{x_0}[M] / 2 < 0$ holds.
	Then, this contradicts the non-negativity of $Y_{m'}$.
\end{proof}

\begin{proof}[Proof of Theorem~\ref{thm:lexicographic-progress-measure}]
	Let $r \colon S \to \prod_{i = 1}^d [0, \infty)^{m_i}$ be a lexicographic parity progress measure.
	The key idea is that if we flatten $r$ into a single vector $r' \colon S \to [0, \infty)^{d'}$, then $r'$ is a parity progress measure for some parity condition $(p', I')$, which is equivalent to $(p, I)$.
	Specifically, we define a parity condition $(p' \colon S \to \OneToN{d'}, I')$ as follows.
	\[ d' \coloneqq \sum_{i=1}^d m_i \qquad p'(x) \coloneqq \sum_{i = 1}^{p(x)} m_i \qquad I' \coloneqq \{ \sum_{i = 1}^{j - 1} m_i + k \mid j \in I, 1 \le k \le m_j \} \]
	Then, $(p, I)$ and $(p', I')$ are equivalent.
	We define $r'(x) \coloneqq \mathrm{flat}(r(x))$.
	By definition, we have $\mathrm{flat}((r_1(x), \dots, r_{p(x)}(x))) = (r'_1(x), \dots, r'_{p'(x)}(x))$.
	By Lemma~\ref{lem:flatten-lexicographic-ordering}, $r'$ is a parity progress measure for the parity condition $(p', I')$.
\end{proof}

\end{toappendix}

\section{Algorithm for Synthesising LexPMSMs}
\label{sec:algorithm}

We provide a synthesis algorithm for LexPMSMs by extending the algorithm for lexicographic ranking supermartingales \cite{AgrawalPOPL2018}.
The description in this section focuses on presenting the idea and is separated from implementation details as much as possible.
Details of implementation are deferred to Section~\ref{sec:implementation}.
Although we focus on LexPMSMs in this section, a synthesis algorithm for LexGSSMs can be also obtained similarly (see \referappendix{Appendix}{E.2}{sec:lexgssm-synthesis}).

\subsection{Input: a pCFG with a Priority Partition}
As a representation of possibly infinite-state Markov chains, we consider (an abstract version of) probabilistic control flow graphs (pCFGs) without nondeterminism.

\begin{definition}
	An \emph{abstract probabilistic control flow graph} (abstract pCFG or simply pCFG) is a family of Markov kernels of the following form where $n$ is a number of real-valued program variables and $L$ is a finite set of locations equipped with discrete $\sigma$-algebra.
	\[ \{ f_l \colon \mathbb{R}^n \to \Giry (L \times \mathbb{R}^n) \}_{l \in L} \]
\end{definition}

A pCFG induces a Markov chain $L \times \mathbb{R}^n \to \Giry (L \times \mathbb{R}^n)$ and the next-time operator $\nexttime \colon \mathbf{Meas}(L \times \mathbb{R}^n, [0, \infty]) \to \mathbf{Meas}(L \times \mathbb{R}^n, [0, \infty])$ as in Section~\ref{sec:markov-chains}.
We often represent a measurable function $L \times \mathbb{R}^n \to [0, \infty]$ as a family $\{ g_l \colon \mathbb{R}^n \to [0, \infty] \}_{l \in L}$ of measurable functions.

A priority function for a pCFG is given as a measurable function $p \colon L \times \mathbb{R}^n \to \{ 1, \dots, d \}$.
In the synthesis algorithm that we present later, it is more convenient to represent a priority function as a family of measurable sets
$P = \{ P_{l, i} \subseteq \mathbb{R}^n \}_{l \in L, i \in \{1, \ldots, d\}}$
such that for each $l \in L$, $\{ P_{l, i} \}_{i = 1, \dots, d}$ is a partition of $\mathbb{R}^n$.
We call such a family $P$ a \emph{priority partition}.
These two representations are equivalent, as we can recover a priority function $p \colon L \times \mathbb{R}^n \to \{1, \ldots, d\}$ by $p(l, x) \coloneqq i$ for each $x \in P_{l, i}$.

\begin{remark}
	In the literature on supermartingale synthesis, it is common to take as input an invariant map $I = \{ I_l \subseteq \mathbb{R}^n \}_{l \in L}$, which is a measurable subset of states that are reachable from the initial state.
	In our setting, invariant maps can be included in the priority partition by considering the intersection $P'_{l, i} \coloneqq P_{l, i} \cap I_l$  of a priority partition $P$ and an invariant map $I$.
	Here, we relax the definition of priority partitions by allowing the union $\bigcup_{i = 1}^d P'_{l, i}$ to be a proper subset of $\mathbb{R}^n$.
	This extended definition of priority partitions can be justified by a reduction to the original definition obtained by assigning states outside the invariant map an even priority $d' \ge d$.
\end{remark}

\subsection{Output: a LexPMSM Map}
Given a pCFG with a priority partition, the aim of our algorithm is to synthesise a \emph{LexPMSM map} for the pCFG.
LexPMSM maps are defined based on reduced LexPMSMs in Definition~\ref{def:reduced-lexpmsm} but impose some condition on levels of lexicographic ordering, which we explain below.

We first define a \emph{level} of the nested lexicographic ordering $\succ^{(2)}$.
Let $r, r' \in \prod_{j = 1}^{\lceil d / 2 \rceil} [0, \infty)^{m_j}$.
We write $r_j, r'_j \in [0, \infty)^{m_j}$ for the $j$-th component of $r, r'$ and $r_{j, k}, r'_{j, k} \in [0, \infty)$ for the $k$-th element of $r_j, r'_j$.
Let $\mathbf{Lev} \coloneqq \{ (j, k) \mid j = 1, \dots, \lceil d / 2 \rceil; k = 1, \dots, m_j \}$ be the set of indices for the components of $r$ and $r'$.
We assume that elements of $\mathbf{Lev}$ are ordered lexicographically.
By Lemma~\ref{lem:flatten-lexicographic-ordering}, if $r \succ^{(2)} r'$, then there exists $(j, k)$ such that $r_{j, k} \ge 1 + r'_{j, k}$ and for any $(j', k') <_{\mathrm{lex}} (j, k)$, $r_{j', k'} \ge r'_{j', k'}$.
In this situation, we write $r \succ^{(2)}_{[(j, k)]} r'$ and say that $(j, k)$ is a \emph{level} of $r \succ^{(2)} r'$.

\begin{definition}\label{def:lexpmsm-map}
	A \emph{LexPMSM map} for a pCFG with a priority partition is a tuple
	\begin{equation}
		(\{ m_j \}_{j = 1, \dots, \lceil d / 2 \rceil},\quad \mathrm{lev},\quad \{ r_{(l, i), j, k} \}_{l \in L,\ i \in \{1, \ldots, d \},\ j \in \{1, \dots, \lceil d / 2 \rceil \},\ k \in \{ 1, \dots, m_j \}})
		\label{eq:lexpmsm-map}
	\end{equation}
	where $m_j$ is a positive integer, $\mathrm{lev} \colon L \times \{ 1, \dots, d \} \to \mathbf{Lev} \cup \{ \star \}$ is a (partial) function that assigns a level to each of the priority partition $\{ P_{l, i} \}_{l \in L, i \in \{1, \dots, d\}}$, and $r_{(l, i), j, k} \colon \mathbb{R}^n \to \mathbb{R}$ is a measurable function.
	Here, $r_{(l, i), j, k}$ defines the $(j, k)$-th component $r_{j, k} \colon L \times \mathbb{R}^n \to \mathbb{R}$ of a function $r \colon L \times \mathbb{R}^n \to \prod_{j = 1}^{\lceil d / 2 \rceil} \mathbb{R}^{m_j}$ by
	$r_{j, k}(l, x) \coloneqq \sum_{i=1}^d [x \in P_{l, i}] \cdot r_{(l, i), j, k}(x)$.
	That is, $r_{(l, i), j, k}$ specifies the restriction of $r_{j, k}$ to $P_{l, i}$.
	The function $\mathrm{lev}$ specifies a level of the nested lexicographic ordering.
	If $\mathrm{lev}(l, i) = \star$, it means that the corresponding nested lexicographic ordering is non-strict.
	The following conditions are required for LexPMSM maps.
	\begin{itemize}
		\item The function $r_{(l, i), j, k}$ is non-negative on $P_{l, i}$ for each $l, i, j, k$.
		\item If $\mathrm{lev}(l, i) = (j, k) \in \mathbf{Lev}$, then $j \le \lceil i / 2 \rceil$. If $\mathrm{lev}(l, i) = \star$, then $i$ is even.
		\item If $\mathrm{lev}(l, i) = (j, k)$, then for any $x \in P_{l, i}$, $r(l, x) \succ^{(2)}_{[(j, k)]} (\nexttime r)(l, x)$.
		\item If $\mathrm{lev}(l, i) = \star$, then for any $j \le \lceil i / 2 \rceil$, $k = 1, \dots, m_j$, and $x \in P_{l, i}$, $r_{j, k}(l, x) \ge (\nexttime r_{j, k})(l, x)$.
	\end{itemize}
	We often omit $m_j$ and $\mathrm{lev}$ when clear from the context.
	\qed
\end{definition}

By the conditions in Definition~\ref{def:lexpmsm-map}, a LexPMSM map indeed defines a LexPMSM:
if $\mathrm{lev}(l, i) = \star$, then we have a non-strict inequality $r(l, x) \succeq^{(2)}_{\lceil p(l, x) / 2 \rceil} (\nexttime r)(l, x)$ for any $x \in P_{l, i}$; and if $\mathrm{lev}(l, i) = (j, k)$, then we have a strict inequality $r(l, x) \succ^{(2)}_{\lceil p(l, x) / 2 \rceil} (\nexttime r)(l, x)$ for any $x \in P_{l, i}$ because $r(l, x) \succ^{(2)}_{[(j, k)]} (\nexttime r)(l, x)$ implies $r(l, x) \succ^{(2)}_{j} (\nexttime r)(l, x)$, which further implies $r(l, x) \succ^{(2)}_{\lceil p(l, x) / 2 \rceil} (\nexttime r)(l, x)$.
Therefore, we have the following proposition.

\begin{proposition}
	A LexPMSM map defines a reduced LexPMSM for a pCFG with a priority partition and thus ensures almost sure satisfaction of the corresponding parity condition.
	\qed
\end{proposition}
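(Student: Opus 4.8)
The plan is to show that the function $r$ determined by a LexPMSM map is a reduced LexPMSM in the sense of Definition~\ref{def:reduced-lexpmsm}, and then to invoke two facts already recorded in the text: that any reduced LexPMSM induces an (unreduced) LexPMSM by padding the even-priority blocks with $0$ (the remark following Definition~\ref{def:reduced-lexpmsm}), and that LexPMSMs are sound, i.e.\ Theorem~\ref{thm:lexicographic-progress-measure}. Concatenating these yields both halves of the statement.

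First I would check that the tuple \eqref{eq:lexpmsm-map} genuinely defines a measurable $r : L \times \mathbb{R}^n \to \prod_{j=1}^{\lceil d/2\rceil}[0,\infty)^{m_j}$: each component $r_{j,k}(l,x) = \sum_{i=1}^{d} [x \in P_{l,i}]\,r_{(l,i),j,k}(x)$ is measurable because the priority partition consists of measurable sets and each $r_{(l,i),j,k}$ is measurable, with exactly one summand active at any point since $\{P_{l,i}\}_i$ partitions (a measurable subset of) $\mathbb{R}^n$; outside that subset the relaxed-partition convention of the preceding remark applies. Then I would fix a state $(l,x)$, set $i \coloneqq p(l,x)$ so that $x \in P_{l,i}$, and split on $\mathrm{lev}(l,i)$. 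If $\mathrm{lev}(l,i) = \star$, the first bullet of Definition~\ref{def:lexpmsm-map} forces $i$ even, and the third bullet gives $r_{j',k'}(l,x) \ge (\nexttime r_{j',k'})(l,x)$ for all $j' \le \lceil i/2 \rceil$ and $k' \le m_{j'}$; componentwise domination on the first $\lceil i/2\rceil$ blocks is in particular $r(l,x) \succeq^{(2)}_{\lceil i/2\rceil} (\nexttime r)(l,x)$, the even-priority clause of a reduced LexPMSM. If instead $\mathrm{lev}(l,i) = (j,k)$, the second bullet gives $r(l,x) \succ^{(2)}_{[(j,k)]} (\nexttime r)(l,x)$; unpacking the definition of a level, or equivalently passing to flattened vectors via Lemma~\ref{lem:flatten-lexicographic-ordering}, this means blocks $1,\dots,j-1$ are componentwise dominated and block $j$ strictly decreases in the sense of $\succ$, so $r(l,x) \succ^{(2)}_{j} (\nexttime r)(l,x)$. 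Since $j \le \lceil i/2 \rceil$, this strict relation persists when the truncation is enlarged to $\lceil i/2 \rceil$ blocks, giving $r(l,x) \succ^{(2)}_{\lceil i/2 \rceil} (\nexttime r)(l,x)$; this is the odd-priority clause, and — because $\succ^{(2)}_{\lceil i/2\rceil}$ refines $\succeq^{(2)}_{\lceil i/2\rceil}$ — also the even-priority clause when $i$ is even. As the value $\star$ is disallowed for odd $i$, every odd priority falls into the second case, so both clauses of Definition~\ref{def:reduced-lexpmsm} hold.

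With $r$ shown to be a reduced LexPMSM, padding the even blocks with $0$ produces a LexPMSM, and Theorem~\ref{thm:lexicographic-progress-measure} then gives $\mathbb{P}_{s_0}[\mathbf{Parity}(p)] = 1$ for every initial state, which is the assertion. I expect the only non-mechanical point to be the bookkeeping in the truncation argument: that a $\succ$-level located in block $j$ remains a valid level after enlarging the truncation to $\lceil i/2\rceil \ge j$ blocks (the upward companion of Lemma~\ref{lem:truncation-lexicographic-ordering}), together with keeping straight the reindexing $i \mapsto \lceil i/2\rceil$ underlying reduced LexPMSMs and the correspondence between the index $(j,k) \in \mathbf{Lev}$ and its flattened position. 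Everything else is a direct unfolding of Definitions~\ref{def:reduced-lexpmsm} and~\ref{def:lexpmsm-map}.
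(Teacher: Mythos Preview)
Your proposal is correct and follows essentially the same route as the paper: the paper's argument (given just before the proposition) is precisely the case split on $\mathrm{lev}(l,i) \in \{\star\} \cup \mathbf{Lev}$, deriving $\succeq^{(2)}_{\lceil i/2\rceil}$ in the first case and the chain $\succ^{(2)}_{[(j,k)]} \Rightarrow \succ^{(2)}_{j} \Rightarrow \succ^{(2)}_{\lceil i/2\rceil}$ in the second, followed by an appeal to Theorem~\ref{thm:lexicographic-progress-measure}. Your version adds an explicit measurability check and a slightly more careful handling of the even-$i$ subcase with $\mathrm{lev}(l,i)\neq\star$, but these are elaborations of the same argument rather than a different approach.
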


One notable difference between LexPMSM maps and LexPMSMs is that in a LexPMSM map, all elements in $P_{l, i}$ share the same level $\mathrm{lev}(l, i)$, while in a LexPMSM, each point $x \in P_{l, i}$ may have a different level.
Such a restriction is common in the literature \cite[Definition~4.6]{AgrawalPOPL2018} and allows us to synthesise a LexPMSM map efficiently.
On the other hand, there is no guarantee that this restriction can be imposed without loss of generality to the best of our knowledge.

\subsection{Synthesising LexPMSM Maps}

\newcommand{\Synthesise}{\hyperref[alg:line:synthesise]{\textsc{Synthesise}}}
\begin{figure}[tbp]
	\begin{algorithmic}[1]
		\Require A pCFG $F = \{ f_l \colon \mathbb{R}^n \to \Giry (L \times \mathbb{R}^n) \}_{l \in L}$ with a priority partition $P = \{ P_{l, i} \}_{l \in L, i \in \{1, \ldots, d\}}$
		\Ensure A LexPMSM map or ``no LexPMSM map found''
		\Function{Synthesise}{$F, P$}\label{alg:line:synthesise}
		\State $T \gets \{ (l, i) \mid l \in L, i \in \{1, \ldots, d\} \}$\label{alg:line:t1}
		\For{$j \in \{ 1, 2, \dots, \lceil d / 2 \rceil \}$}
		\State $T \gets \{ (l, i) \mid (l, i) \in T \land i \ge 2 j - 1 \}$ \label{alg:line:t2}
		\State $k \gets 0$
		\Repeat
		\State $c_0 \gets \{ P_{l, i} \Rightarrow R_{l, i} \ge \nexttime (\{R_{l, i}\}_{l \in L, i \in \{1, \ldots, d\}}) \mid (l, i) \in T \} \cup \{ P_{l, i} \Rightarrow R_{l, i} \ge 0 \mid l \in L, i \in \{1, \ldots, d\} \}$ \label{alg:line:c0}
		\State $c_1 \gets \{ P_{l, i} \Rightarrow R_{l, i} \ge 1 + \nexttime (\{R_{l, i}\}_{l \in L, i \in \{1, \ldots, d\}}) \mid (l, i) \in T \}$ \label{alg:line:c1}
		\State $\{ r_{(l, i), j, k} \}_{l \in L, i \in \{ 1, \dots, d \}} \gets \Call{Solve}{c_0, c_1}$
		\State $T' \gets \{ (l, i) \in T \mid\quad \models P_{l, i} \Rightarrow r_{(l, i), j, k} \ge 1 + \nexttime (\{r_{(l, i), j, k}\}_{l \in L, i \in \{1, \ldots, d\}}) \}$
		\State $T \gets T \setminus T'$ \label{alg:line:t3}
		\State $k \gets k + 1$
		\Until{$T' = \emptyset$}
		\If{$\{ (l, i) \in T \mid i = 2 j - 1 \} \neq \emptyset$}
			\State \Return{no LexPMSM map found} \label{alg:line:not-found}
		\EndIf
		\If{$k = 1$}
			\State $m_j \gets 1$
			\State $\{ r_{(l, i), j, 0} \}_{l \in L, i \in \{ 1, \dots, d \}} \gets \{ 0 \}$
		\Else
			\State $m_j \gets k - 1$
		\EndIf
		\EndFor
		\State \Return{$\{ r_{(l, i), j, k} \mid l \in L, i \in \{ 1, \dots, d \}, j \in \OneToN{\lceil d / 2 \rceil}, k \in \OneToN{m_j} \}$}
		\EndFunction
	\end{algorithmic}
	\caption{Algorithm for synthesizing a LexPMSM map.}
	\label{alg:progress-measure-synthesis}
\end{figure}

The algorithm \Synthesise\ in Fig.~\ref{alg:progress-measure-synthesis} extends the idea of the algorithm for synthesising lexicographic ranking supermartingales \cite{AgrawalPOPL2018}.
\Synthesise\ iteratively synthesises $\{ r_{(l, i), j, k} \}_{l \in L, i \in \{ 1, \dots, d \}}$ starting from level $(j, k) = (1, 1)$, i.e., from the most significant level.
For each iteration, \Synthesise\ constructs two sets of constraints $c_0$ and $c_1$ over a set of function variables $\{ R_{l, i} \colon \mathbb{R}^n \to \mathbb{R} \}_{l \in L, i = 1, \dots, d}$ (Line~\ref{alg:line:c0},~\ref{alg:line:c1}).
The first set $c_0$ corresponds to the non-strict inequality condition and the non-negativity of LexPMSM maps, while the second set $c_1$ corresponds to the strict inequality condition.
To be precise, the next-time operator $\nexttime$ in Line~\ref{alg:line:c0},~\ref{alg:line:c1} should be applied to $\lambda (l, x). \sum_{i = 1}^d [x \in P_{l, i}] \cdot R_{l, i}(x)$ here, but for simplicity of notation, we write simply as $\nexttime (\{R_{l, i}\}_{l \in L, i \in \{1, \ldots, d\}})$.
The set $T \subseteq L \times \{ 1, \dots, d \}$ keeps track of active constraints throughout the algorithm.
\textsc{Solve} is a procedure that finds measurable functions $\{ R_{l, i} \colon \mathbb{R}^n \to \mathbb{R} \}_{l \in L, i = 1, \dots, d} = \{ r_{(l, i), j, k} \}_{l \in L, i \in \{ 1, \dots, d \}}$ such that (1) all constraints in $c_0$ are satisfied and (2) at least one constraint in $c_1$ is satisfied if possible.
In other words, $c_0$ and $c_1$ are treated as hard and soft constraints, respectively.
It is preferable that \textsc{Solve} returns a solution that satisfies as many constraints in $c_1$ as possible (see Remark~\ref{rem:solve-implementation}), but it is not necessary for soundness and relative completeness.
Note that $R_{l,i} = 0$ is always a solution that satisfies all constraints in $c_0$; thus, \textsc{Solve} always returns a solution.
If the strict inequality for $(l, i) \in T$ is satisfied by the returned solution, then $(l, i)$ is removed from $T$ (Line~\ref{alg:line:t3}), since we do not need the constraint for $(l, i)$ in the subsequent iterations for synthesising less significant components.
The inner loop ends when strict inequality constraints cannot be satisfied for any remaining $(l, i) \in T$.
At this point, if there remains any $(l, 2 j - 1) \in T$, then it means that \Synthesise\ failed to find a LexPMSM map that satisfies the condition for odd priority $2 j - 1$; thus, \Synthesise\ returns ``no LexPMSM map found'' (Line~\ref{alg:line:not-found}).
Otherwise, \Synthesise\ proceeds by incrementing $j$.
When constraints for all odd priorities are resolved, \Synthesise\ returns the synthesised LexPMSM map.

\begin{example}
	Consider applying \Synthesise\ in Fig.~\ref{alg:progress-measure-synthesis} to the Streett Markov chain in Example~\ref{ex:no-streett-supermartingale}.
	Here, we slightly modify the program in Example~\ref{ex:no-streett-supermartingale} by changing the type of variables from natural numbers to real numbers.
	Since $\{ (m, n) \in \mathbb{R}^2 \mid m > -1, n \ge 0 \}$ is an invariant of the modified program, we consider the priority partition given as $P_{l_0, 2} = P_{l_1, 3} = \{ (m, n) \in \mathbb{R}^2 \mid m > -1, n \ge 0 \}$, and $P_{l, i} = \emptyset$ for other $(l, i)$.
	\Synthesise\ proceeds as follows.
	For simplicity, we assume that $T$ is initialised to $\{ (l_0, 2), (l_1, 3) \}$ because constraints for other $(l, i)$ are trivially satisfied.
	In the first iteration for $(j, k) = (1, 0)$, the set $c_0$ is given as follows. We omit non-negativity constraints for simplicity.
	\begin{align}
		\forall m > -1, \forall n \ge 0,\quad R_{l_0, 2}(m, n) \quad&\ge\quad R_{l_1, 3}(n, n) \label{eq:algorithm-example-1} \\
		\forall m > -1, \forall n \ge 0,\quad R_{l_1, 3}(m, n) \quad&\ge\quad \ifexpr{m > 0}{R_{l_1, 3}(m - 1, n)}{R_{l_0, 2}(m, n + 1)} \label{eq:algorithm-example-2}
	\end{align}
	In this case, \texttt{Solve} cannot find a solution such that $T' \neq  \emptyset$ because both $l_0$ and $l_1$ are visited infinitely often.
	Thus, we have $m_1 = 1$, and the first component $r_1 \colon L \times \mathbb{R} \to [0, \infty)^{m_1}$ of the LexPMSM is given as $r_1(l, (m, n)) = 0$.
	Then, $j$ is incremented to $2$ and $T$ is updated to $\{ (l_1, 3) \}$.
	In the next iteration for $(j, k) = (2, 0)$, $c_0$ consists of only the constraint~\eqref{eq:algorithm-example-2}.
	\textsc{Solve} finds $r_{l_0, 2}(m, n) = 0$ and $r_{l_1, 3}(m, n) = m + 1$, which satisfies the strict inequality for $(l_1, 3)$.
	Thus, $T$ is updated to $\emptyset$.
	In the last iteration for $(j, k) = (2, 1)$, we have $T' = \emptyset$ and \Synthesise\ terminates.
	The second component $r_2 \colon L \times \mathbb{R} \to [0, \infty)^{m_2}$ of the LexPMSM is given as $m_2 = 1$, $r_2(l_0, (m, n)) = 0$, and $r_2(l_1, (m, n)) = m + 1$.
	\qed
\end{example}

\begin{remark}\label{rem:solve-implementation}
	If a constraint solver supports optimisation, as discussed in \cite{AgrawalPOPL2018}, $\textsc{Solve}(c_0, c_1)$ can be implemented as follows.
	For each $(l, i) \in T$, we introduce an auxiliary variable $\epsilon_{l, i}$ and replace $c_0$ and $c_1$ with the following constraints.
	\[ \forall (l, i) \in T,\qquad P_{l, i} \Rightarrow R_{l, i} \ge \epsilon_{l, i} + \nexttime (\{R_{l, i}\}_{l \in L, i \in \{1, \ldots, d\}}) \quad \land \quad 0 \le \epsilon_{l, i} \le 1 \]
	We then instruct the solver to maximise the sum $\sum_{(l, i) \in T} \epsilon_{l, i}$ of auxiliary variables.
	The solution returned by this procedure necessarily satisfies $\epsilon_{l, i} \in \{ 0, 1 \}$: $\epsilon_{l, i} = 1$ indicates that the corresponding constraint in $c_1$ is satisfied.
	The same idea can be applied even when the solver does not support optimization: in that case, we solve the constraints together with an additional constraint $\sum_{(l, i) \in T} \epsilon_{l, i} > 0$ (or $\ge 1$).
	A solution obtained in this way may not satisfy $\epsilon_{l, i} \in \{ 0, 1 \}$, but if $\epsilon_{l, i} > 0$, then the corresponding constraint in $c_1$ can be satisfied, since the inequality $R_{l, i} \ge \epsilon_{l, i} + \nexttime (\{R_{l, i}\}_{l \in L, i \in \{1, \ldots, d\}})$ can be rescaled by multiplying both sides by $1 / \epsilon_{l, i}$ using the linearity of $\nexttime$.
	If $\sum_{(l, i) \in T} \epsilon_{l, i} > 0$ is not satisfiable, then $\textsc{Solve}(c_0, c_1)$ returns $0$ as a trivial solution for $c_0$.
\end{remark}

Now, we state the soundness and relative completeness of the algorithm.
The proofs are deferred to the
\iflong
appendix.
\else
long version \cite{arxiv}.
\fi
\begin{theorem}[soundness]\label{thm:algorithm-soundness}
	If \Synthesise\ returns $\{ r_{(l, i), j, k} \}$, then it is a LexPMSM map.
	\qed
\end{theorem}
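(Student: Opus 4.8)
The plan is to recover, from the algorithm's execution, the level function $\mathrm{lev}$ that makes the returned data a LexPMSM map in the sense of Definition~\ref{def:lexpmsm-map}, and then verify the three clauses of that definition one by one. For $\mathrm{lev}$ I would track the auxiliary set $T$ through the whole run and set $\mathrm{lev}(l,i) \coloneqq (j,k)$ — with $(j,k) \in \mathbf{Lev}$ the output index of the component synthesised at that moment — whenever $(l,i)$ is removed from $T$ at Line~\ref{alg:line:t3}, i.e.\ whenever $(l,i)\in T'$; and $\mathrm{lev}(l,i) \coloneqq \star$ otherwise, namely when $(l,i)$ is discarded by the filter at Line~\ref{alg:line:t2} (or never leaves $T$). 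The $m_j$ and the functions $r_{(l,i),j,k}$ are those produced by the algorithm, and $r_{j,k}$ is glued from the pieces $r_{(l,i),j,k}$ via the priority partition as in Definition~\ref{def:lexpmsm-map}. The one structural fact needed is that $T$ is monotonically non-increasing over the whole run — after Line~\ref{alg:line:t1} it is only ever shrunk, at Lines~\ref{alg:line:t2} and~\ref{alg:line:t3} — so each pair $(l,i)$ leaves $T$ at a well-defined point (if at all), $\mathrm{lev}$ is well defined, and the useful components $r_{\cdot,j,k}$ are produced in the lexicographic order of their indices $(j,k)$.

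Given this, the three clauses follow by bookkeeping. For Clause~1: if $\mathrm{lev}(l,i)=(j,k)$ then $(l,i)$ survived the filter at the start of stage $j$, so $i\ge 2j-1$, which forces $j\le\lceil i/2\rceil$ for either parity; and if $i$ is odd, say $i=2j_0-1$, then since the algorithm did not abort at Line~\ref{alg:line:not-found} the set $\{(l,i)\in T\mid i=2j_0-1\}$ is empty after stage $j_0$, and since the filter at Line~\ref{alg:line:t2} cannot delete a pair with $i=2j_0-1$ at any stage $\le j_0$, such a pair must have entered some $T'$ at a stage $\le j_0$ and hence has a level in $\mathbf{Lev}$ rather than $\star$; so $\mathrm{lev}(l,i)=\star$ implies $i$ even. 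For Clause~2: if $\mathrm{lev}(l,i)=(j,k)$, then by monotonicity of $T$ the pair $(l,i)$ lies in $T$ at every component synthesised strictly before index $(j,k)$ and also at index $(j,k)$ itself (it is removed only afterwards); hence at each earlier component the non-strict constraint $P_{l,i}\Rightarrow R_{l,i}\ge\nexttime(\cdot)$ of Line~\ref{alg:line:c0} was present and is met by \textsc{Solve}'s output, giving $r_{j',k'}(l,x)\ge(\nexttime r_{j',k'})(l,x)$ for all $x\in P_{l,i}$ and all $(j',k')<_{\mathrm{lex}}(j,k)$, while membership $(l,i)\in T'$ is exactly $P_{l,i}\Rightarrow r_{(l,i),j,k}\ge 1+\nexttime(\{r_{(l,i),j,k}\})$, i.e.\ $r_{j,k}(l,x)\ge 1+(\nexttime r_{j,k})(l,x)$; unwinding the definitions this is precisely $r(l,x)\succ^{(2)}_{[(j,k)]}(\nexttime r)(l,x)$. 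Clause~3 is the same argument without the strict step: if $\mathrm{lev}(l,i)=\star$ then $(l,i)$ never enters any $T'$, so it stays in $T$ throughout every stage $j'\le\lceil i/2\rceil$ (the filter drops it only once $2j'-1>i$), and the non-strict constraint for $(l,i)$ is therefore present and satisfied at all components of those stages, giving $r_{j',k'}(l,x)\ge(\nexttime r_{j',k'})(l,x)$ for all such $j'$, all $k'$, and all $x\in P_{l,i}$. Non-negativity of the $r_{(l,i),j,k}$ and positivity of each $m_j$ are immediate from $c_0$ and from the two branches that set $m_j$.

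I expect the main obstacle to be purely a matter of careful bookkeeping rather than any real difficulty: one must reconcile the inner loop's counter $k$ (which starts at $0$, runs one final ``probe'' iteration with $T'=\emptyset$ that produces no useful component, and is then shifted by one to index the $m_j$ output components) with the component indices $(j,k)\in\mathbf{Lev}$ of Definition~\ref{def:lexpmsm-map}, and handle the degenerate stage where $k=1$ and the single stage-$j$ component is overwritten by the zero function. In that degenerate case no pair receives a level of the form $(j,\cdot)$, and the zero component trivially validates every non-strict inequality ($0\ge\nexttime 0=0$), so it does not disturb Clauses~2 and~3 for pairs resolved in later stages; an empty priority class $P_{l,i}$ is equally harmless, since then its strict constraint holds vacuously, so the pair enters $T'$ immediately and both clauses are vacuous for it.
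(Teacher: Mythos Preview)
Your proposal is correct and takes essentially the same approach as the paper: track membership in $T$ to determine, for each $(l,i)$, whether and when it is removed, and use this to verify the LexPMSM-map conditions. The paper packages this as an explicit loop invariant $\mathbf{Inv}(T,j,k)$ maintained through the annotated algorithm, whereas you argue more globally by appealing to the monotone decrease of $T$ and reading off the three clauses at the end; the underlying facts (non-strict constraints hold while $(l,i)\in T$, the strict one holds at the step it enters $T'$, odd priorities must be resolved before the check at Line~\ref{alg:line:not-found}) are identical. Your explicit construction of $\mathrm{lev}$ is in fact something the paper's proof admits it omits ``for simplicity'', and your discussion of the degenerate stage and the $k$-indexing offset is a genuine bookkeeping point that the paper's annotation glosses over.
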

\begin{appendixproof}[Proof of Theorem~\ref{thm:algorithm-soundness}]
	The following invariant $\mathbf{Inv}(T, j, k)$ holds in the inner loop of the algorithm (Fig.~\ref{alg:progress-measure-synthesis}): for each $(l, i) \in L \times \{1, \ldots, d\}$,
	\begin{align}
		(l, i) \in T \implies &i \ge 2 j - 1 \land \forall x \in P_{l, i}, r_{(l, i)} \downharpoonright (j, k) \succeq \nexttime (r \downharpoonright (j, k)) \\
		(l, i) \notin T \implies &\big(i < 2 j - 1 \land \text{$i$ is even} \land \forall x \in P_{l, i},r_{(l, i)} \downharpoonright (\lceil i / 2 \rceil, m_{\lceil i / 2 \rceil}) \succeq \nexttime (r \downharpoonright (\lceil i / 2 \rceil,  m_{\lceil i / 2 \rceil})) \\
			&\lor i < 2 j - 1 \land \text{$i$ is odd} \land \forall x \in P_{l, i},r_{(l, i)} \downharpoonright (\lceil i / 2 \rceil, m_{\lceil i / 2 \rceil}) \succ \nexttime (r \downharpoonright (\lceil i / 2 \rceil,  m_{\lceil i / 2 \rceil})) \\
			&\lor \forall x \in P_{l, i}, r_{(l, i)} \downharpoonright (j, k) \succ \nexttime (r \downharpoonright (j, k)) \big)
	\end{align}
	where we write $r \downharpoonright (j, k) = (r_{1, 1}, \dots, r_{1, m_1}, r_{2, 1}, \dots, r_{2, m_2}, \dots, r_{j, 1}, \dots, r_{j, k})$ and similarly, $r_{(l, i)} \downharpoonright (j, k) = (r_{(l, i), 1, 1}, \dots, r_{(l, i), 1, m_1}, r_{(l, i), 2, 1}, \dots, r_{(l, i), 2, m_2}, \dots, r_{(l, i), j, 1}, \dots, r_{(l, i), j, k})$.
	Precisely speaking, we have to ensure that a single level is assigned to each region $P_{l, i}$ according to the definition of LexPMSM maps (Definition~\ref{def:lexpmsm-map}), but we omit this part in the definition of the invariant for simplicity.
	However, it should be noted that whenever we write $\succeq$ or $\succ$ in the invariant, it implicitly means that levels satisfy this requirement.
	Note also that $\mathbf{Inv}(T, j, k)$ implicitly depends on $\{ m_{j'} \}_{j' = 1}^{j - 1}$ and $\{ r_{(l, i), j', k'} \mid (j', k') <_{\mathrm{lex}} (j, k) \land l \in L \land i = 1, \dots, d \}$, but we do not write them explicitly for simplicity.
	We show in Fig.~\ref{fig:algorithm-soundness-proof} the conditions satisfied at each point in the algorithm.
	At Line~\ref{alg:line:t2}, $\{ (l, i) \in T \mid i = 2 j - 2 \}$ is removed from $T$, which does not break the invariant because if $(l, 2 j - 2) \in T$, then we have $r_{(l, 2 j - 2)} \downharpoonright (j - 1, m_{j - 1}) \succeq \nexttime (r \downharpoonright (j - 1, m_{j - 1}))$.
	At Line~\ref{alg:line:t3}, if $(l, i) \in T'$, then we have $i \ge 2 j - 1$ and $r_{(l, i), j, k} \ge 1 + \nexttime (\{r_{(l, i), j, k}\}_{l \in L, i \in \{1, \ldots, d\}})$.
	Thus, removing $T'$ from $T$ does not break the invariant.
	When the algorithm ends, we have $T \subseteq \{ (l, i) \mid i \ge 2 \lceil d / 2 \rceil \}$.
	By the invariant, $r$ is a LexPMSM map.
\end{appendixproof}

\begin{toappendix}
\begin{figure}
	\begin{algorithmic}
		\Function{Synthesize}{$F, P$}
		\State $T \gets \{ (l, i) \mid l \in L, i \in \{1, \ldots, d\} \}$
		\LComment{$\mathbf{Inv}(T, 0, 0)$}
		\For{$j \in \{ 1, 2, \dots, \lceil d / 2 \rceil \}$}
		\LComment{$\mathbf{Inv}(T, j - 1, m_{j - 1}) \land T \subseteq \{ (l, i) \mid i \ge 2 j - 2 \}$}
		\State $T \gets \{ (l, i) \mid (l, i) \in T \land i \ge 2 j - 1 \}$
		\LComment{$\mathbf{Inv}(T, j, 0)$}
		\State $k \gets 0$
		\Repeat
		\LComment{$\mathbf{Inv}(T, j, k)$}
		\State $c_0 \gets \{ P_{l, i} \Rightarrow R_{l, i} \ge \nexttime (\{R_{l, i}\}_{l \in L, i \in \{1, \ldots, d\}}) \mid (l, i) \in T \} \cup \{ P_{l, i} \Rightarrow R_{l, i} \ge 0 \mid l \in L, i \in \{1, \ldots, d\} \}$
		\State $c_1 \gets \{ P_{l, i} \Rightarrow R_{l, i} \ge 1 + \nexttime (\{R_{l, i}\}_{l \in L, i \in \{1, \ldots, d\}}) \mid (l, i) \in T \}$
		\State $\{ r_{(l, i), j, k} \}_{l \in L, i \in \{ 1, \dots, d \}} \gets \Call{Solve}{c_0, c_1}$
		\LComment{$\forall (l, i) \in T, r_{(l, i), j, k} \ge \nexttime (\{r_{(l, i), j, k}\}_{l \in L, i \in \{1, \ldots, d\}})$}
		\State $T' \gets \{ (l, i) \in T \mid\quad \models P_{l, i} \Rightarrow r_{(l, i), j, k} \ge 1 + \nexttime (\{r_{(l, i), j, k}\}_{l \in L, i \in \{1, \ldots, d\}}) \}$
		\LComment{$T' \subseteq T \land \forall (l, i) \in T', r_{(l, i), j, k} \ge 1 + \nexttime (\{r_{(l, i), j, k}\}_{l \in L, i \in \{1, \ldots, d\}})$}
		\State $T \gets T \setminus T'$
		\LComment{$(T' = \emptyset \implies \mathbf{Inv}(T, j, k)) \land (T' \neq \emptyset \implies \mathbf{Inv}(T, j, k + 1))$}
		\State $k \gets k + 1$
		\LComment{$(T' = \emptyset \implies \mathbf{Inv}(T, j, k - 1)) \land (T' \neq \emptyset \implies \mathbf{Inv}(T, j, k))$}
		\Until{$T' = \emptyset$}
		\LComment{$\mathbf{Inv}(T, j, k - 1)$}
		\If{$\{ (l, i) \in T \mid i = 2 j - 1 \} \neq \emptyset$}
			\State \Return{no LexPMSM map found}
		\EndIf
		\LComment{$\mathbf{Inv}(T, j, k - 1) \land \{ (l, i) \in T \mid i = 2 j - 1 \} = \emptyset$}
		\If{$k = 1$}
			\State $m_j \gets 1$
			\State $\{ r_{(l, i), j, 0} \}_{l \in L, i \in \{ 1, \dots, d \}} \gets \{ 0 \}$
		\Else
			\State $m_j \gets k - 1$
		\EndIf
		\LComment{$m_j > 0 \land \mathbf{Inv}(T, j, m_j) \land \{ (l, i) \in T \mid i = 2 j - 1 \} = \emptyset$}
		\EndFor
		\LComment{$\mathbf{Inv}(T, \lceil d / 2 \rceil, m_{\lceil d / 2 \rceil}) \land T \subseteq \{ (l, i) \mid i \ge 2 \lceil d / 2 \rceil \}$}
		\State \Return{$\{ r_{(l, i), j, k} \mid l \in L, i \in \{ 1, \dots, d \}, j \in \OneToN{\lceil d / 2 \rceil}, k \in \OneToN{m_j} \}$}
		\EndFunction
	\end{algorithmic}
	\caption{Proof of Theorem~\ref{thm:algorithm-soundness}.}
	\label{fig:algorithm-soundness-proof}
\end{figure}
\end{toappendix}

The completeness of the algorithm is relative to the completeness of \textsc{Solve}.
We say that \textsc{Solve} is \emph{complete} if for any set $c_0, c_1$ of constraints over function variables, \textsc{Solve} always finds a solution that satisfies all constraints in $c_0$ and at least one constraint in $c_1$ whenever such a solution exists.
Of course, we can modify the above definition by restricting the class of solutions and the class of constraints, e.g., to linear functions and linear constraints, respectively.

\begin{theorem}[relative completeness]\label{thm:algorithm-completeness}
	Assume that \textsc{Solve} is complete.
	If there exists a LexPMSM map for a given pCFG with a priority function, then \Synthesise\ returns a LexPMSM map.
	\qed
\end{theorem}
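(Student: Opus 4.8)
The plan is to adapt the relative–completeness argument for lexicographic ranking supermartingales of \cite{AgrawalPOPL2018} to the two nested loops of Fig.~\ref{alg:progress-measure-synthesis}. Fix a LexPMSM map $r^\ast = (\{m^\ast_j\}_j,\ \mathrm{lev}^\ast,\ \{r^\ast_{(l,i),j,k}\})$ for the input pCFG with priority partition $P$; such a map exists by hypothesis. The key property we exploit is the \emph{uniform-level} discipline of Definition~\ref{def:lexpmsm-map} (which a plain LexPMSM lacks): every region $P_{l,i}$ carries a single level. If $\mathrm{lev}^\ast(l,i) = (b,c) \in \mathbf{Lev}$ then $r^\ast(l,x) \succ^{(2)}_{[(b,c)]} (\nexttime r^\ast)(l,x)$ for \emph{all} $x \in P_{l,i}$, so the flattened component at position $(b,c)$ strictly decreases on all of $P_{l,i}$ while every earlier flattened component is non-increasing there; moreover $b \le \lceil i/2\rceil$. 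If $\mathrm{lev}^\ast(l,i) = \star$ then $i$ is even and every block up to $\lceil i/2\rceil$ is lexicographically non-increasing on all of $P_{l,i}$. Call $(l,i)$ a \emph{$j$-critical region} when $\mathrm{lev}^\ast(l,i) = (j,c)$ for some $c$.

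I will show, by induction on $j$, the invariant $\mathbf{Inv}^\ast(j)$: just after Line~\ref{alg:line:t2} of the outer-loop iteration for $j$, the active set $T$ contains no region $(l,i)$ with $\mathrm{lev}^\ast(l,i) = (b,c)$ for any $b < j$. The base case $j=1$ is vacuous. The induction step reduces to the \emph{inner-loop claim}: during the inner loop of iteration $j$, every $j$-critical region currently in $T$ is removed from $T$ (at Line~\ref{alg:line:t3}) before the loop exits. Granting this, when the loop exits $T$ contains no region whose $\mathrm{lev}^\ast$ lies in a block $\le j$ (using $\mathbf{Inv}^\ast(j)$ for blocks $<j$ and the claim for block $j$). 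Since $2j-1$ is odd, $\mathrm{lev}^\ast(l,2j-1)$ cannot be $\star$, so it equals some $(b,c)$ with $b \le \lceil (2j-1)/2\rceil = j$; hence $\{(l,i)\in T \mid i = 2j-1\} = \emptyset$ and the test at Line~\ref{alg:line:not-found} does not trigger. As Line~\ref{alg:line:t2} of the next iteration only shrinks $T$, $\mathbf{Inv}^\ast(j+1)$ follows. Running over all $j \le \lceil d/2\rceil$ we conclude that the algorithm returns a family $\{r_{(l,i),j,k}\}$, which by Theorem~\ref{thm:algorithm-soundness} is a LexPMSM map.

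For the inner-loop claim, I use \emph{block $j$} of $r^\ast$, i.e.\ the $m^\ast_j$ functions $r^\ast_{\bullet,j,1},\dots,r^\ast_{\bullet,j,m^\ast_j}$, as a witness. Suppose the loop exits with a $j$-critical region still in $T$, and pick such a region $(l_0,i_0)$ with $\mathrm{lev}^\ast(l_0,i_0) = (j,c_0)$ where $c_0$ is minimal among the $\mathrm{lev}^\ast$-sublevels of the $j$-critical regions remaining in $T$ at exit. Then the single scalar assignment $R_{l,i} \coloneqq r^\ast_{(l,i),j,c_0}$ satisfies all hard constraints $c_0$ built at Line~\ref{alg:line:c0}: on a $j$-critical region with sublevel $c > c_0$ the $c_0$-th component is non-increasing, on one with sublevel $c_0$ it strictly decreases, regions with sublevel $<c_0$ are not in $T$ by minimality, and on any region whose $\mathrm{lev}^\ast$ is $\star$ or lies in a block $>j$ all of block $j$ is non-increasing; and it satisfies at least the soft constraint of Line~\ref{alg:line:c1} for $(l_0,i_0)$. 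By completeness of \textsc{Solve}, $\textsc{Solve}(c_0,c_1)$ then returns a solution with $T' \neq \emptyset$, contradicting that the loop exited. The same computation shows that whenever a $j$-critical region is present in $T$ the loop makes progress ($T'\neq\emptyset$), so $T$ strictly shrinks at each iteration; as $T$ is finite, every $j$-critical region is removed before the loop terminates.

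The main obstacle is the bookkeeping that confines the witness at outer-loop iteration $j$ to block $j$ of $r^\ast$, so that the per-region/per-level layout of the \emph{synthesised} LexPMSM map lines up with the levels prescribed by $r^\ast$. This rests on the uniform-level discipline of LexPMSM maps — without it (e.g.\ for plain LexPMSMs, whose level may vary pointwise inside a region) one cannot turn a component of $r^\ast$ into a valid hard/soft solution of $(c_0,c_1)$, and the statement in fact fails — and on the weakness of \textsc{Solve}, which returns \emph{some} solution with one strict constraint rather than an optimal one, forcing the shrink-and-repeat termination argument rather than a one-shot identification of synthesised and reference components. The remaining details (the $0$-based indexing of the $k$-loop and the $k=1$ versus $k>1$ post-processing, and Lemma~\ref{lem:flatten-lexicographic-ordering} to pass between nested and flat vectors) are routine once $\mathbf{Inv}^\ast(j)$ is in place.
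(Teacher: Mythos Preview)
Your proposal is correct and follows essentially the same approach as the paper. The paper phrases the invariant via auxiliary sets $\hat{T}_{j,k}$ (built from the reference map's level function $\hat{\mathrm{lev}}$) and tracks $T \subseteq \hat{T}_{j,1}$ through the inner loop, but your invariant $\mathbf{Inv}^\ast(j)$ is equivalent, and your explicit choice of the minimal remaining sublevel $c_0$ to build the witness is exactly what underlies the paper's terse ``there exists $k$ such that $\{\hat r_{(l,i),j,k}\}$ satisfies $c_0$ and at least one constraint in $c_1$''.
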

\begin{appendixproof}[Proof of Theorem~\ref{thm:algorithm-completeness}]
	Let $\hat{r} = \{ \hat{r}_{(l, i), j, k} \mid l \in L; i = 1, \dots, d; j = 1, \dots, \lceil d / 2 \rceil;  k = 1, \dots, \hat{m}_i \}$ be a LexPMSM map whose level function is $\hat{\mathrm{lev}} \colon L \times \{ 1, \dots, d \} \to \mathbf{Lev} \cup \{ \star \}$.
	The algorithm is terminating because the set $T$ is finite and strictly decreases in each iteration of the inner loop.
	Thus, it suffices to show that the algorithm never returns ``no LexPMSM map found''.
	\[ \hat{T}_{j, k} \coloneqq \{ (l, i) \mid (\hat{\mathrm{lev}}(l, i) \neq \star \land \hat{\mathrm{lev}}(l, i) \ge_{\mathrm{lex}} (j, k)) \lor (\hat{\mathrm{lev}}(l, i) = \star \land i \ge 2 j - 1) \} \]
	Here, $(j, k)$ ranges over $\mathbf{Lev}$, and we extend the definition of $\hat{T}_{j, k}$ to $j = 0$ or $k = m_j + 1$ in the obvious way.
	Then, we have the following.
	\begin{itemize}
		\item $\hat{T}_{0, k} = L \times \{ 1, \dots, d \}$.
		\item For each $j = 1, \dots, \lceil d / 2 \rceil$, $\hat{T}_{j, 1} \supseteq \hat{T}_{j, 2} \supseteq \dots \supseteq \hat{T}_{j, m_j + 1}$, and $\hat{T}_{j, k} \setminus \hat{T}_{j, k + 1} = \{ (l, i) \mid \hat{\mathrm{lev}}(l, i) = (j, k) \}$.
		\item For each $j = 1, \dots, \lceil d / 2 \rceil$, $\hat{T}_{j, m_j + 1} \supseteq \hat{T}_{j + 1, 1}$ and $\hat{T}_{j, m_j + 1} \setminus \hat{T}_{j + 1, 1} = \{ (l, i) \mid \hat{\mathrm{lev}}(l, i) = \star \land i = 2 j \}$.
		\item For each $k = 1, \dots, m_j$, if $(l, i) \in \hat{T}_{j, k}$, then $\hat{r}_{(l, i), j, k} \ge \nexttime (\{\hat{r}_{(l, i), j, k}\}_{l \in L, i \in \{1, \ldots, d\}})$.
		\item For each $k = 1, \dots, m_j$, if $(l, i) \in \hat{T}_{j, k} \setminus \hat{T}_{j, k + 1}$, then $\hat{r}_{(l, i), j, k} \ge 1 + \nexttime (\{\hat{r}_{(l, i), j, k}\}_{l \in L, i \in \{1, \ldots, d\}})$.
	\end{itemize}
	Now, it is shown that ``no LexPMSM map found'' is never returned as in Fig.~\ref{fig:algorithm-completeness-proof}.
\end{appendixproof}

\begin{toappendix}
\begin{figure}
	\begin{algorithmic}
		\Function{Synthesize}{$F, P$}
		\State $T \gets \{ (l, i) \mid l \in L, i \in \{1, \ldots, d\} \}$
		\LComment{$T \subseteq \hat{T}_{0, 1}$}
		\For{$j \in \{ 1, 2, \dots, \lceil d / 2 \rceil \}$}
		\LComment{$T \subseteq \hat{T}_{j - 1, \hat{m}_{j - 1} + 1}$}
		\State $T \gets \{ (l, i) \mid (l, i) \in T \land i \ge 2 j - 1 \}$
		\LComment{$T \subseteq \hat{T}_{j, 1}$}
		\State $k \gets 0$
		\Repeat
		\State $c_0 \gets \{ P_{l, i} \Rightarrow R_{l, i} \ge \nexttime (\{R_{l, i}\}_{l \in L, i \in \{1, \ldots, d\}}) \mid (l, i) \in T \} \cup \{ P_{l, i} \Rightarrow R_{l, i} \ge 0 \mid l \in L, i \in \{1, \ldots, d\} \}$
		\State $c_1 \gets \{ P_{l, i} \Rightarrow R_{l, i} \ge 1 + \nexttime (\{R_{l, i}\}_{l \in L, i \in \{1, \ldots, d\}}) \mid (l, i) \in T \}$
		\LComment{If $(\hat{T}_{j, 1} \setminus \hat{T}_{j, \hat{m}_j + 1}) \cap T \neq \emptyset$, then there exists $k$ such that $\{ \hat{r}_{(l, i), j, k} \}$ satisfies $c_0$ and at least one constraint in $c_1$.}
		\State $\{ r_{(l, i), j, k} \}_{l \in L, i \in \{ 1, \dots, d \}} \gets \Call{Solve}{c_0, c_1}$
		\State $T' \gets \{ (l, i) \in T \mid\quad \models P_{l, i} \Rightarrow r_{(l, i), j, k} \ge 1 + \nexttime (\{r_{(l, i), j, k}\}_{l \in L, i \in \{1, \ldots, d\}}) \}$
		\LComment{$(\hat{T}_{j, 1} \setminus \hat{T}_{j, \hat{m}_j + 1}) \cap T \neq \emptyset \implies T' \neq \emptyset$}
		\State $T \gets T \setminus T'$
		\State $k \gets k + 1$
		\Until{$T' = \emptyset$}
		\LComment{$T \subseteq \hat{T}_{j, \hat{m}_j + 1}$}
		\If{$\{ (l, i) \in T \mid i = 2 j - 1 \} \neq \emptyset$}
			\LComment{false}
			\State \Return{no LexPMSM map found}
		\EndIf
		\If{$k = 1$}
			\State $m_j \gets 1$
			\State $\{ r_{(l, i), j, 0} \}_{l \in L, i \in \{ 1, \dots, d \}} \gets \{ 0 \}$
		\Else
			\State $m_j \gets k - 1$
		\EndIf
		\EndFor
		\State \Return{$\{ r_{(l, i), j, k} \mid l \in L, i \in \{ 1, \dots, d \}, j \in \OneToN{\lceil d / 2 \rceil}, k \in \OneToN{m_j} \}$}
		\EndFunction
	\end{algorithmic}
	\caption{Proof of Theorem~\ref{thm:algorithm-completeness}.}
	\label{fig:algorithm-completeness-proof}
\end{figure}
\end{toappendix}

\paragraph{Computational complexity.}
When synthesising Streett supermartingales \cite{AbateCAV2024}, we need to solve constraints of size $O(|L|)$ for each of $d / 2$ Streett pairs obtained from Lemma~\ref{lem:streett-pair-parity}.
On the other hand, when synthesising LexPMSM maps, we need to solve constraints $c_0$ and $c_1$ of size $O(d \cdot |L|)$ for each iteration of the inner loop.
The number of components of LexPMSMs (i.e.\ $\sum_i m_i$) is upper bounded by $O(d \cdot |L|)$ since the size of $T'$ strictly decreases for each iteration.

\begin{toappendix}
\subsection{Synthesising LexPMSM maps with invariants}\label{sec:invariant-synthesis}
It is possible to synthesise invariants together with LexPMSM maps by modifying the algorithm in Fig.~\ref{alg:progress-measure-synthesis} as follows.

\begin{definition}
	Let $F : S \to \Giry S$ be a Markov chain.
	A measurable subset $I \subseteq S$ is an \emph{invariant} if for each $s \in I$, the support of $F(s)$ is included in $I$: $\forall s \in I, \mathrm{supp}(F(s)) \subseteq I$.
\end{definition}

\begin{definition}
	Given a pCFG $\{ f_l \colon \mathbb{R}^n \to \Giry (L \times \mathbb{R}^n) \}_{l \in L}$, we say $\{ I_l \subseteq \mathbb{R}^n \}_{l \in L}$ is an \emph{invariant map} if $I \coloneqq \bigcup_{l \in L} \{ l \} \times I_l$ is an invariant of the Markov chain on $L \times \mathbb{R}^n$ induced by the pCFG, i.e.,
	\begin{equation}
		\forall l \in L, \forall x \in I_l, \mathrm{supp}(f_l(x)) \subseteq \bigcup_{l \in L} \{ l \} \times I_l.
		\label{eq:invariant-map}
	\end{equation}
\end{definition}

\begin{example}
	Consider the pCFG in Fig.~\ref{ex:no-streett-supermartingale}.
	Then, $\{ I_l \}_{\{ l = l_0, l_1 \}}$ is an invariant map if the following holds.
	\begin{align}
		&(m, n) \in I_{l_0} \implies (n, n) \in I_{l_1} \\
		&(m, n) \in I_{l_1} \land m > 0 \implies (m - 1, n) \in I_{l_1} \\
		&(m, n) \in I_{l_1} \land \lnot (m > 0) \implies (m, n + 1) \in I_{l_0}
	\end{align}
\end{example}

An invariant $I$ induces a sub-Markov chain $F|_I : I \to \Giry I$ in the obvious way, and we can restrict our attention to this sub-Markov chain when we want to prove almost-sure termination from states in $I$.
\begin{definition}
	Let $F \colon S \to \Giry S$ be a parity Markov chain with a priority function $p \colon S \to \OneToN{d}$ and $I$ be an invariant of $F$.
	A \emph{LexPMSM on $I$} is a non-negative measurable function $r \colon S \to \prod_{i = 1}^d [0, \infty)^{m_i}$ that satisfies the condition of Definition~\ref{def:lexpmsm} only for states in $I$:
	for each $x \in I$, if $p(x)$ is even, then $r(x) \succeq^{(2)}_{p(x)} (\nexttime r)(x)$, and if $p(x)$ is odd, then $r(x) \succ^{(2)}_{p(x)} (\nexttime r)(x)$.
\end{definition}

We define the notion of LexPMSM maps on invariant maps for pCFGs similarly.
By restricting the domain of $r$ to the invariant $I$, we obtain a LexPMSM  for the sub-Markov chain $F|_I$, which ensures almost-sure satisfaction of the parity condition from states in $I$.

Fig.~\ref{alg:progress-measure-synthesis-with-invariants} shows an algorithm for synthesising a LexPMSM map together with an invariant map.
The main difference from the algorithm in Fig.~\ref{alg:progress-measure-synthesis} is that we add constraints $c_I$, which ensure that $\{ I_l \}_{l \in L}$ is an invariant map.
In Fig.~\ref{alg:progress-measure-synthesis-with-invariants}, we assume that \textsc{Solve} can find a solution for $I$ and $R$.
The algorithm synthesises an invariant map for each iteration, and the conjunction of these invariant maps is returned as an invariant map for the whole algorithm.
\begin{figure}[tbp]
	\begin{algorithmic}
		\Require A pCFG $F = \{ f_l \colon \mathbb{R}^n \to \Giry (L \times \mathbb{R}^n) \}_{l \in L}$ with a priority partition $P = \{ P_{l, i} \}_{l \in L, i \in \{1, \ldots, d\}}$ and an initial configuration $(l_0, x_0) \in L \times \mathbb{R}^n$
		\Ensure An invariant map and a LexPMSM map, or ``not found''
		\Function{SynthesiseLexPMSMAndInvariant}{$F, P$}
		\State $\{ \mathbf{Inv}_l \}_{l \in L} \gets \{ \mathbb{R}^n \}_{l \in L}$
		\State $T \gets \{ (l, i) \mid l \in L, i \in \{1, \ldots, d\} \}$
		\For{$j \in \{ 1, 2, \dots, \lceil d / 2 \rceil \}$}
		\State $T \gets \{ (l, i) \mid (l, i) \in T \land i \ge 2 j - 1 \}$
		\State $k \gets 0$
		\Repeat
		\State $c_I \gets \{ x_0 \in I_{l_0} \} \cup \{ x \in I_l \land (l', x') \in \mathrm{supp}(f_l(x)) \implies x' \in I_{l'} \mid l, l' \in L \}$
		\State $c_0 \gets \{ I_l \land P_{l, i} \Rightarrow R_{l, i} \ge \nexttime (\{R_{l, i}\}_{l \in L, i \in \{1, \ldots, d\}}) \mid (l, i) \in T \} \cup \{ I_l \land P_{l, i} \Rightarrow R_{l, i} \ge 0 \mid l \in L, i \in \{1, \ldots, d\} \}$
		\State $c_1 \gets \{ I_l \land P_{l, i} \Rightarrow R_{l, i} \ge 1 + \nexttime (\{R_{l, i}\}_{l \in L, i \in \{1, \ldots, d\}}) \mid (l, i) \in T \}$
		\State $\{ I_l \}_{l \in L},\ \{ r_{(l, i), j, k} \}_{l \in L, i \in \{ 1, \dots, d \}} \gets \Call{Solve}{c_I \cup c_0, c_1}$
		\State $T' \gets \{ (l, i) \in T \mid\quad \models I_l \land P_{l, i} \Rightarrow r_{(l, i), j, k} \ge 1 + \nexttime (\{r_{(l, i), j, k}\}_{l \in L, i \in \{1, \ldots, d\}}) \}$
		\State $T \gets T \setminus T'$
		\State $k \gets k + 1$
		\If{$T' \neq \emptyset$}
			\State $\{ \mathbf{Inv}_l \}_{l \in L} \gets \{ \mathbf{Inv}_l \cap I_l \}_{l \in L}$
		\EndIf
		\Until{$T' = \emptyset$}
		\If{$\{ (l, i) \in T \mid i = 2 j - 1 \} \neq \emptyset$}
			\State \Return{not found}
		\EndIf
		\If{$k = 1$}
			\State $m_j \gets 1$
			\State $\{ r_{(l, i), j, 0} \}_{l \in L, i \in \{ 1, \dots, d \}} \gets \{ 0 \}$
		\Else
			\State $m_j \gets k - 1$
		\EndIf
		\EndFor
		\State \Return{$\{ \mathbf{Inv}_l \}_{l \in L},\quad \{ r_{(l, i), j, k} \mid l \in L, i \in \{ 1, \dots, d \}, j \in \OneToN{\lceil d / 2 \rceil}, k \in \OneToN{m_j} \}$}
		\EndFunction
	\end{algorithmic}
	\caption{Algorithm for synthesizing a LexPMSM map with invariants.}
	\label{alg:progress-measure-synthesis-with-invariants}
\end{figure}

\subsection{Synthesising LexGSSMs}
\label{sec:lexgssm-synthesis}

Suppose we have a pCFG $\{ f_l \colon \mathbb{R}^n \to \Giry (L \times \mathbb{R}^n) \}_{l \in L}$.
A Streett pair for the pCFG is given as $(A, B) = (\bigcup_{l \in L} \{ l \} \times A_l, \bigcup_{l \in L} \{ l \} \times B_l)$ where $A_l, B_l \subseteq \mathbb{R}^n$ are measurable subsets for each $l \in L$.
We call $\{ (A_l, B_l) \}_{l \in L}$ a \emph{Streett pair map} for the pCFG.
Note that a Streett pair map represents a single Streett pair, and this is not the conjunction of multiple Streett pairs.

\begin{definition}
	Suppose we have a pCFG $\{ f_l \colon \mathbb{R}^n \to \Giry (L \times \mathbb{R}^n) \}_{l \in L}$, a Streett pair map $\{ (A_l, B_l) \}_{l \in L}$, and an invariant map $\{ I_l \}_{l \in L}$.
	A \emph{LexGSSM map on the invariant map} is a pair of a measurable function and a family of measurable functions
	\[ \mathrm{lev}_1 \colon L \to \{ 1, \dots, m \}, \qquad \mathrm{lev}_2 \colon L \to \{ 1, \dots, m \} \cup \{ \star \}, \qquad \{ r_{l, j} \colon \mathbb{R}^n \to \mathbb{R} \}_{l \in L, j \in \OneToN{m}} \]
	such that the function $r \colon L \times \mathbb{R}^n \to \mathbb{R}^m$ defined by $r(l, x) \coloneqq (r_{l, 1}(x), \dots, r_{l, m}(x))$ satisfies the following conditions.
	\begin{itemize}
		\item The function $r_{l, j}$ is non-negative on $I_l$.
		\item If $\mathrm{lev}_1(l) = j$, then $r(l, x) \succ_{[j]} (\mathbb{X} r)(l, x)$ for any $x \in I_l \cap A_l \setminus B_l$.
		\item If $\mathrm{lev}_2(l) = j \neq \star$, then $r(l, x) \succ_{[j]} (\mathbb{X} r)(l, x)$ for any $x \in I_l \setminus (A_l \cup B_l)$.
		\item If $\mathrm{lev}_2(l) = \star$, then $r(l, x) \ge (\mathbb{X} r)(l, x)$ for any $x \in I_l \setminus (A_l \cup B_l)$.
	\end{itemize}
	Similarly to Definition~\ref{def:lexpmsm-map}, this definition ensures that the function $r$ is non-negative and satisfies the ranking condition for LexGSSMs (Definition~\ref{def:lexicographic-generalised-streett-supermartingale}) for each state in the invariant.
	We often omit the level functions $\mathrm{lev}_1$ and $\mathrm{lev}_2$ when they are clear from the context.
\end{definition}

Fig.~\ref{alg:lex-gssm-synthesis} shows an algorithm for synthesising LexGSSM maps.
This is a straightforward extension of the algorithm in \cite{AgrawalPOPL2018}.

\begin{figure}[tbp]
	\begin{algorithmic}
		\Require A pCFG $F = \{ f_l \colon \mathbb{R}^n \to \Giry (L \times \mathbb{R}^n) \}_{l \in L}$ with an invariant map $I = \{ I_l \}_{l \in L}$ and a Streett pair map $\{ (A_l, B_l) \}_{l \in L}$
		\Ensure A LexGSSM map, or ``not found''
		\Function{SynthesiseLexGSSM}{$F, I, \{ (A_l, B_l) \}_{l \in L}$}
		\State $S \gets L$
		\State $T \gets L$
		\State $m \gets 0$
		\Repeat
		\State $c_0 \gets \{ I_l \Rightarrow R_l \ge 0 \mid l \in L \}$
		\State $c_0 \gets c_0 \cup \{ I_l \land \lnot A_l \land \lnot B_l \Rightarrow R_l \ge \nexttime (\{ R_l \}_{l \in L}) \mid l \in S \}$
		\State $c_0 \gets c_0 \cup \{ I_l \land A_l \land \lnot B_l \Rightarrow R_l \ge \nexttime (\{ R_l \}_{l \in L}) \mid l \in T \}$
		\State $c_1 \gets \{ I_l \land \lnot A_l \land \lnot B_l \Rightarrow R_l \ge 1 + \nexttime (\{ R_l \}_{l \in L}) \mid l \in S \}$
		\State $c_1 \gets c_1 \cup \{ I_l \land A_l \land \lnot B_l \Rightarrow R_l \ge 1 + \nexttime (\{ R_l \}_{l \in L}) \mid l \in T \}$
		\State $\{ r_{l, m} \}_{l \in L} \gets \Call{Solve}{c_0, c_1}$
		\State $S' \gets \{ l \in S \mid\quad \models I_l \land \lnot A_l \land \lnot B_l \Rightarrow r_{l, m} \ge 1 + \nexttime (\{ r_{l, m} \}_{l \in L}) \}$
		\State $S \gets S \setminus S'$
		\State $T' \gets \{ l \in T \mid\quad \models I_l \land A_l \land \lnot B_l \Rightarrow r_{l, m} \ge 1 + \nexttime (\{ r_{l, m} \}_{l \in L}) \}$
		\State $T \gets T \setminus T'$
		\State $m \gets m + 1$
		\Until{$T = \emptyset$ or $S' \cup T' = \emptyset$}
		\If{$T \neq \emptyset$}
			\State \Return{not found}
		\EndIf
		\State \Return{$\{ r_{l, j} \mid l \in L, j \in \OneToN{m} \}$}
		\EndFunction
	\end{algorithmic}
	\caption{Algorithm for synthesizing a LexGSSM map with invariants.}
	\label{alg:lex-gssm-synthesis}
\end{figure}

\paragraph{Computational complexity.}
Each component requires solving constraints of size $O(|L|)$, as in the case of synthesising SSMs/GSSMs.
Hence, the additional cost is proportional to the number of components in the lexicographic functions, which is upper bounded by $O(|L|)$, i.e., the size of $T$.

\end{toappendix}

\section{Implementation and Experiments}
\label{sec:implementation}

\subsection{Polynomial Templates}
We implement the synthesis algorithm for LexPMSMs described in Section~\ref{sec:algorithm} using polynomial templates.
That is, we instantiate \textsc{Solve} in Fig.~\ref{alg:progress-measure-synthesis} so that it searches for a solution from polynomials over program variables with unknown coefficients where the degree of the polynomials is specified in a configuration file.
Assuming that expressions in input programs (pCFGs) are polynomials, the constraints $c_0$ and $c_1$ in Line~\ref{alg:line:c0},~\ref{alg:line:c1} can be expressed as \emph{polynomial quantified entailments} (PQEs) \cite{ChatterjeeATVA2025}, which are constraints of the form
$\forall \tilde{x} \in \mathbb{R}^n, \bigwedge_{i = 1}^m (p_i(\tilde{t}, \tilde{x}) \bowtie 0) \implies p(\tilde{t}, \tilde{x}) \bowtie 0$
where $p_1, \dots, p_m, p$ are polynomials, $\tilde{t}$ are unknown parameters to be solved, and $\bowtie \in \{ \ge, > \}$.
We use the technique in Remark~\ref{rem:solve-implementation} to handle constraints in $c_1$.
Then, we solve the resulting PQEs using an off-the-shelf solver \textsc{PolyQEnt} \cite{ChatterjeeATVA2025}.
\textsc{PolyQEnt} uses positivity theorems for polynomials, such as Farkas' lemma, Handelman's theorem, and Putinar's theorem, to efficiently solve PQEs.
These positivity theorems have completeness guarantees under certain conditions as discussed in \cite{ChatterjeeATVA2025}.
We also implement synthesis algorithms for LexGSSMs, GSSMs, and SSMs in a similar manner.

In theory, our supermartingales can reason about continuous distributions.
However, our current implementation supports only discrete distributions with finite supports, as all benchmarks we consider can be represented using such distributions.
This restriction is not inherent to our algorithm: supporting continuous distributions is straightforward, because integrals of polynomial templates over continuous distributions can be computed symbolically provided that $k$-th moments of the distributions are known for each $k$.
It is often desirable to synthesise supermartingales together with their supporting invariants.
We note that the invariant-synthesis approach in \cite{AbateCAV2024} could be incorporated into our algorithm for LexPMSMs by taking the conjunction of invariants obtained during the iteration of the synthesis process.
A naive algorithm for invariant synthesis is presented in \referappendix{Appendix}{E.1}{sec:invariant-synthesis}.
As a final remark, our implementation takes \emph{quantitative fixed-point equation systems} as the input format, which are systems of fixed-point equations over $[0, 1]$-valued functions and serve as an alternative representation of pCFGs with priority partitions.
These two representations are known to be equivalent \cite{CirsteaCALCO2017}.

\subsection{Experiments}
We evaluate our implementation $\textsc{MuVal}^{\textsc{QFL}}$ (\url{https://github.com/hiroshi-unno/coar}) to demonstrate the effectiveness of our novel supermartingale-based certificates and to compare their verification power.
Benchmarks are drawn from existing literature \cite{AbateCAV2024} and our examples (Example~\ref{ex:generalised-streett-supermartingale}\footnote{We modified Example~\ref{ex:generalised-streett-supermartingale} as in \referappendix{Example}{F.1}{ex:gssm-benchmark}, where the desired probability distribution is generated by a loop.},~\ref{ex:no-streett-supermartingale}, and~\ref{ex:lexgssm}).
Compared to \cite{AbateCAV2024}, $\textsc{MuVal}^{\textsc{QFL}}$ currently does not support invariant and parameter synthesis.
Thus, we manually provide invariants and omit benchmarks that require parameter synthesis.
Table~\ref{tab:experimental-results} shows our experimental results.
Our experiments were conducted on Apple M3 Pro (12-core CPU) with 18 GB of memory.
$\textsc{MuVal}^{\textsc{QFL}}$ successfully synthesised LexPMSMs and LexGSSMs for all benchmarks, including Example~\ref{ex:generalised-streett-supermartingale},~\ref{ex:no-streett-supermartingale}, and~\ref{ex:lexgssm}.
In contrast, GSSMs and SSMs could not be synthesised for some benchmarks, as expected from the theoretical results.
We observed that LexPMSMs often require fewer lexicographic components than LexGSSMs, which leads to faster synthesis times.
This observation is consistent with the following theoretical result: LexGSSMs can be translated to LexPMSMs without increasing the number of components (Proposition~\ref{prop:lexgssm-to-lexpmsm}), while the converse translation (Proposition~\ref{prop:lexpmsm-to-lexgssm}) increases the number of components.
Although these translations are not necessarily optimal in terms of the number of components, they suggest that LexPMSMs require no more components than LexGSSMs in general.

\begin{table}[tb]
	\caption{Experimental results.
	The degree of polynomial templates was set to $1$.
	Entries in parentheses ``(..)'' indicate that the synthesis algorithm returned ``not found'', while others indicate that the algorithm returned a solution.}
	\label{tab:experimental-results}
	\small
	\begin{tabular}{c|cccc}
		& \multicolumn{4}{c}{time (sec)} \\
		benchmark & LexPMSM & LexGSSM & GSSM & SSM \\
		\hline
		\texttt{ex\_3\_8} & 0.845 & 0.793 & 0.384 & (0.367) \\
		\texttt{ex\_3\_9} & 0.532 & 0.553 & 0.473 & (0.434) \\
		\texttt{ex\_3\_9\_rw} & 0.807 & 0.864 & 0.633 & (0.690) \\
		\texttt{ex\_4\_11} & 1.554 & 1.596 & (0.465) & (0.502) \\
		\texttt{EvenOrNegative} & 0.666 & 1.067 & 0.351 & 0.366 \\
		\texttt{PersistRW} & 0.425 & 0.442 & 0.220 & 0.213 \\
		\texttt{RecurRW} & 0.534 & 0.511 & 0.295 & 0.288 \\
		\texttt{GuaranteeRW} & 6.168 & 9.098 & 3.568 & 4.721 \\
		\texttt{Temperature2} & 12.832 & 18.550 & 5.750 & 6.213
	\end{tabular}
\end{table}

\begin{toappendix}
\subsection{Benchmarks}
\begin{example}\label{ex:gssm-benchmark}
	The Markov chain in Example~\ref{ex:generalised-streett-supermartingale} and~\ref{ex:lexgssm} can be represented by the following program.
	\[ \while{\mathtt{true}}{\ifstmt{x \ge 1}{\assign{x}{x - 1}}{(\assign{x}{2}; \while{\mathtt{prob}(1/2)}{\assign{x}{2 * x}})}} \]
	\begin{center}
		\begin{tikzpicture}
			\node[initial, state] (s0) {$l_0$};
			\node[state, right=10em of s0] (s1) {$l_1$};
			\draw[->] (s0) edge[bend left=10] node[above, align=center] {$[\lnot (x \ge 1)]$ \\ $\assign{x}{2}$} (s1);
			\draw[->] (s1) edge[bend left=10] node[below, align=center] {$1 / 2$} (s0);
			\draw[->] (s1) edge[loop right] node[right, align=center] {$1 / 2$\\$\assign{x}{2 * x}$} (s1);
			\draw[->] (s0) edge[loop above] node[left, align=center] {$[x \ge 1]$\\$\assign{x}{x - 1}$} (s0);
		\end{tikzpicture}
	\end{center}
	The Streett condition is given as follows.
	\[ (A, B) \quad=\quad (\{ (l_0, x) \mid x \ge 1 \},\quad \{ (l_0, x) \mid x = 0 \} \cup \{ (l_1, x) \mid x \in \mathbb{N} \}) \]

	For experiments, we slightly modify the above example and assume that the variable $x$ takes values in $\mathbb{R}$.
	We relate $\mathbb{R}$-valued variables to $\mathbb{N}$-valued variables using the floor function $\lfloor \cdot \rfloor \colon \mathbb{R} \to \mathbb{N}$.
	Now, the Streett condition is as follows.
	\[ (A, B) \quad=\quad (\{ (l_0, x) \mid \lfloor x \rfloor \ge 1 \},\quad \{ (l_0, x) \mid \lfloor x \rfloor = 0 \} \cup \{ (l_1, x) \mid x \in \mathbb{R} \}) \]
	This Streett pair corresponds to the following parity condition.
	\begin{align}
		p^{-1}(2) &= \{ (l_0, x) \mid \lfloor x \rfloor = 0 \} \cup \{ (l_1, x) \mid x \in \mathbb{R} \} \\
		p^{-1}(3) &= \{ (l_0, x) \mid \lfloor x \rfloor \ge 1 \} \\
		p^{-1}(4) &= \{ (l_0, x) \mid \lfloor x \rfloor < 0 \}
	\end{align}
	We have the following invariant.
	\[ I \quad=\quad \{ (l, x) \mid l \in \{ l_0, l_1 \}, x \ge 0 \} \]
	Then, we obtain the following fixed-point equation system.
	\begin{align}
		X_{l_0, 2}(x) &=_{\nu} \ifexpr{x \ge 1}{X_{l_0}(x - 1)}{X_{l_1}(2)} \\
		X_{l_1, 2}(x) &=_{\nu} \frac{1}{2} X_{l_1}(2 x) + \frac{1}{2} X_{l_0}(x) \\
		X_{l_0, 3}(x) &=_{\mu} \ifexpr{x \ge 1}{X_{l_0}(x - 1)}{X_{l_1}(2)} \\
		X_{l_0}(x) &=_{\nu} \ifexpr{x \ge 1}{X_{l_0, 3}(x)}{\ifexpr{x \ge 0}{X_{l_0, 2}(x)}{Y()}} \\
		X_{l_1}(x) &=_{\nu} \ifexpr{x \ge 0}{X_{l_1, 2}(x)}{Y()} \\
		Y() &=_{\nu} Y()
	\end{align}
	We may replace each occurrence of $X_{l_0}(x)$ in the right-hand side of the equations with $\ifexpr{x \ge 1}{X_{l_0, 3}(x)}{\ifexpr{x \ge 0}{X_{l_0, 2}(x)}{Y()}}$ and remove the equation for $X_{l_0}(x)$ since this does not change the solution.
	This equation system can be understood as modifying the pCFG so that each location has a single priority.
	We introduce a new location $(l_0, 2)$ and $(l_0, 3)$.
	We then copy all outgoing transitions of $l_0$ to both $(l_0, 2)$ and $(l_0, 3)$ and remove outgoing transitions from $l_0$.
	Finally, we add conditional branching from $l_0$ to $(l_0, 2)$ and $(l_0, 3)$ according to the priority partition.
	\begin{center}
		\begin{tikzpicture}
			\node[state] (l0old) at (-5, 0) {$l_0$};
			\draw[->] (-6, 0) -- (l0old);
			\draw[->] (l0old) -- (-4, 0);
			\node at (-2.5, 0) {$\mapsto$};
			\node[state] (l0) at (0, 0) {$l_0$};
			\node[state] (l02) at (2.5, 0.5) {$l_0, 2$};
			\node[state] (l03) at (2.5, -0.5) {$l_0, 3$};
			\draw[->] (-1, 0) -- (l0);
			\draw[->] (l0) -- node[midway, above] {$[x = 0]$} (l02);
			\draw[->] (l0) -- node[midway, below] {$[x > 0]$} (l03);
			\draw[->] (l02) -- (3.5, 0.5);
			\draw[->] (l03) -- (3.5, -0.5);
		\end{tikzpicture}
	\end{center}

	As for the Streett pair $(A, B) = (\{ (l_0, x) \mid \lfloor x \rfloor \ge 2 \}, \{ (l_0, x) \mid \lfloor x \rfloor = 1 \})$, we obtain the following equation system.
	\begin{align}
		X_{l_0, 2}(x) &=_{\nu} \ifexpr{x \ge 1}{X_{l_0}(x - 1)}{X_{l_1}(2)} \\
		X_{l_0, 3}(x) &=_{\mu} \ifexpr{x \ge 1}{X_{l_0}(x - 1)}{X_{l_1}(2)} \\
		X_{l_0, 4}(x) &=_{\nu} \ifexpr{x \ge 1}{X_{l_0}(x - 1)}{X_{l_1}(2)} \\
		X_{l_1, 4}(x) &=_{\nu} \frac{1}{2} X_{l_1}(2 x) + \frac{1}{2} X_{l_0}(x) \\
		X_{l_0}(x) &=_{\nu} \ifexpr{x \ge 2}{X_{l_0, 3}(x)}{\ifexpr{x \ge 1}{X_{l_0, 2}(x)}{\ifexpr{x \ge 0}{X_{l_0, 4}(x)}{Y()}}} \\
		X_{l_1}(x) &=_{\nu} \ifexpr{x \ge 0}{X_{l_1, 4}(x)}{Y()} \\
		Y() &=_{\nu} Y()
	\end{align}
\end{example}
\begin{example}
	The Streett Markov chain in Example~\ref{ex:no-streett-supermartingale} corresponds to the following fixed-point equation system.
	\begin{align}
		X_{l_0}(n) &=_{\nu} X_{l_1}(n, n) \\
		X_{l_1}(m, n) &=_{\mu} \ifexpr{m > 0}{X_{l_1}(m - 1, n)}{X_{l_0}(n + 1)}
	\end{align}
	For experiments, we modify the example so that the variables $m$ and $n$ take values in $\mathbb{R}$.
	We have the following invariant.
	\[ I \quad=\quad \{ (l, m, n) \mid l \in \{ l_0, l_1 \}, m > -1, n \ge 0 \} \]
	Taking the invariant into account, we obtain the following fixed-point equation system.
	\begin{align}
		X_{l_0}(m, n) &=_{\nu} \ifexpr{m > -1 \land n \ge 0}{X_{l_1}(n, n)}{Y()} \\
		X_{l_1}(m, n) &=_{\mu} \ifexpr{m > -1 \land n \ge 0}{\ifexpr{m > 0}{X_{l_1}(m - 1, n)}{X_{l_0}(m, n + 1)}}{Y()} \\
		Y() &=_{\nu} Y()
	\end{align}
\end{example}
\end{toappendix}

\section{Related Work}

\paragraph{Supermartingales for $\omega$-regular properties}
The most relevant to our work is Streett supermartingales \cite{AbateCAV2024} and limit-deterministic B\"uchi supermartingales \cite{HenzingerCAV2025} for verifying almost sure satisfaction of $\omega$-regular properties.
These two notions are, to the best of our knowledge, the only known supermartingale-based certificates for general $\omega$-regular properties, but as we have shown in Example~\ref{ex:no-streett-supermartingale}, their power is not very strong compared to what we proposed in this paper.
Earlier work \cite{ChakarovTACAS2016} had proposed certificates for recurrence properties, but these can be regarded as special cases of the Streett supermartingale.
Quantitative verification for $\omega$-regular properties are also studied in \cite{AbateCAV2025,HenzingerCAV2025} using the idea of combining stochastic invariants \cite{ChatterjeePOPL2017} with supermartingales for \emph{qualitative} verification of $\omega$-regular properties.
In \cite{AbateCAV2025}, stochastic invariants are witnessed by stochastic invariant indicator \cite{TakisakaTOPLAS2021,ChatterjeeCAV2022}, and Streett acceptance conditions are witnessed by Streett supermartingale \cite{AbateCAV2024}.
On the other hand, in \cite{HenzingerCAV2025}, stochastic invariants are witnessed by repulsing supermartingales \cite{ChatterjeePOPL2017}, and B\"uchi acceptance conditions are witnessed by the method in \cite{ChakarovTACAS2016}.
Note that the latter work considers products of MDPs and limit-deterministic B\"uchi automata, which necessarily introduce nondeterminism even when the original system is a Markov chain.
While the introduction of (demonic) nondeterminism does not affect the soundness of our supermartingale-based certificates, it complicates the completeness arguments. Therefore, in this paper, we restrict our attention to the setting without nondeterminism.
The completeness of ranking supermartingales in the presence of nondeterminism has been studied in~\cite{FuVMCAI2019}; however, extending our completeness results to this setting is left for future work.

\paragraph{Lexicographic ranking supermartingales}
Our development in Section~\ref{sec:lexgssm} and Section~\ref{sec:progress-measure} is inspired by lexicographic ranking supermartingales \cite{AgrawalPOPL2018} for almost-sure termination.
In fact, our definition of LexGSSMs and LexPMSMs can be seen as a generalisation of lexicographic ranking supermartingales.
On the other hand, it is known that the strong non-negativity condition used in \cite{AgrawalPOPL2018} sometimes limits the applicability of synthesis algorithms, and thus relaxations of non-negativity are studied in \cite{ChatterjeeFAC2023,TakisakaCAV2024}.
We believe that our definitions of LexGSSMs and LexPMSMs can also be relaxed similarly, but we leave this for future work.

\paragraph{Parity progress measures for probabilistic systems}
Markov chains can be seen as a special case of \emph{stochastic games} where there is no nondeterministic player.
A reduction from qualitative stochastic parity games (i.e., stochastic games with parity objectives and almost-sure winning criterion) to ordinary parity games is given in \cite{ChatterjeeCSL2003}.
In their proof, a probabilistic extension of parity progress measures \cite{JurdzinskiSTACS2000} is used.
However, their extension is different from our LexPMSMs and relies on finiteness of the state space of stochastic games, whereas our LexPMSMs are applicable to Markov chains with infinite state spaces.

\paragraph{Non-probabilistic $\omega$-regular verification}
B\"uchi ranking functions \cite{ChatterjeeFM2025} are certificates for verifying $\omega$-regular properties of nondeterministic programs.
The idea is to use a ranking function that ensures infinitely many visits to the accepting states of a B\"uchi condition and is conceptually close to the idea of ranking supermartingales for $\omega$-regular verification.
A sound and complete method for such properties was previously proposed in \cite{UnnoPOPL2023}, following directly from a corresponding result for branching-time temporal properties expressed in the modal $\mu$-calculus, a logic strictly more expressive than $\omega$-regular properties, where this result for the modal $\mu$-calculus is established by reducing modal $\mu$-calculus model checking to $\mu$CLP, a fixed-point equation system, and further to pfwCSP, a predicate constraint solving problem extending CHCs as proposed in \cite{UnnoCAV2021}.
The quantitative fixed-point equation systems studied in this paper can be viewed as a quantitative extension of the quantifier-free fragment of $\mu$CLP.

\section{Conclusions and Future Work}
We proposed several new supermartingale-based certificates for qualitative verification of $\omega$-regular properties of Markov chains.
These certificates include GSSMs, LexGSSMs, DVSSMs, PMSMs, and LexPMSMs; and they can verify larger classes of problems than the existing approach of Streett supermartingales.
We studied their hierarchy as summarised in Fig.~\ref{fig:hierarchy-of-supermartingales}.
We implemented synthesis algorithms and demonstrated their effectiveness through experiments.
Future work includes extending our approach to Markov decision processes or stochastic games that involve angelic nondeterminism, and applying such extensions to solve a first-order extension of quantitative fixed-point logics with nondeterministic choice operations \cite{McIverTOCL2007}.

\section*{Data Availability Statement}
$\textsc{MuVal}^{\textsc{QFL}}$ source code can be found at \url{https://github.com/hiroshi-unno/coar}. We also provide an artifact \cite{artifact} to reproduce the evaluation in Section~\ref{sec:implementation}, including a snapshot of $\textsc{MuVal}^{\textsc{QFL}}$, docker images, and benchmarks.

\begin{acks}
We would like to thank the anonymous reviewers for their helpful comments and suggestions.
We would also like to thank Takeshi Tsukada for his insightful comments.
This study was supported by JST K Program Grant Number JPMJKP24U2; JSPS KAKENHI Grant Number JP23K24820, JP25H00446, JP25K21183, JP25K24739.
\end{acks}

\bibliographystyle{ACM-Reference-Format}
\bibliography{ref,arxiv}

@misc{arxiv,
  title         = {A Hierarchy of Supermartingales for $\omega$-Regular Verification},
  author        = {Satoshi Kura and Hiroshi Unno},
  year          = {2025},
  eprint        = {2512.00270},
  archiveprefix = {arXiv},
  primaryclass  = {cs.LO},
  url           = {https://arxiv.org/abs/2512.00270}
}

@software{artifact,
  author    = {Kura, Satoshi and
               Unno, Hiroshi},
  title     = {A Hierarchy of Supermartingales for
               $\omega$-Regular Verification -- Artifact
               },
  month     = apr,
  year      = 2026,
  publisher = {Zenodo},
  doi       = {10.5281/zenodo.19396030},
  url       = {https://doi.org/10.5281/zenodo.19396030}
}

@inproceedings{AbateCAV2024,
  title = {Stochastic {{Omega-Regular Verification}} and {{Control}} with {{Supermartingales}}},
  booktitle = {Computer {{Aided Verification}}},
  author = {Abate, Alessandro and Giacobbe, Mirco and Roy, Diptarko},
  year = 2024,
  series = {Lecture {{Notes}} in {{Computer Science}}},
  volume = {14683},
  pages = {395--419},
  publisher = {Springer Nature Switzerland},
  address = {Cham},
  doi = {10.1007/978-3-031-65633-0_18},
  url = {https://link.springer.com/10.1007/978-3-031-65633-0_18},
  urldate = {2025-03-06},
  isbn = {978-3-031-65632-3 978-3-031-65633-0},
  langid = {english}
}

@inproceedings{AbateCAV2025,
  title = {Quantitative Supermartingale Certificates},
  booktitle = {Computer Aided Verification},
  author = {Abate, Alessandro and Giacobbe, Mirco and Roy, Diptarko},
  year = 2025,
  pages = {3--28},
  publisher = {Springer Nature Switzerland},
  address = {Cham},
  doi = {10.1007/978-3-031-98679-6_1},
  isbn = {978-3-031-98679-6}
}

@article{AgrawalPOPL2018,
  title = {Lexicographic Ranking Supermartingales: An Efficient Approach to Termination of Probabilistic Programs},
  shorttitle = {Lexicographic Ranking Supermartingales},
  author = {Agrawal, Sheshansh and Chatterjee, Krishnendu and Novotn{\'y}, Petr},
  year = 2018,
  month = jan,
  journal = {Proceedings of the ACM on Programming Languages},
  volume = {2},
  number = {POPL},
  pages = {1--32},
  issn = {2475-1421, 2475-1421},
  doi = {10.1145/3158122},
  url = {http://dl.acm.org/doi/10.1145/3158122},
  urldate = {2020-02-25},
  langid = {english}
}

@book{Baier2008,
  title = {Principles of Model Checking},
  author = {Baier, Christel and Katoen, Joost-Pieter},
  year = 2008,
  publisher = {MIT Press},
  address = {Cambridge, Mass.},
  isbn = {978-0-262-02649-9},
  langid = {english}
}

@inproceedings{ChakarovCAV2013,
  title = {Probabilistic {{Program Analysis}} with {{Martingales}}},
  booktitle = {Computer {{Aided Verification}}},
  author = {Chakarov, Aleksandar and Sankaranarayanan, Sriram},
  year = 2013,
  series = {Lecture {{Notes}} in {{Computer Science}}},
  volume = {8044},
  pages = {511--526},
  publisher = {Springer Berlin Heidelberg},
  address = {Saint Petersburg, Russia},
  doi = {10.1007/978-3-642-39799-8_34},
  url = {http://link.springer.com/10.1007/978-3-642-39799-8_34},
  urldate = {2020-03-02},
  isbn = {978-3-642-39798-1 978-3-642-39799-8}
}

@inproceedings{ChakarovTACAS2016,
  title = {Deductive {{Proofs}} of {{Almost Sure Persistence}} and {{Recurrence Properties}}},
  booktitle = {Tools and {{Algorithms}} for the {{Construction}} and {{Analysis}} of {{Systems}}},
  author = {Chakarov, Aleksandar and Voronin, Yuen-Lam and Sankaranarayanan, Sriram},
  year = 2016,
  series = {Lecture {{Notes}} in {{Computer Science}}},
  volume = {9636},
  pages = {260--279},
  publisher = {Springer Berlin Heidelberg},
  address = {Berlin, Heidelberg},
  doi = {10.1007/978-3-662-49674-9_15},
  url = {http://link.springer.com/10.1007/978-3-662-49674-9_15},
  urldate = {2025-07-05},
  isbn = {978-3-662-49673-2 978-3-662-49674-9},
  langid = {english}
}

@inproceedings{ChatterjeeATVA2025,
  title = {{{PolyQEnt}}: {{A Polynomial Quantified Entailment Solver}}},
  shorttitle = {{{PolyQEnt}}},
  booktitle = {Automated {{Technology}} for {{Verification}} and {{Analysis}}},
  author = {Chatterjee, Krishnendu and Goharshady, Amir Kafshdar and Goharshady, Ehsan Kafshdar and Karrabi, Mehrdad and Saadat, Milad and Seeliger, Maximilian and {\v Z}ikeli{\'c}, {\DJ}or{\dj}e},
  year = 2025,
  series = {Lecture {{Notes}} in {{Computer Science}}},
  volume = {16145},
  pages = {411--424},
  publisher = {Springer Nature Switzerland},
  address = {Cham},
  doi = {10.1007/978-3-032-08707-2_19},
  url = {https://link.springer.com/10.1007/978-3-032-08707-2_19},
  urldate = {2025-11-12},
  isbn = {978-3-032-08706-5 978-3-032-08707-2},
  langid = {english}
}

@inproceedings{ChatterjeeCAV2016,
  title = {Termination {{Analysis}} of {{Probabilistic Programs Through Positivstellensatz}}'s},
  booktitle = {Computer {{Aided Verification}}},
  author = {Chatterjee, Krishnendu and Fu, Hongfei and Goharshady, Amir Kafshdar},
  year = 2016,
  series = {Lecture {{Notes}} in {{Computer Science}}},
  volume = {9779},
  pages = {3--22},
  publisher = {Springer International Publishing},
  address = {Cham},
  doi = {10.1007/978-3-319-41528-4_1},
  url = {http://link.springer.com/10.1007/978-3-319-41528-4_1},
  urldate = {2020-03-02},
  isbn = {978-3-319-41527-7 978-3-319-41528-4},
  langid = {english}
}

@inproceedings{ChatterjeeCAV2022,
  title = {Sound and {{Complete Certificates}} for {{Quantitative Termination Analysis}} of {{Probabilistic Programs}}},
  booktitle = {Computer {{Aided Verification}}},
  author = {Chatterjee, Krishnendu and Goharshady, Amir Kafshdar and Meggendorfer, Tobias and {\v Z}ikeli{\'c}, {\DJ}or{\dj}e},
  year = 2022,
  series = {Lecture {{Notes}} in {{Computer Science}}},
  volume = {13371},
  pages = {55--78},
  publisher = {Springer International Publishing},
  address = {Cham},
  doi = {10.1007/978-3-031-13185-1_4},
  url = {https://link.springer.com/10.1007/978-3-031-13185-1_4},
  urldate = {2022-08-08},
  isbn = {978-3-031-13184-4 978-3-031-13185-1},
  langid = {english}
}

@inproceedings{ChatterjeeCSL2003,
  title = {Simple Stochastic Parity Games},
  booktitle = {Computer Science Logic},
  author = {Chatterjee, Krishnendu and Jurdzi{\'n}ski, Marcin and Henzinger, Thomas A.},
  year = 2003,
  series = {Lecture {{Notes}} in {{Computer Science}}},
  volume = {2803},
  pages = {100--113},
  publisher = {Springer Berlin Heidelberg},
  doi = {10.1007/978-3-540-45220-1_11},
  isbn = {978-3-540-45220-1}
}

@article{ChatterjeeFAC2023,
  title = {On {{Lexicographic Proof Rules}} for {{Probabilistic Termination}}},
  author = {Chatterjee, Krishnendu and Kafshdar Goharshady, Ehsan and Novotn{\'y}, Petr and Z{\'a}rev{\'u}cky, Ji{\v r}{\'i} and {\v Z}ikeli{\'c}, {\DJ}or{\dj}e},
  year = 2023,
  month = jun,
  journal = {Formal Aspects of Computing},
  volume = {35},
  number = {2},
  pages = {1--25},
  issn = {0934-5043, 1433-299X},
  doi = {10.1145/3585391},
  url = {https://dl.acm.org/doi/10.1145/3585391},
  urldate = {2025-11-09},
  langid = {english}
}

@inproceedings{ChatterjeeFM2025,
  title = {Sound and Complete Witnesses for Template-Based Verification of {{LTL}} Properties on Polynomial Programs},
  booktitle = {Formal Methods},
  author = {Chatterjee, Krishnendu and Goharshady, Amir and Goharshady, Ehsan and Karrabi, Mehrdad and {\v Z}ikeli{\'c}, {\DJ}or{\dj}e},
  year = 2025,
  pages = {600--619},
  publisher = {Springer Nature Switzerland},
  address = {Cham},
  doi = {10.1007/978-3-031-71162-6_31},
  isbn = {978-3-031-71162-6}
}

@inproceedings{ChatterjeePOPL2017,
  title = {Stochastic Invariants for Probabilistic Termination},
  booktitle = {Proceedings of the 44th {{ACM SIGPLAN Symposium}} on {{Principles}} of {{Programming Languages}} - {{POPL}} 2017},
  author = {Chatterjee, Krishnendu and Novotn{\'y}, Petr and {\v Z}ikeli{\'c}, {\DH}or{\dj}e},
  year = 2017,
  pages = {145--160},
  publisher = {ACM Press},
  address = {Paris, France},
  doi = {10.1145/3009837.3009873},
  url = {http://dl.acm.org/citation.cfm?doid=3009837.3009873},
  urldate = {2020-03-14},
  isbn = {978-1-4503-4660-3},
  langid = {english}
}

@inproceedings{CirsteaCALCO2017,
  title = {Parity Automata for Quantitative Linear Time Logics},
  booktitle = {7th Conference on Algebra and Coalgebra in Computer Science ({{CALCO}} 2017)},
  author = {Cirstea, Corina and Shimizu, Shunsuke and Hasuo, Ichiro},
  year = 2017,
  series = {Leibniz International Proceedings in Informatics (Lipics)},
  volume = {72},
  pages = {7:1--7:18},
  publisher = {Schloss Dagstuhl -- Leibniz-Zentrum f\"ur Informatik},
  address = {Dagstuhl, Germany},
  issn = {1868-8969},
  doi = {10.4230/LIPIcs.CALCO.2017.7},
  url = {https://drops.dagstuhl.de/entities/document/10.4230/LIPIcs.CALCO.2017.7},
  isbn = {978-3-95977-033-0},
  urn = {urn:nbn:de:0030-drops-80468}
}

@inproceedings{CookPOPL2007,
  title = {Proving That Programs Eventually Do Something Good},
  booktitle = {Proceedings of the 34th Annual {{ACM SIGPLAN-SIGACT}} Symposium on {{Principles}} of Programming Languages},
  author = {Cook, Byron and Gotsman, Alexey and Podelski, Andreas and Rybalchenko, Andrey and Vardi, Moshe Y.},
  year = 2007,
  month = jan,
  pages = {265--276},
  publisher = {ACM},
  address = {Nice France},
  doi = {10.1145/1190216.1190257},
  url = {https://dl.acm.org/doi/10.1145/1190216.1190257},
  urldate = {2026-03-24},
  isbn = {978-1-59593-575-5},
  langid = {english}
}

@inproceedings{FuVMCAI2019,
  title = {Termination of {{Nondeterministic Probabilistic Programs}}},
  booktitle = {Verification, {{Model Checking}}, and {{Abstract Interpretation}}},
  author = {Fu, Hongfei and Chatterjee, Krishnendu},
  year = 2019,
  series = {Lecture {{Notes}} in {{Computer Science}}},
  volume = {11388},
  pages = {468--490},
  publisher = {Springer International Publishing},
  address = {Cham},
  doi = {10.1007/978-3-030-11245-5_22},
  url = {http://link.springer.com/10.1007/978-3-030-11245-5_22},
  urldate = {2020-03-15},
  isbn = {978-3-030-11244-8 978-3-030-11245-5},
  langid = {english}
}

@book{Gradel2002,
  title = {Automata {{Logics}}, and {{Infinite Games}}: {{A Guide}} to {{Current Research}}},
  shorttitle = {Automata {{Logics}}, and {{Infinite Games}}},
  author = {Gr{\"a}del, Erich and Thomas, Wolfgang and Wilke, Thomas},
  year = 2002,
  series = {Lecture {{Notes}} in {{Computer Science}}},
  volume = {2500},
  publisher = {Springer Berlin Heidelberg},
  address = {Berlin, Heidelberg},
  doi = {10.1007/3-540-36387-4},
  url = {http://link.springer.com/10.1007/3-540-36387-4},
  urldate = {2025-02-10},
  copyright = {http://www.springer.com/tdm},
  isbn = {978-3-540-00388-5 978-3-540-36387-3},
  langid = {english}
}

@inproceedings{HenzingerCAV2025,
  title = {Supermartingale Certificates for Quantitative Omega-Regular Verification and Control},
  booktitle = {Computer Aided Verification},
  author = {Henzinger, Thomas A. and Mallik, Kaushik and Sadeghi, Pouya and {\v Z}ikeli{\'c}, {\DJ}or{\dj}e},
  year = 2025,
  pages = {29--55},
  publisher = {Springer Nature Switzerland},
  address = {Cham},
  doi = {10.1007/978-3-031-98679-6_2},
  isbn = {978-3-031-98679-6}
}

@article{HuangOOPSLA2019,
  title = {Modular Verification for Almost-Sure Termination of Probabilistic Programs},
  author = {Huang, Mingzhang and Fu, Hongfei and Chatterjee, Krishnendu and Goharshady, Amir Kafshdar},
  year = 2019,
  month = oct,
  journal = {Proceedings of the ACM on Programming Languages},
  volume = {3},
  number = {OOPSLA},
  pages = {1--29},
  issn = {24751421},
  doi = {10.1145/3360555},
  url = {http://dl.acm.org/citation.cfm?doid=3366395.3360555},
  urldate = {2020-03-04},
  langid = {english}
}

@inproceedings{JurdzinskiSTACS2000,
  title = {Small {{Progress Measures}} for {{Solving Parity Games}}},
  booktitle = {{{STACS}} 2000},
  author = {Jurdzi{\'n}ski, Marcin},
  year = 2000,
  series = {Lecture {{Notes}} in {{Computer Science}}},
  volume = {1770},
  pages = {290--301},
  publisher = {Springer Berlin Heidelberg},
  address = {Berlin, Heidelberg},
  doi = {10.1007/3-540-46541-3_24},
  url = {http://link.springer.com/10.1007/3-540-46541-3_24},
  urldate = {2020-04-08},
  isbn = {978-3-540-67141-1 978-3-540-46541-6}
}

@inproceedings{Kenyon-RobertsLICS2021,
  title = {Supermartingales, {{Ranking Functions}} and {{Probabilistic Lambda Calculus}}},
  booktitle = {2021 36th {{Annual ACM}}/{{IEEE Symposium}} on {{Logic}} in {{Computer Science}} ({{LICS}})},
  author = {{Kenyon-Roberts}, Andrew and Ong, C.-H. Luke},
  year = 2021,
  month = jun,
  pages = {1--13},
  publisher = {IEEE},
  address = {Rome, Italy},
  doi = {10.1109/LICS52264.2021.9470550},
  url = {https://ieeexplore.ieee.org/document/9470550/},
  urldate = {2023-09-19},
  isbn = {978-1-6654-4895-6}
}

@inproceedings{KuraTACAS2019,
  title = {Tail Probabilities for Randomized Program Runtimes via Martingales for Higher Moments},
  booktitle = {Tools and {{Algorithms}} for the {{Construction}} and {{Analysis}} of {{Systems}}},
  author = {Kura, Satoshi and Urabe, Natsuki and Hasuo, Ichiro},
  year = 2019,
  series = {Lecture {{Notes}} in {{Computer Science}}},
  volume = {11428},
  pages = {135--153},
  publisher = {Springer},
  address = {Prague, Czech Republic},
  doi = {10.1007/978-3-030-17465-1_8},
  url = {https://doi.org/10.1007/978-3-030-17465-1_8}
}

@article{McIverPOPL2018,
  title = {A New Proof Rule for Almost-Sure Termination},
  author = {McIver, Annabelle and Morgan, Carroll and Kaminski, Benjamin Lucien and Katoen, Joost-Pieter},
  year = 2018,
  month = jan,
  journal = {Proceedings of the ACM on Programming Languages},
  volume = {2},
  number = {POPL},
  pages = {1--28},
  issn = {2475-1421, 2475-1421},
  doi = {10.1145/3158121},
  url = {http://dl.acm.org/doi/10.1145/3158121},
  urldate = {2020-03-14},
  langid = {english}
}

@article{McIverTOCL2007,
  title = {Results on the Quantitative {$\mu$}-Calculus {{qM$\mu$}}},
  author = {McIver, Annabelle and Morgan, Carroll},
  year = 2007,
  month = jan,
  journal = {ACM Transactions on Computational Logic},
  volume = {8},
  number = {1},
  pages = {3},
  publisher = {Association for Computing Machinery (ACM)},
  issn = {1529-3785, 1557-945X},
  doi = {10.1145/1182613.1182616},
  url = {https://dl.acm.org/doi/10.1145/1182613.1182616},
  urldate = {2025-07-24},
  langid = {english}
}

@book{Meyn1996,
  title = {Markov {{Chains}} and {{Stochastic Stability}}},
  author = {Meyn, Sean P.},
  year = 1996,
  series = {Communications and {{Control Engineering Ser}}},
  publisher = {Springer London, Limited},
  address = {London},
  collaborator = {Tweedie, Richard L.},
  isbn = {978-1-4471-3267-7},
  langid = {english}
}

@inproceedings{TakisakaCAV2024,
  title = {Lexicographic {{Ranking Supermartingales}} with {{Lazy Lower Bounds}}},
  booktitle = {Computer {{Aided Verification}}},
  author = {Takisaka, Toru and Zhang, Libo and Wang, Changjiang and Liu, Jiamou},
  year = 2024,
  series = {Lecture {{Notes}} in {{Computer Science}}},
  volume = {14683},
  pages = {420--442},
  publisher = {Springer Nature Switzerland},
  address = {Cham},
  doi = {10.1007/978-3-031-65633-0_19},
  url = {https://link.springer.com/10.1007/978-3-031-65633-0_19},
  urldate = {2024-11-29},
  isbn = {978-3-031-65632-3 978-3-031-65633-0},
  langid = {english}
}

@article{TakisakaTOPLAS2021,
  title = {Ranking and {{Repulsing Supermartingales}} for {{Reachability}} in {{Randomized Programs}}},
  author = {Takisaka, Toru and Oyabu, Yuichiro and Urabe, Natsuki and Hasuo, Ichiro},
  year = 2021,
  month = jun,
  journal = {ACM Transactions on Programming Languages and Systems},
  volume = {43},
  number = {2},
  pages = {1--46},
  issn = {0164-0925, 1558-4593},
  doi = {10.1145/3450967},
  url = {https://dl.acm.org/doi/10.1145/3450967},
  urldate = {2025-04-23},
  langid = {english}
}

@article{UnnoPOPL2023,
  title = {Modular {{Primal-Dual Fixpoint Logic Solving}} for {{Temporal Verification}}},
  author = {Unno, Hiroshi and Terauchi, Tachio and Gu, Yu and Koskinen, Eric},
  year = 2023,
  month = jan,
  journal = {Proceedings of the ACM on Programming Languages},
  volume = {7},
  number = {POPL},
  pages = {2111--2140},
  issn = {2475-1421},
  doi = {10.1145/3571265},
  url = {https://dl.acm.org/doi/10.1145/3571265},
  urldate = {2023-02-23},
  langid = {english}
}

@inproceedings{UrabeLICS2017,
  title = {Categorical Liveness Checking by Corecursive Algebras},
  booktitle = {2017 32nd {{Annual ACM}}/{{IEEE Symposium}} on {{Logic}} in {{Computer Science}} ({{LICS}})},
  author = {Urabe, Natsuki and Hara, Masaki and Hasuo, Ichiro},
  year = 2017,
  month = jun,
  pages = {1--12},
  publisher = {IEEE},
  address = {Reykjavik, Iceland},
  doi = {10.1109/LICS.2017.8005151},
  url = {http://ieeexplore.ieee.org/document/8005151/},
  urldate = {2020-03-27},
  isbn = {978-1-5090-3018-7}
}

@inproceedings{VardiSFCS1985,
  title = {Automatic Verification of Probabilistic Concurrent Finite State Programs},
  booktitle = {26th {{Annual Symposium}} on {{Foundations}} of {{Computer Science}} (Sfcs 1985)},
  author = {Vardi, Moshe Y.},
  year = 1985,
  pages = {327--338},
  publisher = {IEEE},
  address = {Portland, OR, USA},
  doi = {10.1109/SFCS.1985.12},
  url = {http://ieeexplore.ieee.org/document/4568158/},
  urldate = {2026-03-24},
  isbn = {978-0-8186-0644-1}
}

@inproceedings{WangPLDI2019,
  title = {Cost Analysis of Nondeterministic Probabilistic Programs},
  booktitle = {Proceedings of the 40th {{ACM SIGPLAN Conference}} on {{Programming Language Design}} and {{Implementation}}},
  author = {Wang, Peixin and Fu, Hongfei and Goharshady, Amir Kafshdar and Chatterjee, Krishnendu and Qin, Xudong and Shi, Wenjun},
  year = 2019,
  month = jun,
  pages = {204--220},
  publisher = {ACM},
  address = {Phoenix AZ USA},
  doi = {10.1145/3314221.3314581},
  url = {https://dl.acm.org/doi/10.1145/3314221.3314581},
  urldate = {2025-10-23},
  isbn = {978-1-4503-6712-7},
  langid = {english}
}

@InProceedings{UnnoCAV2021,
  title = {Constraint-Based Relational Verification},
  booktitle = {Computer Aided Verification},
  author = {Hiroshi Unno and Tachio Terauchi and Eric Koskinen},
  year = {2021},
  pages = {742--766},
  publisher = {Springer Berlin Heidelberg},
  address = {Cham},
  doi = {10.1007/978-3-030-81685-8_35},
  isbn = {978-3-030-81685-8}
}

\end{document}